\tikzset{font={\fontsize{9pt}{12}\selectfont}}
\tikzset{>=latex}
\DeclareMathOperator{\wt}{wt}
\newtheorem{theorem}{Theorem$\!$}
\newtheorem{lemma}{Lemma$\!$}
\newtheorem{claim}{Claim$\!$}
\newtheorem{corollary}{Corollary$\!$}
\newtheorem{proposition}{Proposition$\!$}
\theoremstyle{definition}
\newtheorem{construction}{Construction$\!$}
\newenvironment{example}
  {\pushQED{\qed}\renewcommand{\qedsymbol}{$\triangle$}\examplex}
  {\popQED\endexamplex}
\newcommand{\bcomment}[1]{{\leavevmode\color{blue}#1}}
\newcommand{\mA}{\mathsf{A}}
\newcommand{\mC}{\mathsf{C}}
\newcommand{\mG}{\mathsf{G}}
\newcommand{\mT}{\mathsf{T}}
\newcommand{\cB}{\mathcal{B}}
\newcommand{\cC}{\mathcal{C}}
\newcommand{\cD}{\mathcal{D}}
\newcommand{\cE}{\mathcal{E}}
\newcommand{\cH}{\mathcal{H}}
\newcommand{\cM}{\mathcal{M}}
\newcommand{\cP}{\mathcal{P}}
\newcommand{\cR}{\mathcal{R}}
\newcommand{\cS}{\mathcal{S}}
\newcommand{\mybold}[1]{\bm{#1}}
\newcommand{\bb}{{\mybold{b}}}
\newcommand{\bh}{{\mybold{h}}}
\newcommand{\bp}{{\mybold{p}}}
\newcommand{\bs}{{\mybold{s}}}
\newcommand{\bu}{{\mybold{u}}}
\newcommand{\bv}{{\mybold{v}}}
\newcommand{\bw}{{\mybold{w}}}
\newcommand{\bx}{{\mybold{x}}}
\newcommand{\by}{{\mybold{y}}}
\newcommand{\bz}{{\mybold{z}}}
\newcommand{\balpha}	{\mybold{\alpha}}
\newcommand{\bpi}		{\mybold{\pi}}
\newcommand{\bphi}		{\mybold{\phi}}
\newcommand{\bpsi}		{\mybold{\psi}}
\DeclareMathOperator{\VT}{VT}
\DeclareMathOperator{\pr}{Pr}
\DeclareMathOperator{\prj}{Prj}
\newcommand{\pll}{\mathit{PLL}}
\newcommand{\plld}{\mathit{PLL}\text{-}}
\DeclareMathAlphabet{\mathcalligra}{T1}{calligra}{m}{n} \DeclareFontShape{T1}{calligra}{m}{n}{<->s*[2.2]callig15}{}
\newcommand{\sr}{\ensuremath{\mathcalligra{r}}}
\begin{document}

\title{Non-binary Codes for \\ Correcting a Burst of at Most $t$ Deletions}

\author{
  \IEEEauthorblockN{Shuche Wang\IEEEauthorrefmark{1}, Yuanyuan Tang\IEEEauthorrefmark{2}, Jin Sima\IEEEauthorrefmark{3}, Ryan Gabrys\IEEEauthorrefmark{4} and Farzad Farnoud\IEEEauthorrefmark{2}}\\
  \IEEEauthorblockA{\IEEEauthorrefmark{1}
                    Institute of Operations Research and Analytics, 
                    National University of Singapore, 
                    \texttt{shuche.wang@u.nus.edu}} \\
  \IEEEauthorblockA{\IEEEauthorrefmark{2}
                    Electrical \& Computer Engineering, 
                    University of Virginia, 
  \texttt{\{yt5tz,farzad\}@virginia.edu}}\\ 
  \IEEEauthorblockA{\IEEEauthorrefmark{3}%
                    Electrical \& Computer Engineering, 
                    University of Illinois Urbana-Champaign, \texttt{jsima@illinois.edu}}\\
    \IEEEauthorblockA{\IEEEauthorrefmark{4}
  Calit2, University of California-San Diego, U.S.A., \texttt{rgabrys@eng.ucsd.edu} }
                    
  \thanks{This work was supported in part by NSF grants under grant nos.~1816409 and~1755773. This paper was presented in part at the 2021 IEEE International Symposium on Information Theory (ISIT) in 2021~\cite{wang2021non} and 58th Allerton Conference on Communication, Control, and Computing~\cite{Wang2022permutation}.}        
}

\maketitle

\begin{abstract}
The problem of correcting deletions has received significant attention, partly because of the prevalence of these errors in DNA data storage. In this paper, we study the problem of correcting a consecutive burst of at most $t$ deletions in non-binary sequences. We first propose a non-binary code correcting a burst of at most 2 deletions for $q$-ary alphabets. Afterwards, we extend this result to the case where the length of the burst can be at most $t$ where $t$ is a constant. Finally, we consider the setup where the sequences that are transmitted are permutations. The proposed codes are the largest known for their respective parameter regimes.
\end{abstract}

%\tableofcontents

\section{Introduction}

Codes correcting insertions/deletions have garnered significant recent interest due to their relevance in many applications such as storage~\cite{chee2019burst,yazdi2015dna}, communication systems~\cite{dolecek2007using} and file synchronization~\cite{venkataramanan2010interactive}. Constructing codes in the insertion/deletion metric is a notoriously difficult problem whose origins date back to at least the 1960s~\cite{levenshtein1966binary}. One of the challenges under this setup is that deletions seem to be more destructive in nature than substitutions as only a relatively small number of insertions/deletions can cause the transmitted and received
sequences to be vastly different under the Hamming metric.

One of the motivations for the current work is the recent emergence of DNA-based storage systems \cite{yazdi2015dna}. Unlike traditional information systems whose dominant source of errors stems from substitutions, data stored in DNA is often corrupted by bursts of insertions and deletions~\cite{lee2019terminator}. Motivated by this connection, the current work focuses on the development of non-binary codes capable of correcting a consecutive burst of deletions. 

Previous works have studied the problem of constructing codes over binary alphabets, and optimal codes exist for many setups of interest. In what was perhaps the earliest work on the subject, Levenshtein constructed a code capable of correcting a burst of length at most two that had redundancy $\log n + 1$ \cite{levenshtein1967asymptotically}. 
%We will expand upon some of the tools developed to tackle this problem in Section~\ref{sec:bicor2} where we will construct a nearly optimal code capable of correcting a burst of at most $2$ deletions. 
In \cite{schoeny2017codes}, Schoeny et al. proposed burst deletion correcting codes for the setup where the length of the burst is exactly $t$ and the deletions are consecutive. In \cite{lenz2020optimal}, Lenz and Polyanskii presented codes that correct consecutive bursts of deletions of length at most $t$ that required only $\log n + O(\log \log n)$ bits of redundancy. The best-known systematic codes can be found in \cite{sima2020syndrome}. A summary of these results is included in Table~\ref{tab:binary}.

Unlike the previously mentioned works, the goal in this paper is to construct low redundancy \textit{non-binary} codes capable of correcting a burst of length \textit{at most} $t$ where $t$ is a constant. In this work, we also consider the setup where the non-binary sequences that comprise our code are permutations. Our main results, which  are highlighted in Table~\ref{tab:nonbinary}, are the following:
\begin{enumerate}
    \item We present a simpler proof for Levenshtein's binary code~\cite{levenshtein1967asymptotically}, based on the well-known Varshamov-Tenengol'ts constraint. We then show this proof can be used for constructing a code with redundancy at most $\log n+2\log q +1$ for correcting an induced burst of 2 deletions in alternating sequences\footnote{Induced deletions occur in sequences where every two adjacent symbols are different. This setup is motivated by the recently proposed terminator-free synthesis of DNA sequences~\cite{lee2019terminator}.}. 
    \item We construct non-binary codes for correcting a burst of at most $t$ deletions for $q$-ary alphabets that has redundancy $\log n+O(\log q\log\log n)$.
    \item Using ideas developed in the context of non-binary codes, we present a permutation code for correcting a burst of at most $t$ deletions that has redundancy $\log n+O(\log\log n)$.
\end{enumerate}
To the best of the authors' knowledge, the codes presented here are the largest known codes for each of the parameter regimes under consideration. We note that result (3), which appears in Section~\ref{sec:per}, was simultaneously and independently derived in \cite{sun2022improved}. Since our approach uses a different technique than the one from \cite{sun2022improved}, it may be of independent interest.

%We note that the results in Table~\ref{tab:nonbinary} listed for permutation codes were recently improved and  

\begin{table}[H]
\begin{minipage}{.5\linewidth}
\caption{Related works for binary codes}
\label{tab:binary}
\centering
\begin{tabular}{|l|l|l|}
\hline
            & Size of burst & Redundancy   \\ \hline
%Levenshtein, 1967~
\cite{levenshtein1967asymptotically}    & $\leq 2$     & $\log n+1$      \\ \hline
%Schoeny et al., 2017~
\cite{schoeny2017codes} & $=t$     & $\log n+O(\log\log n)$  \\ \hline
%Schoeny et al., 2017~
\cite{schoeny2017codes}  & $\leq t$     & $(t-1)\log n+O(\log\log n)$  \\ \hline
%Lenz et al., 2020~
\cite{lenz2020optimal}    & $\leq t$     & $\log n+O(\log\log n)$ \\ \hline
%Sima et al., 2020~
\cite{sima2020syndrome}    & $\leq t$     & $4\log n+o(\log n)$          \\ \hline
\end{tabular}
\end{minipage}%
    \begin{minipage}{.5\linewidth}
    \caption{Related works for non-binary (NB)/permutation codes}
\label{tab:nonbinary}
\begin{tabular}{|l|l|l|}
\hline
            & Size of burst & Redundancy   \\ \hline
%Schoeny et al., %2017~
\cite{schoeny2017novel} (NB)  & $=t$     & $\log n+O(\log\log n+\log q)$  \\ \hline
\cite{chee2019burst} (P)  & $=t$     & $2\log n$  \\ \hline
\cite{chee2019burst} (P)      & $\leq t$     & $2t\log n$  \\ \hline
\bf{This Work} (NB)     & $\leq 2$     & $\log n+O(\log q\log\log n)$  \\ \hline
\bf{This Work} (NB)      & $\leq t$     & $\log n+O(\log q\log\log n)$\\   \hline
\bf{This Work} (P)       & $\leq t$    & $\log n+O(\log\log n)$\\   \hline
\end{tabular}
\end{minipage}
\end{table}

The remainder of this article is outlined as follows. Section~\ref{sec:notation} presents the notations and two well-known deletion correcting codes used throughout this paper as well as some preliminary results. Section~\ref{sec:bicor2} gives an alternative proof of the Levenshtein code and a code for correcting deletions in alternating sequences is proposed based on this proof. In Section~\ref{sec:noncor2}, when $q$ is even, we construct a non-binary code for correcting a burst of at most 2 deletions for $q$-ary alphabets with redundancy $\log n+O(\log q\log\log n)$. In Section~\ref{sec:noncort}, when $q$ is even, we construct a non-binary code for correcting a burst of at most $t$ deletions for $q$-ary alphabets with redundancy $\log n+O(\log q\log\log n)$. Section~\ref{sec:per} proposes a permutation code for correcting a burst of at most $t$ deletions with redundancy $\log n+O(\log\log n)$. Finally, Section~\ref{sec:conclusion} concludes the paper.

\section{Notation and Preliminaries}\label{sec:notation}
We now describe the notations used throughout this paper. Let $\Sigma_q$ denote a finite alphabet of size $q$ and $\Sigma_q^n$ represent the set of all sequences of length $n$ over $\Sigma_q$. Without loss of generality, we assume $\Sigma_q=\{0,1,\dotsc,q-1\}$. For ease of notation, we will denote the set $\{0,1,\ldots, m-1\}$ as $[[m]]$ and the set $\{1,2,\ldots,m\}$ as $[m]$. For two integers $i<j$, let $[i,j]$ denote the set $\{i,i+1,i+2, \ldots, i+j\}$.

We write sequences with bold letters, such as $\bu$ and their elements with plain letters, e.g., $\bu=u_1\dotsm u_n$ for $\bu\in \Sigma_q^n$. For functions, if the output is a sequence, we also write them with bold letters, such as $\bphi(\bu)$. The $i$th position in $\bphi(\bu)$ is denoted $\phi(\bu)_i$. We typically use $\bu$ for non-binary and $\bx$ for binary sequences.  The length of the sequence $\bx$ is denoted $|\bx|$ and  $\bx_{[i,j]}$ denotes the substring beginning  at index $i$ and ending at index $j$, inclusive. A run is a maximal substring consisting of identical symbols. The weight $\wt(\bu)$ of a sequence $\bu$ represents the number of non-zero symbols in it. A sequence $\bu$ is said to have period 2 if $u_{i}=u_{i+2}$ for $i \in [|\bu|-2]$. Define $L(\bu,2)$ as the length of the longest substring of $\bu$ with period $2$. %The period 2 of sequence $\bx$ means the symbols in $\bx$ such that  

A \emph{burst of $t$ deletions} deletes  $t$ consecutive symbols from $\bu=u_1\dotsm u_n$ starting in position $i+1$ resulting in $\bu'= u_1\dotsm u_{i}u_{i+t+1}\dotsm u_n$ for $\bu\in\Sigma_q^n$ (we usually use $\bu'$ to denote the sequence resulting from deleting symbols from $\bu$). For shorthand, let $D_t(\bu) \subseteq \Sigma_q^{n-t}$ denote the set of all sequences possible given that $t$ deletions occur to $\bu$ and similarly let $D_{\leq t}(\bu) = D_1(\bu) \cup D_2(\bu) \cup \cdots \cup D_t(\bu)$. The size of a code $\cC\subseteq \Sigma_q^n$ is denoted $|\cC|$ and its redundancy is defined as $\log (q^n/|\cC|)$, where all logarithms in this paper are to the base 2. We say that a code $\cC \subseteq \Sigma_q^n$ is a \emph{$t$-burst-error-correcting code} if for two distinct $\bu, \bv \in \cC$, $D_{\leq t}(\bu) \cap D_{\leq t}(\bv)  = \emptyset$.

For the permutation code, let $n$ be a positive integer and $\cS_n$ be the set of all permutations on the set $[n]$. Denote $\bpi=(\pi_1,\pi_2,\dotsc,\pi_{n})\in \cS_n$ as a permutation code with length $n$. A burst of at most $t$ deletions deletes at most $t$ consecutive symbols from the permutation code $\bpi$, leading to $\bpi'=(\pi_1,\pi_2,\dotsc,\pi_i,\pi_{i+t'+1},\dotsc,\pi_n)$, where $t'\leq t$. The size of a permutation code $\bpi\in\cS_n$ is denoted $|\bpi|$ and its redundancy is defined as $\log (n!/|\bpi|)$.

The following codes will be of use in the paper. First for $\bw\in\Sigma_q^n$, define the VT syndrome as $\VT(\bw) = \sum_{i=1}^{n} i w_i$. Furthermore, define $\bphi:\Sigma_q^n\rightarrow\Sigma_2^n$ as
\begin{equation*}
    \phi(\bu)_i=\left\{\begin{array}{ll}
1, & \text { if } u_i>u_{i-1} \\
0, & \text { if } u_i \le u_{i-1}
\end{array}\right.
\end{equation*}
for $i\ge 2$ and $\phi(\bu)_1=1$.

\begin{theorem}[Varshamov-Tenengol'ts (VT) code \cite{sloane2000single}]
For integers $n$ and $a \in [[n+1]]$, 
\begin{equation}
    \VT_a(n)=\left\{ \bx\in\Sigma_2^n:\VT(\bx)\equiv a\bmod (n+1) \right\}
\end{equation}
is a $1$-burst-error-correcting code.
\end{theorem}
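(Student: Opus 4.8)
The statement is the classical fact that the $\VT$ code corrects a single deletion (by the paper's definition, since $D_{\leq 1}=D_1$), and I would prove it by exhibiting a decoder and showing it inverts any single deletion. Fix $a\in[[n+1]]$ and $\bx\in\VT_a(n)$, and suppose one symbol $x_p$ is deleted, producing $\bx'\in\Sigma_2^{n-1}$. The decoder sees only $\bx'$ and $a$; it forms the \emph{deficiency} $\delta\equiv a-\VT(\bx')\pmod{n+1}$. The first step is to observe that $\delta$ pins down the exact integer $\VT(\bx)-\VT(\bx')$: writing $\VT(\bx')=\sum_{i<p}i\,x_i+\sum_{i>p}(i-1)x_i$ gives $\VT(\bx)-\VT(\bx')=p\,x_p+\sum_{i>p}x_i$, and an easy bound shows this quantity lies in $\{0,1,\dots,n\}$ for every $p$ and every value of $x_p$, hence it is recovered exactly from $\delta$.

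Next I would split on the value of the deleted bit and read off its location. Let $w=\wt(\bx')$. If $x_p=0$, then $\delta=\sum_{i>p}x_i$ is exactly the number of $1$'s of $\bx$ to the right of the deleted coordinate, so $\delta\in\{0,1,\dots,w\}$. If $x_p=1$, then counting the $0$'s among $x_1\dots x_{p-1}$ and using that $\bx$ has $w+1$ ones yields $\delta=w+1+\ell$, where $\ell$ is the number of $0$'s of $\bx$ to the left of position $p$; hence $\delta\in\{w+1,\dots,n\}$. These two ranges partition $\{0,\dots,n\}$, so $\delta$ alone tells the decoder whether a $0$ or a $1$ was lost, and in the first case the number $\delta$ of $1$'s that must lie to its right, in the second case the number $\ell=\delta-w-1$ of $0$'s that must lie to its left.

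The decoder then reinserts accordingly: if $\delta\le w$ it inserts a $0$ immediately to the left of the $(w-\delta+1)$-st $1$ of $\bx'$ (at the end if $\delta=0$); if $\delta>w$ it inserts a $1$ immediately to the right of the $(\delta-w-1)$-st $0$ of $\bx'$ (at the front if $\delta=w+1$). To finish, I would argue this returns $\bx$ exactly: the deleted symbol of $\bx$ belongs to some maximal run, deleting it only shortens that run by one, so $\bx'$ determines where that run sits, and the reinsertion rule places the symbol back into that same run; since inserting a symbol anywhere within a run of equal symbols yields the same word, the output is $\bx$. Uniqueness of decoding — that $D_1(\bx)\cap D_1(\by)=\emptyset$ for distinct codewords — is then immediate, since the decoder is a well-defined function of $(\bx',a)$ whose output is a codeword consistent with $\bx'$.

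The routine-but-delicate part is the index bookkeeping in the second and third paragraphs: matching the counts of $0$'s and $1$'s on each side of position $p$ between $\bx$ and $\bx'$, and checking the boundary cases $\delta=0$ and $\delta=w+1$ (deletions inside the leading or trailing run). None of this is deep, but it is exactly where off-by-one errors can creep in, so I would carry it out with care.
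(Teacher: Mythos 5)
Your proof is correct and is exactly the standard decoder argument for the VT code: recover the integer $\VT(\bx)-\VT(\bx')=p\,x_p+\sum_{i>p}x_i$ from the syndrome deficiency, compare it with the received weight $w$ to decide whether a $0$ or a $1$ was deleted and how many $1$'s (resp.\ $0$'s) must flank it, and reinsert within the appropriate run. The paper does not prove this theorem itself --- it cites \cite{sloane2000single}, whose proof is essentially the one you give --- so there is nothing to contrast; your bookkeeping of the two ranges $\{0,\dots,w\}$ and $\{w+1,\dots,n\}$ and of the boundary cases is accurate.
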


\begin{theorem}[Tenengol'ts code \cite{tenengolts1984nonbinary}]
For integers $n$, $a \in [[n]]$, and $b \in [[q]]$, the code
\begin{equation}
    \cC_T(a,b,n)= \left \{
    \bu\in\Sigma_q^n:\VT(\bphi(\bu))%\sum\nolimits_{i=1}^{n}i\phi(\bu)_i
    \equiv a\bmod n, \quad \sum\nolimits_{i=1}^{n}u_i\equiv b\bmod q \right\}
\end{equation}
is a $1$-burst-error-correcting code.
\end{theorem}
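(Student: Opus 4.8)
The plan is to show that from any $\bu' \in D_1(\bu)$ together with the two syndromes $\VT(\bphi(\bu)) \bmod n$ and $\sum_i u_i \bmod q$, one can uniquely recover $\bu$; this suffices because a burst of a single deletion is simply one deletion, so $D_{\leq 1}(\bu) = D_1(\bu)$, and the code condition $D_{\leq 1}(\bu) \cap D_{\leq 1}(\bv) = \emptyset$ for distinct codewords follows once decoding is unambiguous. First I would observe how a single deletion in $\bu$ affects the binary image $\bphi(\bu)$. Deleting $u_j$ changes the local comparisons $\phi(\bu)_j$ and $\phi(\bu)_{j+1}$ into a single bit governed by the comparison of $u_{j-1}$ and $u_{j+1}$; away from position $j$ the pattern of comparisons is untouched. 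Hence $\bphi(\bu')$ is obtained from $\bphi(\bu)$ by a single deletion \emph{together with a possible flip of the bit now sitting at the deletion site}. This is the key structural step and I expect it to be the main obstacle, because it means $\bphi(\bu')$ is not simply an element of $D_1(\bphi(\bu))$; one must argue that the VT syndrome still pins down the deletion location despite this extra bit flip.

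The second step is to analyze the VT redundancy under this "deletion-plus-possible-flip" operation on the binary string. I would recall the standard VT decoding argument: given $\bphi(\bu')$ of length $n-1$ and the target syndrome $a \bmod n$, the deficiency in the syndrome determines, via the count of $0$'s and $1$'s to the left and right of the deletion site, where the deleted bit lies and what its value was — in the classical VT analysis this works for a plain deletion. Here I would need to check that the same arithmetic still identifies the run in which the deletion occurred even when one of the two boundary bits of that run may have been flipped; the point is that a flip at the deletion site, combined with re-inserting the correct bit, changes the syndrome in a controlled way that the modulus $n$ can still disambiguate. Concretely, I would track the three cases for the pair $(\phi(\bu)_j, \phi(\bu)_{j+1})$ — namely $00$, $11$, and the mixed cases $01$ or $10$ — and show that in each case the VT syndrome of the binary image identifies the affected run up to the usual ambiguity of which run-endpoint is involved.

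The third step uses the symbol-sum syndrome $\sum_i u_i \equiv b \bmod q$ to finish. Once the binary decoding has localized the deletion to a particular run of $\bphi(\bu)$ — equivalently, to a monotone (non-increasing or increasing) segment of $\bu$ — the residual ambiguity is which symbol within that segment was deleted and what its value was. Because within such a segment the symbols are monotone, the candidate reconstructions differ only in a single inserted symbol whose position within the segment is constrained, and the missing symbol's value is then determined modulo $q$ by the sum condition; since a symbol lies in $\{0,\dots,q-1\}$, knowing its value mod $q$ pins it down exactly, and monotonicity within the segment pins down where it goes. I would close by assembling these observations into the statement that the map $\bu \mapsto (\bu', \VT(\bphi(\bu)) \bmod n, \sum u_i \bmod q)$ is injective on $\cC_T(a,b,n)$, hence $D_{\leq 1}$-balls around distinct codewords are disjoint, which is exactly the claim that $\cC_T(a,b,n)$ is a $1$-burst-error-correcting code.
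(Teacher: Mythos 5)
The paper does not actually prove this statement; it is quoted directly from Tenengol'ts' 1984 paper with a citation, so your outline can only be judged on its own terms. Your three-step architecture (analyze the effect of the deletion on the binary image $\bphi(\bu)$, decode that image using the VT syndrome, then use the mod-$q$ checksum to pin down the deleted symbol) is the standard and correct one, and your step 3 is essentially right: a run of $1$s (resp.\ $0$s) in $\bphi(\bu)$ corresponds to an increasing (resp.\ non-increasing) segment of $\bu$, so once the value of the missing symbol is known from the checksum, every admissible insertion point inside that segment produces the same sequence.

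However, your pivotal structural claim in step 1 is wrong, and it is exactly the point on which the whole proof turns. You assert that deleting $u_j$ turns $\bphi(\bu)$ into $\bphi(\bu')$ via ``a single deletion together with a possible flip of the bit now sitting at the deletion site,'' and you defer to step 2 the task of showing that VT decoding survives this flip. If that were really the error model, the argument would collapse: a VT code corrects one deletion, not one deletion plus an adjacent substitution, and the modulus cannot in general disambiguate the combined effect. The correct observation --- and the reason Tenengol'ts' construction works at all --- is that no flip ever occurs. Writing $x=\phi(\bu)_j$, $y=\phi(\bu)_{j+1}$, and $z$ for the new comparison bit between $u_{j+1}$ and $u_{j-1}$, transitivity of the order forces $z=x=y$ whenever $x=y$ (so $11\to 1$ and $00\to 0$), and when $x\neq y$ the single bit $z$ necessarily equals one of $x$ or $y$, so replacing the pair $xy$ by $z$ is in every case the deletion of exactly one of the two bits. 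Hence $\bphi(\bu')\in D_1(\bphi(\bu))$ and the classical VT decoding argument applies verbatim, with no extra arithmetic to check. You list the three cases for the pair but never draw this conclusion, so as written the central step of your proof is an unresolved (and, under your stated error model, unresolvable) obstacle rather than a lemma.
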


%Next, we consider the maximum size of a $t$-burst-error-correcting code. 

Let $\cM_t(n) \subseteq \Sigma_q^n$ be a $t$-burst-error-correcting code of maximum cardinality. Theorem~\ref{thm:nonasymptotic} provides a non-asymptotic upper bound on the size of $\cM_t(n)$ using the linear programming technique from \cite{schoeny2017codes}. A detailed proof is included in Appendix~A.

\begin{theorem}\label{thm:nonasymptotic}
For $n > t$ and $t | n$, we have
\begin{equation}\label{eq:nonasym_bound}
    \left| \cM_t(n) \right|\leq \frac{q^{n-t+1}-q^t}{(q-1)(n-2t+1)}
\end{equation}
\end{theorem}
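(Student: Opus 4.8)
The plan is to use the fractional‑covering (linear‑programming) argument of Schoeny et al. Since a $t$-burst‑error‑correcting code has the sets $D_{\le t}(\bu)$ pairwise disjoint, a fortiori the sets $D_t(\bu)\subseteq\Sigma_q^{n-t}$ are pairwise disjoint over $\bu\in\cM_t(n)$. Hence, for any weight function $\psi:\Sigma_q^{n-t}\to\bbR_{\ge 0}$ satisfying $\sum_{\by\in D_t(\bu)}\psi(\by)\ge 1$ for every $\bu\in\Sigma_q^n$, a double count gives $|\cM_t(n)|\le\sum_{\bu\in\cM_t(n)}\sum_{\by\in D_t(\bu)}\psi(\by)=\sum_{\by}\psi(\by)\,|\{\bu\in\cM_t(n):\by\in D_t(\bu)\}|\le\sum_{\by}\psi(\by)$. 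So it suffices to exhibit one feasible $\psi$ whose total weight equals the right‑hand side of \eqref{eq:nonasym_bound}; note $n>t$ and $t\mid n$ force $n\ge 2t$, so $n-2t+1\ge 1$.

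For $\bw\in\Sigma_q^{\ell}$ put $\delta(\bw)=|\{m\in[\ell-t]:w_m\ne w_{m+t}\}|$, the number of period-$t$ defects of $\bw$. I would first establish two structural facts. (A) $|D_t(\bu)|=1+\delta(\bu)$ for all $\bu\in\Sigma_q^n$: writing $\bu^{(i)}$ for the word obtained by deleting $u_{i+1}\cdots u_{i+t}$, one checks that for $i<j$ one has $\bu^{(i)}=\bu^{(j)}$ if and only if $u_m=u_{m+t}$ for every $m\in\{i+1,\dots,j\}$, so the trajectory $\bu^{(0)},\dots,\bu^{(n-t)}$ leaves each value along a contiguous block of indices and never returns; hence it realizes $1$ plus the number of $i$ with $\bu^{(i)}\ne\bu^{(i+1)}$, and $\bu^{(i)}\ne\bu^{(i+1)}$ holds exactly when $u_{i+1}\ne u_{i+t+1}$, giving $1+\delta(\bu)$. (B) $\delta(\by)\le\delta(\bu)$ for every $\by\in D_t(\bu)$: using $\bu^{(i)}_p=u_p$ for $p\le i$ and $\bu^{(i)}_p=u_{p+t}$ for $p>i$, split the comparisons $m\in[n-2t]$ defining $\delta(\bu^{(i)})$ according to whether $m+t\le i$, $m\le i<m+t$, or $i<m$; the first and third families are period-$t$ comparisons of $\bu$ at indices lying outside $\{i-t+1,\dots,i+t\}$, while each comparison in the straddling family compares $u_m$ with $u_{m+2t}$ and hence can fail only if $u_m\ne u_{m+t}$ or $u_{m+t}\ne u_{m+2t}$, i.e.\ only if one of the at most $2t$ period-$t$ comparisons of $\bu$ at indices in $\{i-t+1,\dots,i+t\}$ fails; adding these estimates (with care at the ends of the word) yields $\delta(\by)\le\delta(\bu)$.

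Now take $\psi(\by)=\frac{1}{1+\delta(\by)}$. Feasibility is immediate from (A) and (B): since $|D_t(\bu)|=1+\delta(\bu)$ and each $\by\in D_t(\bu)$ has $\delta(\by)\le\delta(\bu)$, we get $\sum_{\by\in D_t(\bu)}\frac1{1+\delta(\by)}\ge(1+\delta(\bu))\cdot\frac1{1+\delta(\bu)}=1$. It remains to evaluate $\sum_{\by\in\Sigma_q^{n-t}}\frac1{1+\delta(\by)}$. Using $t\mid n$, a word of length $n-t$ splits into $t$ period-$t$ threads each of length $n/t-1$, and counting runs inside threads gives $|\{\by\in\Sigma_q^{n-t}:\delta(\by)=D\}|=q^t(q-1)^D\binom{n-2t}{D}$ for $0\le D\le n-2t$. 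Hence $\sum_{\by}\psi(\by)=q^t\sum_{D=0}^{n-2t}\frac{(q-1)^D}{D+1}\binom{n-2t}{D}$, and applying $\frac1{D+1}\binom{M}{D}=\frac1{M+1}\binom{M+1}{D+1}$ with $M=n-2t$ followed by the binomial theorem collapses this to $\frac{q^t(q^{n-2t+1}-1)}{(q-1)(n-2t+1)}=\frac{q^{n-t+1}-q^t}{(q-1)(n-2t+1)}$, as claimed.

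The defect‑class count and the binomial manipulation are routine, and fact (A) is a short direct argument. I expect the main obstacle to be fact (B): one must track exactly which period-$t$ comparisons of $\by=\bu^{(i)}$ descend from period-$t$ comparisons of $\bu$ and which descend from period-$2t$ comparisons straddling the deleted block, handle the boundary index ranges correctly, and check that the straddling contributions can never exceed what the "lost" period-$t$ comparisons of $\bu$ could have contributed; this bookkeeping is where the hypothesis $t\mid n$ (and the resulting clean thread decomposition) is genuinely convenient.
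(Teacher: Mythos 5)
Your proposal is correct and follows essentially the same route as the paper: the fractional-covering bound with weight $1/|D_t(\by)|$, the monotonicity $|D_t(\by)|\le |D_t(\bu)|$ for $\by\in D_t(\bu)$, the count of words of length $n-t$ by ball size, and the same binomial collapse to $\frac{q^{n-t+1}-q^t}{(q-1)(n-2t+1)}$. Your ``defect'' count is the paper's thread/run decomposition in disguise (the paper reaches $q^t(q-1)^{i-1}\binom{n-2t}{i-1}$ via a Vandermonde convolution over the $t$ threads, you choose the $D$ defect positions directly), so the two arguments are the same in substance.
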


From Table~\ref{tab:binary} and \ref{tab:nonbinary}, we can see that the redundancies of our constructions are only off from the minimum redundancy by a factor of at most roughly $\log q\log\log n$. %\textcolor{red}{NOTE: We should be careful here because the statement of the proof requires that $t |n$.}

Motivated by applications to storage systems with larger alphabets, we will also be interested in codes over permutations. Let $\cS_n$ be the set of all permutations on the set $[n]$ and $\cM_t^P(n) \subseteq \cS_n$ be a $t$-burst-error-correcting code of maximum cardinality. Theorem~\ref{thm:maxsize_permutation} provides a upper bound on the size of $\cM_t^P(n)$. 

\begin{theorem}\label{thm:maxsize_permutation}
(\cite[Theorem 1]{chee2019burst}) Let $n>t$ be positive integers. Then, 
\begin{equation*}
     \left| \cM_t^P(n) \right|\leq \frac{n!}{t!(n-t+1)}.
\end{equation*}
\end{theorem}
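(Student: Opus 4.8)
The plan is to adapt the averaging/linear-programming argument used for Theorem~\ref{thm:nonasymptotic} to the permutation setting, exploiting the fact that permutations are much more rigid than arbitrary $q$-ary strings. The starting observation is that a burst of exactly $t$ deletions applied to a permutation $\bpi\in\cS_n$ yields a string $\bpi'$ of length $n-t$ over $[n]$ with $t$ ``missing'' values; crucially, since $\bpi$ is a permutation, the $t$ deleted symbols form a set (not a multiset) and their \emph{positions} are consecutive. So to recover $\bpi$ from $\bpi'$ one only needs to know (i) which window of $t$ consecutive positions was deleted and (ii) the order in which the missing values are reinserted. This means $|D_t(\bpi)|$ is related to the number of distinct length-$(n-t)$ subsequences obtained by a single consecutive window deletion, which is at most $n-t+1$ (one per starting position), and conversely each $\bpi'\in D_t(\bpi)$ can be ``lifted'' back to at most $t!$ permutations.

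First I would set up the bipartite incidence structure between codewords $\bpi\in\cM_t^P(n)$ and their length-$(n-t)$ descendants in $\Sigma := \bigcup_{\bpi\in\cS_n} D_t(\bpi)$. For the $t$-burst-error-correcting property we need $D_{\le t}(\bpi)\cap D_{\le t}(\bpi')=\emptyset$ for distinct codewords; restricting attention to exactly-$t$ deletions, the sets $D_t(\bpi)$ for $\bpi\in\cM_t^P(n)$ are pairwise disjoint. Hence
\begin{equation*}
  \sum_{\bpi\in\cM_t^P(n)} |D_t(\bpi)| \;=\; \Bigl|\bigcup_{\bpi\in\cM_t^P(n)} D_t(\bpi)\Bigr| \;\le\; |\Sigma|.
\end{equation*}
The two quantities I then need to control are a lower bound on $|D_t(\bpi)|$ (to make the left side large) and an upper bound on $|\Sigma|$. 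For the latter, every element of $\Sigma$ is an injective word of length $n-t$ over $[n]$ using $n-t$ distinct symbols, but far fewer actually arise: a cleaner route is to count, for each $\bpi'\in\Sigma$, how many permutations $\bpi\in\cS_n$ have $\bpi'\in D_t(\bpi)$ — namely at most $t!\,$ (choose the interleaving of the $t$ omitted values into the unique deleted window), times the number of admissible windows. Combining a uniform lower bound $|D_t(\bpi)|\ge n-t+1$ with the count $|\Sigma|\le \frac{n!}{(n-t)!}\big/\text{(overcount)}$ and optimizing gives the stated bound $\frac{n!}{t!\,(n-t+1)}$; alternatively one can phrase the whole thing as: each codeword contributes $\ge (n-t+1)$ distinct descendants, each descendant is hit by a bounded number of permutations, and the total number of permutations is $n!$, which after the book-keeping yields $|\mathcal{M}_t^P(n)|(n-t+1) t! \le n!$.

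The main obstacle I anticipate is proving the clean lower bound $|D_t(\bpi)| \ge n-t+1$ for \emph{every} permutation $\bpi$, i.e. showing that the $n-t+1$ different consecutive-window deletions of length $t$ always produce $n-t+1$ \emph{distinct} strings. For binary or small-alphabet strings this can fail (runs collapse many windows to the same output), which is exactly why Theorem~\ref{thm:nonasymptotic} has the extra $(q-1)$ and $n-2t+1$ factors; but a permutation has no repeated symbols, so deleting window $[i+1,i+t]$ versus $[j+1,j+t]$ with $i<j$ changes which of the two ``boundary'' symbols survive, and a short argument comparing the surviving symbol at position $i+1$ should show the outputs differ. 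Once that distinctness lemma is in hand, the rest is the double-counting bookkeeping sketched above, and care must be taken that the $t!$ reinsertion orders are counted without over- or under-counting when the deleted window abuts the ends of $\bpi$. I would also double-check the edge cases $t=n$ excluded by the hypothesis $n>t$ and confirm the argument does not secretly need $t\mid n$ as in Theorem~\ref{thm:nonasymptotic}.
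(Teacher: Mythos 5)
The paper does not actually prove this bound---it is quoted from \cite[Theorem 1]{chee2019burst}---so your proposal can only be measured against the standard argument, which it essentially reconstructs, and the reconstruction is sound. The two ingredients you identify are the right ones and both go through. First, $|D_t(\bpi)|=n-t+1$ exactly for every $\bpi\in\cS_n$: if the windows of length $t$ starting at positions $s_1<s_2\le n-t+1$ are deleted, the two outputs differ at position $s_1$ (which exists in a length-$(n-t)$ string since $s_1\le n-t$), where the first output carries $\pi_{s_1+t}$ and the second carries $\pi_{s_1}$; these are distinct because $\bpi$ has no repeated symbols, so your anticipated ``distinctness lemma'' is a two-line check. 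Second, your lifting bookkeeping is more elaborate than necessary: the union $\Sigma=\bigcup_{\bpi\in\cS_n}D_t(\bpi)$ is precisely the set of injective words of length $n-t$ over $[n]$ (any such word is recovered by appending its $t$ missing values and deleting them), so $|\Sigma|=n!/t!$ directly, and disjointness of the balls $D_t(\bpi)$ over codewords immediately gives $|\cM_t^P(n)|(n-t+1)\le n!/t!$ without counting how many permutations hit each descendant. Two slips in your write-up are worth flagging, though neither damages the final inequality $|\cM_t^P(n)|(n-t+1)\,t!\le n!$: the displayed quantity $n!/(n-t)!$ has the falling factorial inverted (it counts injective words of length $t$, not $n-t$; the correct raw count is $n!/t!$), and the first mention of ``at most $t!$ liftings'' of a descendant omits the factor $n-t+1$ for the window position (you do restore it later). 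Compared with the linear-programming machinery behind Theorem~\ref{thm:nonasymptotic}, your route is genuinely simpler precisely because $|D_t(\bpi)|$ is constant on $\cS_n$, so the fractional weighting $w(\bu')=1/|D_t(\bu')|$ degenerates into plain sphere packing, and no divisibility hypothesis like $t\mid n$ is needed.
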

From Table~\ref{tab:nonbinary}, notice that the redundancy of our construction for permutation codes is only off from the minimum redundancy by at most $O(\log\log n)$.

\section{Binary code correcting a burst of at most 2 deletions}\label{sec:bicor2}
In this section, we first describe the Levenshtein code \cite{levenshtein1967asymptotically} in Subsection~\ref{subsec:Leven}, which can correct a burst of at most 2 deletions in binary sequences. In Section~\ref{subsec:alterproof_leven}, we provide an alternative formulation of the Levenshtein code, and using this formulation, we prove the correctness of the construction by describing the decoding algorithm. Our proof is in some ways simpler than Levenshtein's original proof, and it is similar to the well-known proof of the VT code~\cite{sloane2000single}. It will also enable us to construct a code for correcting deletions in alternating sequences in Section~\ref{subsec:alternating}. 
%\subsection{Levenshtein code}

\subsection{Levenshtein binary codes for correcting at most 2 deletions}\label{subsec:Leven}

For a binary sequence $\bx$, let  $\sr(\bx)$ be the sequence whose $i$th element is the \emph{run index} of $x_i$ in $\bx$, where the indexing starts from 0. Then, define the syndrome $\VT^{(r)}({\bx})$ of the sequence $\sr(\bx)$ as
\begin{equation}\label{eq:M}
    \VT^{(r)}({\bx})=\sum\nolimits_{i=1}^{n} {\sr(\bx)}_{i}.
\end{equation}
For example, if ${\bx}=01110100$, then $\sr(\bx)=01112344$, and $\VT^{(r)}(\bx)=1\times3+2\times1+3\times1+4\times2=16$.

\begin{theorem}[Levenshtein code~\cite{levenshtein1967asymptotically}]
For integers $n$ and $a \in [[2n]]$,  
\begin{equation}
    \cC_L(a,n)=\left \{\bx \in \Sigma_2^n: \VT^{(r)}(0\bx) \equiv a\bmod 2n \right \}
\end{equation}
is a $2$-burst-error-correcting code.
\end{theorem}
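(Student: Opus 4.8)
The plan is to show that given any sequence $\bx' \in D_{\leq 2}(\bx)$ together with the value $a \equiv \VT^{(r)}(0\bx) \bmod 2n$, one can uniquely recover $\bx$, which establishes that distinct codewords have disjoint $D_{\leq 2}$-balls. First I would handle the trivial bookkeeping: from $|\bx'|$ we know whether zero, one, or two deletions occurred; if zero, we are done, and if a burst of one deletion occurred we can invoke the VT-type argument directly (a burst of a single deletion is just an ordinary deletion). So the substantive case is exactly two consecutive deletions, where $|\bx'| = n-2$.

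The key idea, mirroring the classical VT decoding proof, is to track how the run-index syndrome $\VT^{(r)}(0\bx)$ changes under a burst of two deletions and to use that change to locate the burst. Prefixing $0$ to $\bx$ is a normalization trick so that $x_1$'s run index behaves uniformly; I would work throughout with $\by = 0\bx$ of length $n+1$ and its deleted version. The main steps: (1) Compute $D = a - \VT^{(r)}(0\bx') \bmod 2n$, where $\bx'$ is padded with a leading $0$ as well; this deficiency $D$ encodes the position of the burst. (2) Observe that deleting two consecutive symbols from $\by$ removes their run indices from the sum and shifts the run indices of everything to the right by an amount equal to the net change in run count caused by the deletion — which is $0$, $1$, or $2$ depending on how the deleted block and its neighbors merge. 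I would enumerate these local merge patterns (the two deleted symbols are equal or unequal; the symbols flanking the gap are equal or unequal to each other and to the deleted block) and show that in each pattern the syndrome deficiency $D$ equals (run index of the left deletion point) $+$ (a correction term) $+$ (shift) $\times$ (number of runs strictly to the right of the burst in $\bx'$). (3) Show that this relation determines the run index $j$ at which the burst occurred, hence the run of $\bx'$ into which the two deleted symbols must be inserted, and that the alphabet being binary plus knowledge of $n$ pins down exactly which symbols and how the runs reattach — so the reconstruction is forced.

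The main obstacle I expect is the case analysis in step (2): a burst of two deletions can straddle a run boundary, can sit inside a single run, or can wipe out a short run entirely and cause the two neighboring runs to merge, and each case changes the run-count of the suffix differently and contributes a different constant. Getting a single clean formula $D = 2 \cdot (\text{number of runs after the burst in } 0\bx') + c$ with $c$ taking a small controlled set of values, and then checking that the residue modulo $2n$ still isolates the burst location uniquely (this is where the modulus $2n$, rather than $n+1$, is essential — it must be large enough to separate the at-most-$n$ possible run positions even after the doubling), is the delicate part. I would organize it by first proving the one-deletion sub-case as a warm-up (recovering Levenshtein's reduction to a VT-like argument), then bootstrapping: a burst of two deletions restricted to a single run is essentially a length-preserving-modulo adjustment, while a boundary-crossing burst is handled by noting that the ``interior'' structure of $\bx'$ between the insertion point constrains the multiplicities, and finally confirming that no two distinct $(\text{position}, \text{inserted block})$ pairs can yield the same $\bx'$ and the same syndrome.
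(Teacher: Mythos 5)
Your plan follows Levenshtein's original route: work directly with the run-index syndrome $\VT^{(r)}(0\bx)$ and analyse how a burst of one or two deletions perturbs it through a taxonomy of run merges. The paper does something different. It never proves the run-index statement head-on; it passes to the derivative sequence $\by=\bpsi(\bx)$, invokes the identity $\VT(\bpsi(\bx))\equiv -\VT^{(r)}(0\bx)\pmod{2n}$, and observes that a burst of at most two deletions in $\bx$ induces one of a short list of local patterns in $\by$ (a deleted prefix bit, $00\to 0$, $11\to 0$, $010\to 1$, $101\to 0$, and so on). Each pattern changes the ordinary VT syndrome of $\by$ by an amount expressible through the weight $w$ of $\by'$, the number of 1s to the right of the error, and the error position $p$; the cases are separated by comparing $\Delta$ with $w$ (resp.\ $2w$) and by parity, and within each case the position is pinned down by strict monotonicity of $\Delta$ in $p$. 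This buys a clean case analysis over a handful of two- and three-bit patterns with standard VT bookkeeping, in place of the run-merge bookkeeping your route requires.

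Two concrete problems with your sketch as written. First, the shift term in your deficiency formula is wrong: when the run count drops by $\delta\in\{0,1,2\}$, the run index of every \emph{symbol} to the right of the burst drops by $\delta$, so the syndrome deficiency gains $\delta$ times the number of \emph{positions} to the right of the burst, not $\delta$ times the number of \emph{runs} to the right. For $\bx=01110100$ with $x_5x_6$ deleted, the deficiency is $16-7=9=5+2\cdot 2$, where $2\cdot2$ counts the two trailing symbols; counting the single trailing run would give $7$. A decoder built on the runs-based formula does not match the syndrome and fails. Second, the step you defer --- that no two distinct pairs of burst position and inserted block yield the same $\bx'$ and the same residue modulo $2n$ --- is the entire content of the theorem; ``the reconstruction is forced'' must be replaced by an explicit monotonicity or parity argument in each merge case, exactly as the paper supplies for its pattern list. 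The strategy is sound (it is, after all, Levenshtein's own), but until the corrected formula and the disambiguation argument are written out case by case, this is a plan rather than a proof.
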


Using a simple averaging argument, it is straightforward to observe that $|\cC_L(a,n)| \geq \frac{2^{n-1}}{n}$ and so the redundancy of this code is at most $1 + \log n$ (for some $a \in [[2n]])$. For shorthand, we refer to the code $\cC_L(a,n)$ as the Levenshtein code.

\subsection{An alternative formulation of the Levenshtein code}\label{subsec:alterproof_leven}

Define $\bpsi(\bx) \in \Sigma_2^n$ to be the derivative of $\bx \in \Sigma_2^n$ so that
\begin{equation}\label{eq:psi}
    \psi(\bx)_i=\left\{
    \begin{array}{ll}
    x_{i}\oplus x_{i+1}, &  i=1,2,\dotsc,n-1 \\
    x_{n}, &  i=n
\end{array}
\right.
\end{equation}
where $a\oplus b$ denotes $(a+b)\bmod 2$. Levenshtein~\cite{levenshtein1967asymptotically} showed that  $\VT(\bpsi(\bx))\equiv -\VT^{(r)}(0\bx)\pmod{2n}$ for $\bx\in\Sigma_2^n$. This equality provides another way to prove the error-correcting capability of the code.
\begin{theorem}\label{thm:LC-alternative}
The code 
\(
\left\{\bx\in\Sigma_2^n: \VT(\bpsi(\bx))\equiv a\bmod 2n\right\}
\)
can correct a burst of at most 2 deletions.
\end{theorem}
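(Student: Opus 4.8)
The plan is to show that the code
\(
\mathcal{C} = \{\bx\in\Sigma_2^n: \VT(\bpsi(\bx))\equiv a\bmod 2n\}
\)
coincides (as a set) with the Levenshtein code $\cC_L(a', n)$ for $a' \equiv -a \bmod 2n$, using the identity $\VT(\bpsi(\bx))\equiv -\VT^{(r)}(0\bx)\pmod{2n}$ attributed to Levenshtein. Since the previous theorem already asserts $\cC_L(a',n)$ is a $2$-burst-error-correcting code, this would immediately give the result. However, to make the proof self-contained and to set up the machinery needed later for the alternating-sequence construction in Section~\ref{subsec:alternating}, I would instead give a direct decoding argument for $\mathcal{C}$, mirroring the classical VT-code proof.

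First I would record the two effects of a burst of $\delta \in \{1,2\}$ deletions on $\bx$ at position $i+1$: it shortens $\bx$ to $\bx'$ of length $n-\delta$, and — crucially — I would analyze how $\bpsi(\bx')$ relates to $\bpsi(\bx)$. The key structural observation is that the derivative operator $\bpsi$ turns a burst deletion in $\bx$ into something close to a burst deletion in $\bpsi(\bx)$: deleting $x_{i+1},\dots,x_{i+\delta}$ changes $\psi(\bx)_i = x_i\oplus x_{i+1}$ into $\psi(\bx')_i = x_i \oplus x_{i+\delta+1}$, which equals $\psi(\bx)_i \oplus \psi(\bx)_{i+1}\oplus\cdots\oplus\psi(\bx)_{i+\delta}$, while leaving all coordinates before $i$ and after $i+\delta$ (shifted) untouched. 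So in $\bpsi$-space the burst deletion behaves like: delete coordinates $i+1,\dots,i+\delta$ and XOR their combined parity onto coordinate $i$. For $\delta=1$ this is literally a single deletion (possibly accompanied by a bit flip at position $i$ when $x_{i+2}\neq x_i$); for $\delta=2$ it is a burst of two deletions with a possible correction at the boundary. I would then track the change in $\VT(\bpsi(\bx))$ under each case, showing it lies in a controlled residue range modulo $2n$, exactly as in the VT argument where knowing the syndrome deficiency localizes the error.

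The decoder then works as follows: given the received $\bx'$ and the known syndrome value $a$, compute $\VT(\bpsi(\bx'))$, form the difference $d \equiv a - \VT(\bpsi(\bx')) \pmod{2n}$, and argue that $d$ together with the weight/parity information carried along determines both $\delta$ and the location $i$ of the burst, and hence reconstructs $\bx$ uniquely. The bookkeeping is slightly more delicate than for single deletions because a burst of two can either sit inside a run or straddle a run boundary, giving a handful of cases for how $\sr$ (equivalently $\bpsi$) shifts; I would organize these by whether the deleted block lies within one run of $\bx$ or spans two runs, and in each case exhibit the syndrome change as a sum of at most two run-index contributions, which is what keeps everything within one period of $2n$.

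The main obstacle I anticipate is the case analysis for a length-$2$ burst straddling a run boundary — here two runs of $\bx$ may merge in $\bx'$, so the run-index sequence $\sr(\bx')$ is not simply $\sr(\bx)$ with two entries removed, and one must carefully verify that the resulting syndrome perturbation is still uniquely invertible (i.e., that no two distinct $(\delta, i, \text{local pattern})$ triples produce the same $d \bmod 2n$). Establishing this injectivity — essentially replicating the heart of Levenshtein's original combinatorial lemma but phrased through $\bpsi$ and the ordinary $\VT$ functional — is where the real work lies; everything else is the standard averaging bound $|\mathcal{C}|\ge 2^n/(2n)$ and routine verification that $\bpsi$ is a bijection on $\Sigma_2^n$, so that the code is well-defined and the syndrome is meaningful.
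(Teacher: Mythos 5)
Your proposal is correct and follows essentially the same route as the paper's proof: you track how the burst in $\bx$ manifests in $\bpsi(\bx)$ (deletion of the affected derivative bits with their combined parity XORed onto the preceding coordinate, which is exactly the paper's list of patterns $y_{i-1}y_i\to y_{i-1}\oplus y_i$ and $y_{i-1}y_iy_{i+1}\to y_{i-1}\oplus y_i\oplus y_{i+1}$), and then decode from the syndrome difference $\Delta$ together with the weight of the received derivative sequence. The case analysis you defer — distinguishing the patterns $00\to0$, $11\to0$, $010\to1$, $101\to0$, etc.\ by comparing $\Delta$ with $w$ and its parity, plus the boundary case where the first one or two symbols are deleted — is precisely what the paper's proof carries out, and it does go through as you anticipate.
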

\begin{proof}
For a codeword $\bx$, let $\bx'$ be obtained from $\bx$ after a burst of at most 2 deletions. Also, let $\by = \bpsi(\bx)$, $\by'= \bpsi(\bx')$, and $\Delta = \VT(\by)-\VT(\by')$. The error in $\bx$ can affect $\by$ in the following ways:
\begin{itemize}
    \item If the first one or two symbols of $\bx$ are deleted, then the first one or two symbols of $\by$ are deleted, respectively.
     \item If $x_i$ is deleted (where $i \in \{2,3,\ldots, n\}$), then $y_{i-1}y_{i}$ is replaced by $y_{i-1}\oplus y_{i}$. The possible error patterns in $\by$ are: $00\to 0, 01\to 1, 10\to 1, 11\to 0$. %Notice that the first three error patterns can be expressed in terms of a single deletion (for instance $01 \to 1$ is the result of deleting the symbol $0$). 
    \item If $x_{i}x_{i+1}, 2\le i\le n-1$, are deleted, then $y_{i-1}y_{i}y_{i+1}$ is replaced by $y_{i-1}\oplus y_{i}\oplus y_{i+1}$. The error patterns in $\by$ are: $000\to 0, 001\to 1, 010\to 1, 011\to 0,100\to 1, 101\to 0, 110\to 0, 111\to 1$. 
\end{itemize}
Let $R_1$ be the number of 1s on the right of the altered bits in $\by$ and $L_1$ on their left. Similarly, define $R_0,L_0$. Let $w$ denote the weight of $\by'$ and $p$ the index of the first altered bit. We show how $\by$ and thus $\bx$ can be recovered based on $\Delta$ and $w$, which are computable at the decoder. % \textcolor{red}{NOTE: Are we being too flimsy with the definition of $p$ and the notion of ``altered bits''?}

If one bit is deleted in $\bx$, the following cases may occur in $\by$ resulting in $\by'$. As can be seen below, the cases can be distinguished by comparing $\Delta$ and $w$ so that it is possible to recover $\by$ in each case.
\begin{enumerate}
    \item If $00 \to 0$, $01 \to 1$, or $10 \to 1$, then $\Delta=R_1\le w$. We recover $\by$ by inserting a 0 in the rightmost position before $\Delta$ 1s. 
    \item If the first bit is deleted from $\by$ resulting in $\by'$ and it is a 1, then $\Delta=R_1+1=w+1$. To correct the error, we prepend 1 to $\by'$.
    \item If $11 \to 0$, then $\Delta = 2p+1+R_1 =2 \left(L_0 + L_1 + 1 \right) + 1 + R_1 = 2L_0 + L_1 + w + 3\ge w+3$. As $2p+1+R_1$ is strictly increasing in $p$, there is a unique value of $p$ satisfying the equation and we can again recover $\by$.
\end{enumerate}

If two bits are deleted in $\bx$, one of the following cases occurs in $\by$. Similar to before, each case can be identified based on the comparison of $\Delta$ with $w$.
\begin{enumerate}
    \item If $010\to 1$ occurs, then $\Delta = 2R_1+1 < 2w$. We insert two 0s on both sides of the $(R_1+1)$th 1 from the right.
    \item If $001 \to 1$ or $100 \to 1$, then $\Delta = 2R_1\le 2w$. We insert two 0s on the left of the $R_1$th 1 from the right. Note that in this case, $\Delta$ is an even number whereas in 1) it was odd.
    \item If $10$ or $01$ is deleted from the beginning, then $\Delta=2R_1+1=2w+1$ or $\Delta = 2w+2$, respectively. Here 10 or 01 is prepended, depending on the parity of $\Delta$. 
    \item If $011 \to 0$ or $111 \to 1$, then  $\Delta= p + \left(p+1\right) + 2R_1$ $= \left(L_0 + L_1 + 1\right) + \left(L_0 + L_1 + 2 \right) + 2 R_1$ $= 2w + 3 + 2 L_0 \geq 2w + 3$. In this case, we insert 11 immediately after the $L_0$th 0, where $L_0 = (\Delta-2w-3)/2$. 
    \item If $101\to 0$ occurs, then $\Delta = 2p+2+2R_1\ge 2w+4$. We insert two 1s on the sides of the $(L_0+1)$th 0, where $L_0 = \Delta/2-w-2$.\qedhere
\end{enumerate}
\end{proof}

\subsection{Correcting an induced burst of 2 deletions in alternating sequences}\label{subsec:alternating}

\iffalse

\fi

One of the challenges facing DNA data storage is the high cost of synthesizing DNA sequences accurately. Recently, a new enzymatic method for parallel DNA synthesis was proposed in~\cite{lee2019terminator}, which could decrease the cost of synthesis but with a loss in accuracy. Specifically, the length of the runs cannot be easily controlled. One approach to address this challenge is to store information only in the identity of the symbols of the runs. In this approach, for example, $\mA\mA\mA\mC\mG\mG\mC\mC\mT$, $\mA\mC\mC\mG\mG\mC\mT\mT\mT$, and $\mA\mC\mG\mC\mT$ would be equivalent. In this case, we can consider the information being encoded in a sequence with no adjacent repeats of a symbol, such as $\mA\mC\mG\mC\mT$, which we term an \emph{alternating} sequence.

The second difficulty of this enzymatic synthesis method is the high probability of deletion~\cite{lee2019terminator}, which makes it possible for complete runs to be deleted. Over the space of alternating sequences, the deletion of a single symbol may manifest as a burst of more than 1 deletion. For example, the deletion $\mA\mC\mC\mG\mG\mC\mT\mT\mT\to \mA\mC\mC\mC\mT\mT\mT$ will be interpreted as $\mA\mC\mG\mC\mT\to \mA\mC\mT$. We call such a burst resulting from a single deletion an \emph{induced deletion}. Formally, a single induced deletion in an alternating sequence is a deletion that replaces a substring of the form $aba$ with $a$, where $a, b\in\Sigma$ and $a\neq b$. Induced deletions may be more difficult to handle for trace reconstruction and synchronization approaches given that they cause larger shifts in the sequence. In this subsection, we propose a non-binary code with redundancy $\log n+2\log q+1$ to correct an induced burst of length 2, which is very close to the bound shown in \eqref{eq:nonasym_bound}.

For sequences $\bv,\bw$, let $\bv\circ\bw=v_1w_1v_2w_2\dotsm$ be obtained by interleaving them. Define the odd and even subsequences of $\bv$ as $\bv^{o}=v_1v_3\dotsm$ and $\bv^{e}=v_2v_4\dotsm$, respectively. 

%\begin{construction}\label{const:cL}
%For integers $a \in [[2n]]$, $b \in [[q]]$, $c \in %[[q]]$, let
%\begin{equation*}
%    \cC_I(a,b,n) \buildrel \Delta \over = \left \{\bu\in \Sigma_q^n:
%    \VT(\bpsi(\bphi(\bu^o)\circ\bphi(\bu^e)))%\sum\nolimits_{i=1}^{n} i\psi(\bx)_i 
%    \equiv a \bmod  2n,\quad
%    \sum_{i}\bu^o_i%\sum\nolimits_{i=1}^{n/2} u_{2i-1} 
%    \equiv b \bmod q,%\\&\sum\nolimits_{i=1}^{n/2} u_{2i} 
%    \quad \sum_i\bu^e_i\equiv c \bmod q \right \}.
%\end{equation*}
%\end{construction}

\begin{theorem}\label{th:cInduce}
For integers $a \in [[2n]]$, $b \in [[q]]$, $c \in [[q]]$, 
\begin{equation*}
    \cC_I(a,b,n) = \left \{\bu\in \Sigma_q^n:
    \VT(\bpsi(\bphi(\bu^o)\circ\bphi(\bu^e)))%\sum\nolimits_{i=1}^{n} i\psi(\bx)_i 
    \equiv a \bmod  2n,\quad
    \sum_{i}\bu^o_i%\sum\nolimits_{i=1}^{n/2} u_{2i-1} 
    \equiv b \bmod q,%\\&\sum\nolimits_{i=1}^{n/2} u_{2i} 
    \quad \sum_i\bu^e_i\equiv c \bmod q \right \}.
\end{equation*}
can correct an induced deletion of size 2%, i.e., a burst of exactly 2 deletions caused by a single deletion
.
\end{theorem}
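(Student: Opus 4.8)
The plan is to exhibit a decoder. Fix $\bu\in\cC_I(a,b,n)$ and let $\bu'$ be the received word, obtained from $\bu$ by replacing a substring $u_ju_{j+1}u_{j+2}=aba$ (with $a\ne b$) by $a$; note that $\bu'$ is again alternating and has length $n-2$. Set $\bz=\bphi(\bu^o)\circ\bphi(\bu^e)$, which is a binary word of length $|\bu^o|+|\bu^e|=n$, and let $\bz'=\bphi((\bu')^o)\circ\bphi((\bu')^e)$, a word the decoder can compute from $\bu'$. The decoder works in three stages: (i) recover $\bz$ from $\bz'$ using the congruence $\VT(\bpsi(\bz))\equiv a\bmod 2n$; (ii) de-interleave $\bz$ into $\bphi(\bu^o)$ and $\bphi(\bu^e)$; (iii) reconstruct $\bu^o$ from $\bphi(\bu^o)$, $(\bu')^o$, and the congruence on $\sum_i\bu^o_i$ (and $\bu^e$ analogously from the congruence on $\sum_i\bu^e_i$), then output $\bu=\bu^o\circ\bu^e$.

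The core of the argument is the reduction powering stage~(i). If $j$ is odd, then $u_j,u_{j+2}$ lie in $\bu^o$ and $u_{j+1}$ in $\bu^e$; if $j$ is even the roles are swapped. In either case $u_j$ and $u_{j+2}$ occupy two consecutive coordinates of one subsequence and are both equal to $a$, so $(\bu')^o$ and $(\bu')^e$ are obtained from $\bu^o$ and $\bu^e$ by a single symbol deletion each: on the side carrying the repeated $a$, one of the two copies of $a$ is dropped, and on the other side $u_{j+1}=b$ is dropped. Next I use the fact behind the Tenengol'ts code: deleting the $i$-th symbol of a $q$-ary word $\bw$ removes exactly one bit of $\bphi(\bw)$, since the pair $\phi(\bw)_i,\phi(\bw)_{i+1}$ collapses to $\phi(\bw)_i$ and each of the transitions $00\to0$, $11\to1$, $01\to0$, $01\to1$, $10\to0$, $10\to1$ is the removal of one bit of that pair. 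On the side carrying the repeated $a$, deleting the \emph{later} of the two equal copies (both deletions produce the same subsequence) forces the collapsing bit to equal $0$ and removes it cleanly; tracking positions through the interleaving — the bit $\phi(\bu^o)_k$ sits at coordinate $2k-1$ of $\bz$ and $\phi(\bu^e)_k$ at coordinate $2k$ — one checks that this removes coordinate $j+2$ of $\bz$. On the other side the single deletion removes a bit at one of two adjacent positions of its $\bphi$-image, which corresponds to coordinate $j+1$ or coordinate $j+3$ of $\bz$. Since $j+2$ has one parity and $j+1,j+3$ the other, the two removed coordinates are adjacent, and deleting that length-$2$ block from $\bz$ yields exactly $\bz'$. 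The boundary cases $j=1$ and $j=n-2$ are handled the same way, with the deletions at the first or last coordinates of the subsequences.

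Granting this reduction, stage~(i) follows from Theorem~\ref{thm:LC-alternative}: since $|\bz|=n$ and $\VT(\bpsi(\bz))\equiv a\bmod 2n$, the Levenshtein-type code recovers $\bz$ from $\bz'$, and de-interleaving gives $\bphi(\bu^o)$ and $\bphi(\bu^e)$, which is stage~(ii). For stage~(iii), comparing the now-known $\bphi(\bu^o)$ with $\bphi((\bu')^o)$ (computed from $(\bu')^o$) pinpoints the maximal run of $\bphi(\bu^o)$ that lost a bit together with its value $v$. If $v=1$, this run corresponds to a strictly increasing block $\bu^o_{l-1}<\cdots<\bu^o_r$ of $\bu^o$, and the symbol missing from $(\bu')^o$ must be reinserted into the corresponding block with a value in the integer interval $(\bu^o_{l-1},\bu^o_r)$; if $v=0$, the block is non-increasing and the candidate values lie in $[\bu^o_r,\bu^o_{l-1}]$. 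In either case the candidate values occupy a window of at most $q$ consecutive integers, hence are pairwise distinct modulo $q$, and so are the resulting values of $\sum_i\bu^o_i$; the congruence $\sum_i\bu^o_i\equiv b\bmod q$ therefore determines $\bu^o$ uniquely. The same reasoning with $\sum_i\bu^e_i\equiv c\bmod q$ recovers $\bu^e$, and finally $\bu=\bu^o\circ\bu^e$.

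The main obstacle is stage~(i): one must check carefully, over the parity of $j$ and the boundary cases, that a single symbol deletion in $\bu^o$ and in $\bu^e$ propagates — through the merging behaviour of $\bphi$ — to a single bit deletion in each of $\bphi(\bu^o),\bphi(\bu^e)$ falling on \emph{adjacent} coordinates of the interleaved word $\bz$, so that $\bz$ undergoes a genuine length-$2$ burst (to which Theorem~\ref{thm:LC-alternative} applies) rather than two unrelated deletions. Once that is established, the remaining steps are routine, using Theorem~\ref{thm:LC-alternative} for the binary word $\bz$ and the Tenengol'ts-style counting for the two $q$-ary subsequences.
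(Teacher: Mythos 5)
Your proof is correct and takes essentially the same approach as the paper: both reduce the induced deletion to a burst-type error in the interleaved word $\bz=\bphi(\bu^o)\circ\bphi(\bu^e)$, recover $\bz$ from the constraint $\VT(\bpsi(\bz))\equiv a\bmod 2n$ via the decoder of Theorem~\ref{thm:LC-alternative}, and then recover the deleted symbols of $\bu^o$ and $\bu^e$ from the mod-$q$ checksums using the Tenengol'ts decoding. The only difference is bookkeeping: you show directly that $\bz'$ equals $\bz$ with two \emph{adjacent} bits deleted (tracking the two single deletions through the interleaving), while the paper records the error as the replacement $x_{i+1}0x_{i+3}\to x'_{i+1}$ and verifies that its effect on $\bpsi(\bz)$ coincides with the two-deletion cases of Theorem~\ref{thm:LC-alternative}.
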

\begin{proof}
Let $\bu\in \cC_I(a,b,n)$ be a codeword and let $\bx = \bphi(\bu^o)\circ\bphi(\bu^e)$. Note that $x_i=1$ if $u_i>u_{i-2}$ or if $i\le 2$, and $x_i=0$ otherwise. Also, let $\bu'$ and $\bx'$ be the corresponding sequences after the deletions. We first characterize the changes in $\bx$. Suppose $u_{i+1}u_{i+2}$ are deleted, where $u_i = u_{i+2}$, for $2\le i\le n-3$. Then $x_{i+1}x_{i+2}x_{i+3}$ will change to $x'_{i+1}$ in $\bx$. Since $u_{i}=u_{i+2}$, we have $x_{i+2}=0$. Furthermore, note that if $i\le n-4$, then  $x_{i+4}=x'_{i+2}$ again because $u_i=u_{i+2}$, so we do not consider $x_{i+4}$ as part of the error pattern in $\bx$. Hence, the change in $\bx$ can be viewed as $x_{i+1}0x_{i+3}\rightarrow x'_{i+1}$, and the possible cases are $101\rightarrow 1$, $100\rightarrow 1$, $100\rightarrow 0$, $001\rightarrow 1$, $001\rightarrow 0$ and $000\rightarrow 0$. 

If the error is of the form $u_1u_2u_3\to u_3$, where $u_1=u_3$, then the change in $\bx$ will be of the form $110x_4\to11$. The three-bit patterns are then $101\to 1,100\to1$, both of which appear among the patterns above. 
If the error is of the form $u_{n-2}u_{n-1}u_{n}\to u_{n-2}$, where $u_n = u_{n-2}$, then the change in $\bx$ is of the form $x_{n-2}x_{n-1}0\to x_{n-2}$. The last three bits of $\bx$ change as $000\to0, 010\to 0, 100\to1$, or $110\to1$.

Having determined the changes in $\bx$, it can be then shown that the change in $\bpsi(\bx)$ is one of the following cases in Table~\ref{tab:induced_deletion}: deletion of 11, deletion of 00, $010\to1$, or $101\to0$. Then, similar to the proof of Theorem~\ref{thm:LC-alternative}, it can be shown that with the knowledge of $\VT(\bpsi(\bx))$, we can fix the errors in $\bpsi(\bx)$ and in turn the errors in $\bx$. From $\bx$, we find $\bx^o=\bphi(\bu^o)$ and $\bx^e=\bphi(\bu^e)$. 
Further, from $\sum_{i}\bu^o_i\equiv b \bmod q$ and $\sum_{i}\bu^e_i\equiv c \bmod q$, we can get the value of the deleted symbols in $\bu^o$ and $\bu^e$, respectively. At last, since $\bu'^o$, $\bx^o$ and the value of the deleted symbol are known, $\bu^o$ can be recovered via the decoding process for Tenengol'ts code $\cC_T(a,b,n)$. Also, $\bu^e$ can be recovered in the same way.
\end{proof}

\begin{table}[t!]
\centering
\caption{The change of $\bx_{[i:i+4]}$ and $\bpsi(\bx)_{[i:i+3]}$ after deleting $u_{i+1}u_{i+2}$}\label{tab:induced_deletion}
\begin{tabular}{|l||ll|}
\hline
$\bx_{[i:i+4]}\rightarrow \bx'_{[i:i+2]}$  & $  \bpsi(\bx)_{[i:i+3]}\rightarrow \bpsi(\bx')_{[i:i+1]}$ &   \\ \hline
$x_{i}101x_{i+4} \rightarrow x_{i}1x_{i+4}$    & $c_{i}11c_{i+3}\rightarrow c_{i}c_{i+3}$& 1) $c_{i}11c_{i+3}\rightarrow c_{i}c_{i+3}$           \\ \hline
\multirow{2}[0]{*}{$x_{i}100x_{i+4} \rightarrow x_{i}0x_{i+4}$} & \multirow{2}[0]{*}{$c_{i}10c_{i+3}\rightarrow {\bar{c}_{i}}c_{i+3}$} & 2) $010c_{i+3}\rightarrow 1c_{i+3}$ \\&       & 3) $110c_{i+3}\rightarrow 0c_{i+3}$ \\
\hline
\multirow{2}[0]{*}{$x_{i}001x_{i+4} \rightarrow x_{i}0x_{i+4}$} & \multirow{2}[0]{*}{$c_{i}01c_{i+3}\rightarrow c_{i}{\bar{c}_{i+3}}$} & 4) $c_{i}010\rightarrow c_{i}1$ \\&       & 5) $c_{i}011\rightarrow c_{i}0$      \\ \hline
\multirow{2}[0]{*}{$x_{i}100x_{i+4} \rightarrow x_{i}1x_{i+4}$} & \multirow{2}[0]{*}{$c_{i}10c_{i+3}\rightarrow c_{i}{\bar{c}_{i+3}}$} & 6) $c_{i}100\rightarrow c_{i}1$ \\&       & 7) $c_{i}101\rightarrow c_{i}0$\\ \hline
\multirow{2}[0]{*}{$x_{i}001x_{i+4} \rightarrow x_{i}1x_{i+4}$} & \multirow{2}[0]{*}{$c_{i}01c_{i+3}\rightarrow {\bar{c}_{i}}c_{i+3}$} & 8) $001c_{i+3}\rightarrow 1c_{i+3}$ \\&       & 9) $101c_{i+3}\rightarrow 0c_{i+3}$
\\ \hline
$x_{i}000x_{i+4} \rightarrow x_{i}0x_{i+4}$      & $c_{i}00c_{i+3}\rightarrow c_{i}c_{i+3}$& 10) $c_{i}00c_{i+3}\rightarrow c_{i}c_{i+3}$         \\ \hline
\end{tabular}
\end{table}

Using an averaging argument, we arrive at the following corollary.
\begin{corollary}
%For integers $n$, $0\le a < 2n$, $0\le b < q$ and $0\le c < q$, there exists a code $\cC(n)$ with redundancy $\log n+2\log q+1$.
There exists a code $\cC_I(a,b,n)$ of length $n$ capable of correcting an induced deletion of length $2$ with redundancy at most $\log n+2\log q+1$.
\end{corollary}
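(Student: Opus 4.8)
The plan is to obtain the Corollary from Theorem~\ref{th:cInduce} by the same kind of averaging (pigeonhole) argument that was used above for the Levenshtein code $\cC_L(a,n)$. Theorem~\ref{th:cInduce} already does all the work: for \emph{every} admissible triple of syndrome values $(a,b,c)$ with $a\in[[2n]]$, $b\in[[q]]$, $c\in[[q]]$, the set $\cC_I(a,b,n)$ (with the third constraint governed by $c$) corrects an induced deletion of length $2$. So the only thing left to show is that some such triple yields a code of the required size.

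The three defining congruences --- $\VT(\bpsi(\bphi(\bu^o)\circ\bphi(\bu^e)))\bmod 2n$, $\sum_i\bu^o_i\bmod q$, and $\sum_i\bu^e_i\bmod q$ --- assign to each $\bu\in\Sigma_q^n$ exactly one triple $(a,b,c)$, and the number of possible triples is $2n\cdot q\cdot q=2nq^2$. Hence these codes partition $\Sigma_q^n$ into $2nq^2$ pairwise disjoint classes whose sizes sum to $q^n$. First I would invoke the pigeonhole principle to get a triple $(a^\ast,b^\ast,c^\ast)$ with
\[
|\cC_I(a^\ast,b^\ast,n)|\ \ge\ \frac{q^n}{2nq^2}.
\]
For this choice the redundancy satisfies
\[
\log\frac{q^n}{|\cC_I(a^\ast,b^\ast,n)|}\ \le\ \log\!\bigl(2nq^2\bigr)\ =\ \log n+2\log q+1,
\]
which is exactly the claimed bound, and by Theorem~\ref{th:cInduce} this code corrects an induced deletion of length $2$.

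There is essentially no obstacle here; the argument is purely combinatorial counting. The only points that need a moment's care are (i) confirming that the parameter ranges in Theorem~\ref{th:cInduce} have sizes exactly $2n$, $q$, and $q$, so that the total class count is $2nq^2$ and the logarithm comes out to $\log n+2\log q+1$ rather than something larger, and (ii) noting that Theorem~\ref{th:cInduce} guarantees the error-correcting property uniformly over all triples in those ranges, so that the particular class selected by pigeonhole is automatically a valid code. Both are immediate from the statement of Theorem~\ref{th:cInduce}.
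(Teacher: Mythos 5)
Your proposal is correct and is precisely the averaging (pigeonhole) argument the paper itself invokes: the three syndrome constraints partition $\Sigma_q^n$ into $2nq^2$ classes, so some class has size at least $q^n/(2nq^2)$, giving redundancy at most $\log n+2\log q+1$, and Theorem~\ref{th:cInduce} supplies the error-correction property for every choice of parameters. Nothing further is needed.
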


\begin{example}
Suppose $\bu=\left(1,0,6,7,6,2,3,5\right)\in\Sigma_8^8$ and the 4th symbol is deleted, so the retrieved sequence is $\bu'=\left(1,0,6,2,3,5\right)\in\Sigma_8^6$. The decoding process can be shown as the following:
\begin{enumerate}
    \item $\bu^o=\left(1,6,6,3\right)$, $\bu^e=\left(0,7,2,5\right)$, and $\bpsi(\bphi(\bu^o)\circ\bphi(\bu^e))=\left(0,0,0,1,0,0,1,1\right)$. Thus, $a=3$, $b=0$ and $c=6$.
    \item $\bu'^o=\left(1,6,3\right)$, $\bu'^e=\left(0,2,5\right)$. $\bpsi(\bphi(\bu'^o)\circ\bphi(\bu'^e))=\left(0,0,0,1,1,1\right)$ and $w=3$. We can get that $\Delta=4<2w$ and it is even. It belongs to Case 2 in the Proof of Theorem~\ref{thm:LC-alternative} and we insert two 0s on the left of the \textit{second} 1 from the right due to $R_1=2$. Thus, we can recover $\bpsi(\bphi(\bu^o)\circ\bphi(\bu^e))=\left(0,0,0,1,0,0,1,1\right)$.
    \item Due to $b=0$ and $c=6$, the value of deleted symbol in $\bu^o$ and $\bu^e$ are 6 and 7, respectively.
    \item From the second step, we can get $\bphi(\bu^o)=\left(1,1,0,0\right)$. Since $\bu'^o$, $\bphi(\bu^o)$ and the value of deleted symbol are known, $\bu^o=\left(1,6,6,3\right)$ can be recovered via the decoding process for Tenengol'ts code $\cC_T(a,b,n)$. Also, $\bu^e=\left(0,7,2,5\right)$ can be recovered through the same way. Therefore, we can recover $\bu=\left(1,0,6,7,6,2,3,5\right)$. 
\end{enumerate}
\end{example}

\iftrue

\section{Non-binary $2$-burst-error-correcting codes}\label{sec:noncor2}
In this section, we propose a non-binary code correcting a burst of at most 2 deletions with redundancy $\log n+O(\log q\log\log n)$, for even $q$. As discussed previously, we begin by describing a simple mapping that converts the problem at hand of correcting deletions of $q$-ary symbols from sequences of length $n$ to the problem of correcting deletions of symbols contained within a binary matrix comprised of $\lceil\log q\rceil$ rows and $n$ columns. Our approach will be to encode the first row of this matrix using a pattern length limited (PLL) Levenshtein code, described in Section~\ref{section:PLL}, which provides a range for the position of the error. The remaining rows of the matrix are encoded using a $P$-bounded Levenshtein code, presented in Subsection~\ref{section:Pbound}, which can correct a burst of at most 2 deletions given that the deletion is known to occur within a specific range of positions. The overall construction for correcting a burst of at most two deletions and its redundancy is presented in Subsection~\ref{section:finalcons}. 
\iffalse
We map non-binary sequences to binary matrices, given in the Subsection~\ref{section:mapping}, such that deletions occur in the same position in each row of the matrix. The rows are encoded with different codes, which together allow us to determine the position of the deletions. For the first row, we use a pattern length limited (PLL) Levenshtein code, described in Subsection~\ref{section:PLL}, which provides a range for the position of the error. %which is encoded as a Levenshtein code where we restrict any substring with the period 2 has length at most $\lceil \log n \rceil+5$. 
The following rows utilize the $P$-bounded Levenshtein code, presented in Subsection~\ref{section:Pbound}, which can correct a burst of at most 2 deletions given that the deletion occurred within $P$ consecutive known positions. The overall construction for correcting a burst of at most 2 deletions and its rate are presented in Subsection~\ref{section:finalcons}.
\fi

\subsection{Mapping from non-binary to binary}\label{section:mapping}

For a non-binary sequence $\bu \in \Sigma_q^n$, for even $q$, define the binary representation matrix $A(\bu)\in\Sigma_2^{\lceil \log q\rceil\times n}$ as
\begin{equation*}
    A(\bu) =\left[ \begin{matrix}
\bx_{1}\\\bx_{2}\\\vdots\\\bx_{\lceil \log q\rceil}
\end{matrix}\right] = \left[ \begin{matrix}
{x_{1,1}}&{x_{1,2}}& \cdots &{x_{1,n}}\\
{x_{2,1}}&{x_{2,2}}& \cdots &{x_{2,n}}\\
\vdots&\vdots&\vdots&\vdots\\
{x_{\lceil \log q\rceil,1}}&{x_{\lceil \log q\rceil,2}}& \cdots &{x_{\lceil \log q\rceil,n}}
\end{matrix} \right]
\end{equation*}
Let $A(\bu)_i$ denote the $i$th row of $A(\bu)$, where $A(\bu)_i=\bx_i$. The $j$th column of $A(\bu)$ is the binary representation of $u_j$, i.e., $u_j=\left[x_{1,j},x_{2,j},\dotsc,x_{\lceil \log q\rceil,j}\right]^T$, where $x_{1,j}$ is the least significant bit (LSB). Therefore, the non-binary sequence $\bu$ is converted to a  binary matrix with $\lceil \log q\rceil$ rows and $n$ columns. It is straightforward to observe that a burst of at most 2 deletions in $\bu$ corresponds to the deletion of at most 2 adjacent columns in  $A(\bu)$. %It is worth noting that the deletion positions are the same in each row, so we encode the first row and it will provide deletion position information for the remaining $\lceil \log q\rceil-1$ rows.

\subsection{Pattern length limited (PLL) Levenshtein Code}\label{section:PLL}

From the decoder of the Levenshtein code, we only can find the substring of period two from which the symbols were deleted, but not the exact deletion position. For example, suppose the underlined bits in $11\underline{01}01011$ are deleted and we receive $1101011$. While we can recover the original sequence, we only know the deletion occurred in the underlined substring $1\underline{1010101}1$, which has period 2. This creates difficulties when binary codes are used to correct errors in $q$-ary sequences.

If two adjacent bits are deleted, the deleted bits can be $11$, $10$, $01$, and $00$. If these are deleted from substrings of the form $1111\dotsm11$, $1010\dotsm10$, $0101\dotsm01$ and $0000\dotsm00$, respectively, the deletion position is ambiguous. The period of all of these patterns is 2. If only one bit is deleted, the deleted bit can be $1$ or $0$. The position of such deletion in the patterns $11\dotsm1$ and $00\dotsm0$, respectively, will be ambiguous. The period of these two patterns is also 2. (Note that, e.g., $00$ and $000$ have period 2.) Therefore, in order to narrow the range of possible deletion positions when a burst of length at most 2 occurs, we need to restrict the length of the patterns with period 2. 

It was proven in \cite{schoeny2017codes}, the redundancy of the code $\{ \bx\in \Sigma_2^n: L(\bx,1)\le \lceil \log n \rceil+1\}$ ($L(\bx,1)$ also denotes the length of the longest run in $\bx$) asymptotically approaches
$\log(e)/2 \approx 0.36$.
This idea is followed in \cite{chee2018coding} where it was shown that the redundancy of the code $\{ \bx\in \Sigma_2^n: L(\bx,2)\le \lceil \log n \rceil+2\}$ is at most 0.36 bits. Define the set of binary sequences of length $n$ where any substring with period 2 has length at most $r$ as $\pll(r,n)$. Here, we provide an explicit encoding algorithm to construct the code $\{ \bx: L(\bx,2)\le \lceil \log n \rceil+5\}$ with redundancy of 2 bits. This is shown in Algorithm~\ref{alg:L2}, %\textcolor{red}{Also can we use the notation $PLL(r,n)$ to describe the output of the algorithm rather than just ``encoded sequence''?}
where $b(i)$ represents the $\lceil \log n\rceil$-bit binary representation of $i$.

\begin{lemma}\label{lem:L2}
Given a sequence $\bx \in \Sigma_2^n$, Algorithm~\ref{alg:L2} outputs a unique sequence $\by\in \pll(\lceil \log n \rceil+5,n+2)$ where any substring of $\by$ with period 2 has length at most $\lceil \log n \rceil+5$. The redundancy of the encoding is 2 bits. 
\end{lemma}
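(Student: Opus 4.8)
The statement asks for an explicit encoding algorithm (Algorithm~\ref{alg:L2}) that takes an arbitrary $\bx \in \Sigma_2^n$ and produces $\by \in \pll(\lceil \log n\rceil+5, n+2)$, costing only $2$ bits of redundancy; since the algorithm itself is not in the excerpt, my plan is to describe the natural construction and the invariant one must verify. The idea is the standard ``repair'' or ``expansion'' technique for forbidden-substring codes: start by appending a short fixed marker (say two bits, e.g. $01$ or a pair that cannot be the start of a long period-2 run) so that the output has length $n+2$, and then repeatedly scan for a violation — a substring of period $2$ of length exactly $\lceil \log n\rceil + 6$. Whenever such a violation is found at some position $i$, delete that offending block and in its place insert an encoded ``pointer'' consisting of the $\lceil \log n\rceil$-bit binary string $b(i)$ together with a few flag/protection bits that (a) mark that a repair has taken place, (b) record the two symbols needed to regenerate the period-2 block, and (c) are chosen so that the newly inserted pointer block is itself not period-2 and does not create a new long period-2 substring by concatenation with its neighbours. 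One iterates until no violation remains; because each repair removes a block of length $\lceil\log n\rceil+6$ and inserts a block of length $\lceil\log n\rceil+O(1)$, the length is non-increasing (after the initial padding) so the process terminates, and the final string has the prescribed length $n+2$ once the padding is accounted for.

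**Key steps, in order.** First, I would fix the precise marker/flag convention and state the format of a repaired ``pointer block'': something like $[\,\text{flag bits}\,|\,b(i)\,|\,\text{two symbols}\,s,t\,|\,\text{flag bits}\,]$, engineered so that it breaks period $2$ internally and at both seams. Second, I would prove termination: show the length is monotone and that each iteration strictly reduces the number (or total weight) of violations, or more simply that it reduces a potential function, so that after finitely many steps $\by \in \pll(\lceil\log n\rceil+5, n+2)$. Third — the crux — I would prove \emph{invertibility}: given $\by$, the decoder scans for the flag pattern, reads off the position $i$ and the recorded symbols $s,t$, reconstructs the period-2 block of the appropriate length at position $i$, and thereby undoes the last repair; iterating this recovers $\bx$ uniquely. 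For this to work one must check that the flag pattern is genuinely unforgeable: it cannot arise ``by accident'' inside a legitimate codeword-in-progress, which is exactly why the flags are chosen to be incompatible with long period-2 runs and why the block length $\lceil\log n\rceil + 6$ (one more than the target $+5$) gives enough room for the pointer $b(i)$. Fourth, I would count redundancy: the output length is $n+2$ and the map is injective, so the image has size $2^n$ inside $\Sigma_2^{n+2}$, giving redundancy exactly $2$ bits.

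**Main obstacle.** The delicate part is the seam analysis in the invertibility/termination argument: after a repair, the inserted pointer block sits next to arbitrary bits of $\bx$, and one must guarantee that no new period-2 substring of length $\ge \lceil\log n\rceil+6$ is created straddling either boundary of the inserted block (otherwise the process could loop or the decoder could misparse). This forces a careful choice of the constant number of protection/flag bits padding the pointer on each side — this is precisely where the ``$+5$'' rather than ``$+2$'' in the exponent comes from, trading a slightly weaker period-2 bound for a clean $2$-bit redundancy. A secondary subtlety is handling violations near the two ends of the string (where a period-2 substring may be a short run like $00$ or $000$ that extends to the boundary) and confirming that the initial $2$-bit padding suffices to keep the final length exactly $n+2$; these are routine once the interior case is settled, so I would dispatch them last.
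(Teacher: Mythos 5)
Your plan is the same sequence-replacement technique the paper uses: pad with two marker bits, repeatedly excise a period-2 window of length $\lceil\log n\rceil+6$, encode it as a pointer block containing flag bits, the two generating symbols, and the position $b(i)$, and argue termination, invertibility, and a 2-bit redundancy count. Two details in your execution would need repair, though. First, you propose inserting the pointer block \emph{in place of} the excised window, yet you also record $b(i)$ --- these are inconsistent: if the pointer stays where the violation was, its own location already encodes $i$ and $b(i)$ is superfluous, and your feared seam analysis (pointer block adjacent to arbitrary bits of $\bx$ on both sides) becomes genuinely delicate. The paper instead \emph{appends} each pointer block $(0,\by_{[i,i+1]},b(i),11)$ to the end of the string, which is exactly why $b(i)$ is needed and why the seam problem largely evaporates: all pointer blocks accumulate as a suffix, each contains both a $0$ and an $11$ and hence cannot lie inside a period-2 substring, and the single boundary between the surviving prefix and the appended suffix is killed by the initial $10$ marker producing a $100$. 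Decoding then peels off the \emph{last} appended block rather than scanning for flags.

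Second, your length bookkeeping gives only a non-increasing length (``removes a block of length $\lceil\log n\rceil+6$ and inserts a block of length $\lceil\log n\rceil+O(1)$''), which does not yield a fixed output length and hence does not directly give redundancy exactly $2$ via injectivity into $\Sigma_2^{n+2}$. The paper removes exactly $\lceil\log n\rceil+5$ bits (one fewer than the violating window) and appends exactly $1+2+\lceil\log n\rceil+2=\lceil\log n\rceil+5$ bits, so $|\by|=n+2$ is an invariant of the loop; this exact balance is what makes the 2-bit redundancy claim immediate once injectivity is established. With these two corrections your argument coincides with the paper's proof.
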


\begin{proof}
Note that initially, $|\by|=n+2$, and every time a substring of length $\lceil\log n\rceil+5$ is deleted, a string of length $1+2+\lceil\log n\rceil+2$ is appended to $\by$. Hence, $|\by|$ remains $n+2$ during the encoding process.

\begin{algorithm}[t]
\label{alg:L2}
\SetAlgoLined
\KwInput{Sequence $\bx\in \Sigma_2^n$}
\KwOutput{Encoded sequence $\by\in \pll(\lceil \log n \rceil+5,n+2)$}
$\by\leftarrow(x_1,x_2,\dotsc,x_n,1,0),\ i\leftarrow 1,\ n'\leftarrow n$.

\While{$i\le n'-\lceil \log n \rceil-3$}{
\eIf{$L(\by_{[i,i+\lceil \log n \rceil+5]},2)=\lceil \log n \rceil+6$}
{Delete $\by_{[i,i+\lceil \log n \rceil+4]}$ from $\by$ and append $(0, \by_{[i,i+1]},b(i),11)$ to the end of $\by$.\\
$n'\leftarrow n'-\lceil \log n \rceil-5,\ i\leftarrow 1$}
{
   $i\leftarrow i+1$
  }
}
 \caption{Pattern Length Limited Encoding}
\end{algorithm}

Let $\bs_i=\by_{[i,i+\lceil\log n\rceil+5]}$. We show that when the algorithm terminates, no $\bs_i$ has period 2:
\begin{itemize}
    \item For $i\le n'-\lceil \log n \rceil-3$, after the termination of the algorithm, any $\bs_i$ with period 2 has been deleted. 
    \item If $n'-\lceil \log n \rceil-2\le i\le n'+1$, and $n'\neq n$, then $\bs_i$ contains 100 and thus its period is not 2. If $n'=n$, no such $\bs_i$ exists as $(n'-\lceil\log n\rceil -2) +(\lceil\log n\rceil+5)>n+2$.
    \item If $i>n'+1$, then each $\bs_i$ will contain both $0$ and $11$ as these are parts of each appended string. A string containing both $0$ and $11$ cannot have period 2. 
\end{itemize}
We next show that the encoding algorithm is invertible by showing that each deletion of a substring of length $\lceil \log n\rceil + 5$ is invertible. After such a deletion, $\by$ ends with a substring of the form $0ab\bh11$, where $a,b\in \Sigma_2, \bh\in \Sigma_2^{\lceil\log n\rceil}$. We can invert this deletion by inserting a string of form $abab\dotsm$ of length $\lceil \log n\rceil + 5$ in $\by$ at the position with binary representation given by $\bh$ and removing $0ab\bh11$. So given $\by$, by inverting each deletion, starting from the last one, we can recover $\bx$. It follows that the encoding is injective and the redundancy is 2 bits.
\end{proof}

\begin{example}
Suppose $\bx=(1101010101010101)\in \Sigma_2^{16}$, where $L(\bx,2)=15$. The encoding process of Algorithm~\ref{alg:L2} is as follows:
\begin{enumerate}
\item Initially, $\by=(1101010101010101\mathbf{10})$, $i=1$, $n'=16$.
\item When $i=2$, we have $L(\by_{[2,11]},2)=10=\log n+6$. We delete $\by_{[2,10]}$ and append $(010001011)$ to the end of $\by$. So $\by$ becomes $(1010101\mathbf{10}\bcomment{010001011})$. We then let $i=1$ and $n'=7$.
\item Then $n'-\log n-3=0$ and the encoding process ends.
\end{enumerate}
It can be seen that $L(\by,2)=7\le \lceil \log n \rceil+5$.
\end{example}

We can then construct pattern length limited Levenshtein codes in which the ambiguity of the position of the deletion is restricted to an interval of length $\lceil\log n\rceil+5$.

%\begin{construction}\label{const:pll}
%For arbitrary integers $n$ and $0\le a < 2n$, define the pattern length limited Levenshtein code $\plld\cC_L(a,\lceil \log n\rceil +5,n)$ as %the intersection of the code $\cB_a(n)$ and $\cS_n(\lceil \log n\rceil +5)$:
%\begin{equation*}
%    \plld\cC_L(a,\lceil \log n\rceil +5,n)=\cC_L(a,n)\cap \pll(\lceil \log n\rceil +5,n).
%\end{equation*}
%\end{construction}
\begin{lemma}
For integers $n$ and $a\in[[2n]]$, the pattern length limited Levenshtein code
\begin{equation*}
    \plld\cC_L(a,\lceil \log n\rceil +5,n)=\cC_L(a,n)\cap \pll(\lceil \log n\rceil +5,n).
\end{equation*}
has redundancy at most $\log n+3$.
\end{lemma}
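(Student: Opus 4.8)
The plan is to combine a counting bound on the pattern-length-limited constraint with the averaging argument already used for $\cC_L(a,n)$. Recall that $\plld\cC_L(a,\lceil\log n\rceil+5,n) = \cC_L(a,n)\cap\pll(\lceil\log n\rceil+5,n)$, where $\cC_L(a,n)=\{\bx\in\Sigma_2^n:\VT^{(r)}(0\bx)\equiv a\bmod 2n\}$. The two facts I would invoke are: (i) by Lemma~\ref{lem:L2}, the encoding map of Algorithm~\ref{alg:L2} is injective with redundancy $2$, so $|\pll(\lceil\log n\rceil+5,n)|\ge 2^{n-2}$; and (ii) the $2n$ residue classes mod $2n$ partition $\pll(\lceil\log n\rceil+5,n)$ (via the syndrome $\VT^{(r)}(0\bx)\bmod 2n$) into $2n$ sets, so by averaging there is at least one residue $a\in[[2n]]$ for which $|\plld\cC_L(a,\lceil\log n\rceil+5,n)|\ge |\pll(\lceil\log n\rceil+5,n)|/(2n)\ge 2^{n-2}/(2n)=2^{n}/(8n)$.

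First I would state that for each $a$ the code $\plld\cC_L(a,\lceil\log n\rceil+5,n)$ is simply the intersection of the $\pll$ set with one syndrome class, so $\sum_{a\in[[2n]]}|\plld\cC_L(a,\lceil\log n\rceil+5,n)| = |\pll(\lceil\log n\rceil+5,n)|$. Next I would note the lower bound on $|\pll(\lceil\log n\rceil+5,n)|$ coming from Lemma~\ref{lem:L2}: since Algorithm~\ref{alg:L2} injectively maps $\Sigma_2^n$ into $\pll(\lceil\log n\rceil+5,n+2)$ — with a harmless index shift from $n$ to $n+2$, which one can absorb by applying the lemma at block length $n-2$ or by simply noting the redundancy statement — we have at least $2^{n-2}$ sequences in the relevant $\pll$ set. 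Pigeonholing over the $2n$ residue classes then yields a good $a$ with $|\plld\cC_L(a,\lceil\log n\rceil+5,n)|\ge 2^{n}/(8n)$, hence redundancy $\log(2^n/|\plld\cC_L(a,\cdot,n)|)\le \log(8n)=\log n+3$.

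The one subtlety worth flagging is the length bookkeeping: Lemma~\ref{lem:L2} produces sequences of length $n+2$ in $\pll(\lceil\log n\rceil+5,n+2)$, whereas the code in the present lemma lives in $\pll(\lceil\log n\rceil+5,n)$; I would reconcile this by applying Lemma~\ref{lem:L2} with input length $n-2$ (so the output has length exactly $n$, and $\lceil\log(n-2)\rceil+5\le\lceil\log n\rceil+5$, so the pattern constraint is only tightened), giving $|\pll(\lceil\log n\rceil+5,n)|\ge 2^{n-4}$ and thus a good $a$ with redundancy at most $\log n+5$; alternatively, accepting the stated $+3$ bound means reading Lemma~\ref{lem:L2} as providing $|\pll(\lceil\log n\rceil+5,n)|\ge 2^{n-2}$ directly. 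Either way the argument is a one-line averaging bound once the $\pll$ cardinality is in hand; I expect no real obstacle, only this indexing nuisance, which is why I would phrase the proof so the constant is clearly traceable to Lemma~\ref{lem:L2}'s redundancy plus $\log(2n)$.
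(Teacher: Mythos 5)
Your proposal is correct and is essentially the paper's own proof: the paper likewise lower-bounds $|\pll(\lceil\log n\rceil+5,n)|$ by $|\pll(\lceil\log(n-2)\rceil+5,n)|\ge 2^{n-2}$ (i.e.\ Lemma~\ref{lem:L2} applied at input length $n-2$, whose output has length exactly $n$ and satisfies the tighter pattern constraint) and then pigeonholes over the $2n$ syndrome classes to get a code of size at least $2^{n-2}/(2n)=2^n/(8n)$, hence redundancy $\log n+3$. The only slip is in your final paragraph: applying Lemma~\ref{lem:L2} at input length $n-2$ gives an \emph{injection from a domain of size $2^{n-2}$} into length-$n$ PLL strings, so the cardinality bound is $2^{n-2}$, not $2^{n-4}$ --- the ``redundancy $2$'' is already the gap between input length $n-2$ and output length $n$, and should not be subtracted a second time, so the stated $+3$ constant is correct and there is no discrepancy to reconcile.
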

\begin{proof}
By Lemma~\ref{lem:L2}, $|\pll(\lceil \log n\rceil +5,n)|\ge |\pll(\lceil \log (n-2)\rceil +5,n)|\ge 2^{n-2}$. Hence, there exists some value of $a$ such that $|\plld\cC_L(a,\lceil \log n\rceil +5,n)|\ge 2^{n-2}/(2n)$.
\end{proof}

Note that at this point, we can use the code $\plld \cC_L$ to correct the burst of length at most 2 which occurs in the first row of matrix $A(u)$. In addition, due to the fact that we have limited the length of substrings of period two in our codewords, it follows that we can determine the range of where the burst of deletions has occurred in the remaining rows to within $\log n + 5$ positions. The code described in the next subsection leverages this information to correct the burst of deletions of length at most 2 in the remaining rows. We note that a similar construction appears in \cite{schoeny2017codes} for a code that can correct a \textit{single} deletion given its approximate location. The key difference between their result and ours is that the code described in the next subsection can correct a \textit{burst} of length \textit{at most two} deletions given its approximate location.

\subsection{P-bounded codes for correcting a burst of at most 2 deletions}\label{section:Pbound}

Next we show that the $P$-bounded Levenshtein code defined below can make use of the information obtained from the first row of $A(u)$ to correct the deletions in the remaining rows.

%\begin{construction}
%For $c\in[[2P]]$ and $d\in[[3]]$, define the %$P$-bounded Levenshtein code $\cC_L(c,d,P,n)$ as
%\begin{equation*}
%    \cC_L(c,d,P,n) = \Large\{\bx\in\Sigma_2^n: \VT(\bpsi(\bx))\equiv c \mod 2P,\quad \wt(\bphi(\bx)) = d \mod 3\Large\}
%\end{equation*}
%\end{construction}

\begin{theorem}
For all $c\in[[2P]]$ and $d\in[[3]]$, the $P$-bounded Levenshtein code  
\begin{equation*}
    \cC_L(c,d,P,n) = \left\{\bx\in\Sigma_2^n: \VT(\bpsi(\bx))\equiv c \mod 2P,\quad \wt(\bpsi(\bx)) = d \mod 3\right\}
\end{equation*}
can correct a burst of at most 2 deletions with redundancy $\log P+\log 6$, if the position $i$ of the first deleted symbol in $\bx$ is known to lie within an interval of length $P$. More precisely, $i$ can be uniquely determined provided $m$ is known where $i\in [m,m+P-1]$.
\end{theorem}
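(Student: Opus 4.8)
The plan is to combine an easy averaging bound for the redundancy with a reduction of the decoding task to a short window, followed by a modulo-$2P$ version of the argument in the proof of Theorem~\ref{thm:LC-alternative}. The redundancy is immediate: every $\bx\in\Sigma_2^n$ lies in exactly one class $\cC_L(c,d,P,n)$, and there are $6P$ classes, so by averaging some choice of $(c,d)$ has $|\cC_L(c,d,P,n)|\ge 2^n/(6P)$, i.e.\ redundancy at most $\log P+\log 6$.

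For the decoding guarantee, let $\bx$ be the transmitted codeword and $\bx'$ the word received after a burst of $t\le 2$ deletions whose first deleted coordinate is $i\in[m,m+P-1]$; note $t=n-|\bx'|$ is known to the decoder. Put $\by=\bpsi(\bx)$, $\by'=\bpsi(\bx')$. As in the proof of Theorem~\ref{thm:LC-alternative}, the burst turns $\by$ into $\by'$ either by deleting a prefix of $t$ bits (only if $m=i=1$) or by replacing the window $\by_{[i-1,\,i+t-1]}$ of $t+1\le 3$ consecutive bits by their XOR; in the latter case $\by$ and $\by'$ agree on coordinates $1,\dots,i-2$, which contain $1,\dots,m-2$. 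Write $\by=\ba\bc$ and $\by'=\ba\bc'$ with $\ba=\by_{[1,m-2]}$, so $\ba$ is untouched; separating the positional weight gives $\VT(\by)-\VT(\by')=(m-2)\,\delta+\bigl(\VT(\bc)-\VT(\bc')\bigr)$ with $\delta:=\wt(\by)-\wt(\by')$, where $\VT$ on $\bc,\bc'$ restarts indices at $1$, and $\bc'$ is obtained from $\bc$ by the same window-XOR at offset $j:=i-m+1\in[1,P]$. Since this change occupies only the first $P+2$ coordinates of $\bc$, we may further write $\bc=\bz\br$, $\bc'=\bz'\br$ with $|\bz|=P+2$ and $\br$ a common known suffix, so $\VT(\bc)-\VT(\bc')=\bigl(\VT(\bz)-\VT(\bz')\bigr)+t\,\wt(\br)$. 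A burst changes $\wt(\bpsi(\cdot))$ by $0$ or $2$ (window-XOR) or by $0$, $1$ or $2$ (prefix deletion), so $\delta\in\{0,1,2\}$ and the decoder recovers $\delta$ from $\delta\equiv d-\wt(\by')\pmod 3$; it then computes $\VT(\bz)-\VT(\bz')\bmod 2P$ from $c,m,t,\wt(\br),\wt(\by')$.

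It remains to reconstruct $\bz$ from $\bz'$, the residue $\VT(\bz)-\VT(\bz')\bmod 2P$, $\wt(\bz')$, $\delta$ and $t$; this is the heart of the proof. The change on $\bz$ is one of the window-XOR patterns enumerated in the proof of Theorem~\ref{thm:LC-alternative}, and $(t,\delta)$ together with the parity of $\VT(\bz)-\VT(\bz')$ selects the pattern, exactly as the comparison of $\Delta$ with $w$ did there: $\delta$ separates the weight-preserving patterns ($\delta=0$), the $11\!\to\!0$-type ($\delta=2$) and prefix deletions ($\delta=1$), and the parity settles the remaining sub-cases. Within a pattern a candidate is given by the offset $p$ of the altered block, and only the offsets $p\in[1,P]$ whose matching coordinate of $\bz'$ equals the dictated XOR bit need be tested; for those, $\VT(\bz)-\VT(\bz')$ is an affine-in-$p$ expression of the form $\alpha p+\beta+\gamma\,\wt\bigl(\bz'_{[p+1,\,\cdot\,]}\bigr)$ with $\alpha\in\{1,2\}$. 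Over the valid offsets this expression is monotone, and the partial cancellation between $\alpha p$ and the suffix-weight term keeps its total variation over $p\in[1,P]$ strictly below $2P$; hence its residue modulo $2P$ determines its integer value, which determines $\bz$ (offsets producing the same value insert the same bit inside a run and hence give the same $\bz$). Finally, $\bx=\bpsi^{-1}(\by)$ and $i=j+m-1$ are read off.

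The step I expect to be the main obstacle is the case-by-case verification in the previous paragraph: that $(t,\delta,\text{parity})$ pins down the pattern for every one of the at most three-bit window patterns, and that restricting $p$ to the length-$P$ interval $[1,P]$ (this is where the approximate-location hypothesis is used), combined with the cancellation above, forces the total variation of the syndrome difference to be strictly below the modulus $2P$ so that no modular wraparound occurs. The boundary situation $m=1$ with a prefix deletion is handled directly: prepending bits $b_1,\dots,b_t\in\{0,1\}$ to $\by'$ changes $\VT$ by $t\,\wt(\by')+\sum_{k\le t}k\,b_k$, and $b_1,b_2$ are then recovered from $\delta=b_1+b_2$ and $\sum_k k b_k\bmod 2P$ (no wraparound, since $\sum_k kb_k\le 3<2P$).
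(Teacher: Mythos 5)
Your proposal is correct and follows essentially the same route as the paper's proof: the same reduction of the burst in $\bx$ to a window-XOR on $\by=\bpsi(\bx)$, the same case separation by the number of deletions, the weight change $\delta\bmod 3$, and the parity of the VT difference, and the same uniqueness mechanism (the candidate syndrome difference varies by at most $2P-2<2P$ over the length-$P$ window, so no modular wraparound, with ties occurring only inside runs where all candidates yield the same reconstruction). Your explicit localization $\by=\ba\bz\br$ is just a repackaging of the paper's use of the known suffix count $r_1$, and the only blemishes are cosmetic (the coefficient $\alpha$ is $0$ in the weight-preserving cases, and the indexing of $\ba$ needs adjusting when $m=1$).
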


\begin{proof}
For a codeword $\bx$, let $\bx'$ be obtained from $\bx$ after a deletion of a single symbol or two adjacent symbols. Also, let $\by = \bpsi(\bx)$ and $\by'= \bpsi(\bx')$. Further, let $\Delta = \VT(\by)-\VT(\by')$ and $\Delta_w = \wt(\by)-\wt(\by')$. We can determine the number of deleted bits based on the length of $\bx'$. We show that from $\by'$ we can recover $\by$, and then recover $\bx$, since $\bpsi$ is invertible. 

Let either $x_i$ or $x_ix_{i+1}$ be deleted from $\bx$. For uniqueness, in the former case we assume $x_i\neq x_{i+1}$ and in the latter $x_i\neq x_{i+2}$. Similar to the proof of Theorem~\ref{thm:LC-alternative}, the error in $\bx$ can affect $\by$ in the following ways:
\begin{itemize}
    \item If the first one or two symbols of $\bx$ are deleted, then the first one or two symbols of $\by$ are deleted, respectively.
    \item If $x_i, i \in \{2,3,\ldots,n\}$ is deleted, then $y_{i-1}y_{i}$ is replaced by $y_{i-1}\oplus y_{i}$. The possible error patterns are: $01\to 1, 10\to 1, 11\to 0$. That is, a 0 is deleted from position $i-1$ or $i$, or an 11 in position $i-1$ is replaced with 0.
    \item If $x_{i}x_{i+1}, i \in \{2,3,\ldots,n-1\}$ are deleted, then $y_{i-1}y_{i}y_{i+1}$ is replaced by $y_{i-1}\oplus y_{i}\oplus y_{i+1}$. The error patterns are: $000\to 0, 001\to 1, 010\to 1, 011\to 0,100\to 1, 101\to 0, 110\to 0, 111\to 1$. That is a 00, an 11, two 1s around a 0, or two 0s around a 1 are deleted. 
\end{itemize}
From this point on, we will consider the errors in $\by$ only. Let $p$ denote the position of the first deleted or replaced bit in $\by$ (where $\by = \bpsi(\bx)$ and $\by'= \bpsi(\bx')$). Furthermore, let $R_1, L_1$ be the number of 1s on the right and the left of the altered bits, respectively. Similarly, define $R_0, L_0$. Let $w$ denote the weight of $\by'$. For example, for $\by = 010\boldsymbol{1}\boldsymbol{0}\boldsymbol{1}00\to \by = 010\boldsymbol{0}00$, we have $p=4, L_0=2, L_1 = 1, R_0 = 2, R_1 = 0, w = 1$. From the previous discussion, we have $p\in[\max(m-1,1),m+P-1]$.

First, we consider $|\by'|=|\by|-1$. We find below how $\Delta$ changes. The cases can be distinguished based on the change in $\Delta_w$. 
\begin{enumerate}
    \item If a 0 is deleted, we have $\Delta_w=0$. Then $\Delta=R_1$ and $p\in\{i-1,i\}$ where $\{i-1,i\} \in[\max(m-1,1),m+P-1]$. Let $r_1$ be the number of 1s on the right of position $m+P-1$ in $\by$, exclusive. Then, $r_1\le R_1\le r_1+P$. Note that $r_1$ equals the number of 1s on the right of the position $m+P-2$ in $\by'$ and thus it is known. Let another true value of $R_1$ satisfy the equation $\Delta\equiv R_1'(\bmod 2P)$. Hence, $R_1\equiv R_1'$. But this is impossible as $0\le R_1-R_1'\le P$ since $R_1\in[r_1,r_1+P]$.
    \item If a 1 is deleted, we have $\Delta_w=1$. This can only be the first element of $\by$.
    \item If an 11 is replaced with a 0, we have $\Delta_w=2$. Then $\Delta = 2p+1+R_1$ and $p\in[m-1,m+P-2]$. Suppose on the contrary that $p' > p$ denotes the location of the deletion where $p'$ satisfies the equation $\Delta\equiv 2p'+1+R_1'(\bmod 2P)$. Hence, $2(p'-p)\equiv R_1-R_1'$. But this is impossible as $0\le R_1-R_1'\le p'-p\le P-1$.
\end{enumerate}

Next, we consider $|\by'|=|\by|-2$: The following cases can be distinguished based on the change in $\Delta_w$ and the parity of $\Delta \bmod 2P$.
\begin{enumerate}
    \item If $010\to 1$ occurs, we have $\Delta_w=0$. Then $\Delta = 2R_1+1$ and $p\in[m-1,m+P-3]$. Let another true value of $R_1$ satisfy the equation $\Delta\equiv 2R_1'+1(\bmod 2P)$. Hence, $R_1\equiv R_1'$. But this is impossible as $0\le R_1-R_1'\le P-2$.
    \item If 00 is deleted, we have $\Delta_w=0$. Then $\Delta = 2R_1$ and $p\in\{i-1\}\in[m-1,m+P-3]$ or $p\in\{i\}\in[m,m+P-2]$. Note that in this case $\Delta$ is even whereas in Case 1, $\Delta$ was odd. Also, $\Delta \bmod 2P$ has the same parity as $\Delta$, which helps us to distinguish Case 1 and Case 2. The proof of this item is the same as in Case 1. 
    \item If $10$ or $01$ is deleted from the beginning, we have $\Delta_w=1$. Then $\Delta=2R_1+1=2w+1$ or $\Delta = 2w+2$, respectively. When $\Delta_w=1$, prepend $10$ or $01$ depending on the parity of $\Delta$. The same as in previous cases, $\Delta \bmod 2P$ has the same parity as $\Delta$.  
    \item If 11 is deleted, we have $\Delta_w=2$. Then $\Delta = 2p+1+2R_1$ and $p\in\{i-1\}\in[m-1,m+P-3]$ or $p\in\{i\}\in[m,m+P-2]$. Suppose, on the contrary there exists a $p'>p$ that satisfies the equation $\Delta\equiv 2p'+1+2R_1'(\bmod 2P)$. Hence, $p'-p\equiv R_1-R_1'$. But since $0\le R_1-R_1'\le p'-p\le P-2$, the only possible ambiguous pattern is that the symbols between $p$ and $p'$ are all 1. Therefore, the two sequences that result from inserting $11$ in position $p$ or in position $p'$ are equivalent. %Further, since the deletion in this case in $11$, hence $p$ and $p'$ in this ambiguous pattern are equivalent.
    \item If $101\to 0$ occurs, we have $\Delta_w=2$. Then $\Delta = 2p+2+2R_1$ and $p\in\{i-1\}\in[m-1,m+P-3]$. Note that in this case $\Delta$ is even whereas in the previous case it was odd and $\Delta \bmod 2P$ has the same parity as $\Delta$. The same as Case 4, the only possible ambiguous pattern is that the symbols between $p$ and $p'$ are all 1. Further, since the error pattern in this case is $101\to 0$, hence this ambiguous pattern is impossible. 
\end{enumerate}

The redundancy follows from the fact that there are $6P$ options for choosing $c,d$.
\iffalse
In all cases, $\Delta\le 2(n-2)+3=2n-1$, can be achieved in case 5. The case can be distinguished based on the based on the comparison of $\Delta$ with $w,w+3$ and its parity. To recover $\by$: In case 1, we insert two 0s on both sides of the $(R+1)$th 1 from the right. In case 2, insert two 0s on the left of the $R_1$th 1. In case 3, 10 or 01 must be prepended, depending on the parity of $\Delta$. In case 4, noting that $\Delta=2w+2L_0+3$, we insert 11 immediately after the $L_0$th 0. Finally, in case 5, we insert two 1s on the sides of the $(L_0+1)$th 0, where $L_0 = \Delta/2-w-2$.
\fi
\end{proof}

As mentioned before, a similar construction shows in \cite{schoeny2017codes} for a code that can correct a single deletion given its approximate location. The difference of construction between our $P$-bounded code with their result is that the parity constraint is now $\bmod 3$ rather than $\bmod 2$. 

The proof of our construction follows from 
Theorem~\ref{thm:LC-alternative} via the case-by-case discussion. The codes in \cite{schoeny2017codes} identify the change in the syndrome in a substring of bounded length. We also present the proof for our $P$-bounded code by checking the change in the syndrome based on runs in a substring of bounded length in \cite{wang2021non}.

\subsection{Construction for correcting a burst of at most 2 deletions}\label{section:finalcons}

\begin{theorem}
Let $q$ be an even integer. For all $a\in[[2n]]$, $c_i\in[[2(\lceil \log n\rceil+5)]]$ and $d_i\in[[3]]$, %the code %$\cC(n)$ 
\begin{align*}
    \cC_{2B}(a,c_i, d_i,P,n) =\Big \{&\bu\in \Sigma_q^n: A(\bu)_1\in \plld\cC_L(a,n,\lceil \log n\rceil+5),\quad \\
    &A(\bu)_i\in \cC_L(c_i,d_i,\lceil \log n\rceil+5,n), \forall i \in \left [2,\lceil\log q\rceil \right] \Big\}
\end{align*}
is a $2$-burst-error-correcting code. Furthermore, there exists choices for $a,c_i,d_i$ such that the redundancy is at most $\log n + \log q(\log(\lceil\log n\rceil+5)+\log 6)+3$.
%$\log n + (\lceil \log q\rceil-1)(\log(\lceil\log n\rceil+6)+\log 6)+3$.
\end{theorem}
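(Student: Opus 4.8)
The plan is to verify the two claimed properties — that $\cC_{2B}$ corrects a burst of at most $2$ deletions, and the redundancy bound — by assembling the pieces already established in the previous subsections. First I would set up the decoding argument. Let $\bu\in\cC_{2B}$ be a codeword, and let $\bu'\in D_{\le 2}(\bu)$ be the received word. Translate through the map $A(\cdot)$ of Subsection~\ref{section:mapping}: a burst of at most $2$ deletions in $\bu$ deletes the same (at most $2$) adjacent columns of $A(\bu)$, so $A(\bu)_i'$ is obtained from $A(\bu)_i$ by the same burst for every row $i$. Since $\lceil\log q\rceil\ge 1$, the first row $A(\bu)_1$ is a codeword of $\plld\cC_L(a,n,\lceil\log n\rceil+5)$; by the decoder of the (pattern-length-limited) Levenshtein code, we can recover $A(\bu)_1$ exactly from $A(\bu)_1'$. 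Moreover, because $A(\bu)_1\in\pll(\lceil\log n\rceil+5,n)$, the Levenshtein decoder pins the location of the burst down to a substring of period $2$ of length at most $\lceil\log n\rceil+5$; that is, the index $m$ of the first possibly-deleted column is known up to an interval of length $P\deq\lceil\log n\rceil+5$. This is exactly the positional side-information required by the $P$-bounded Levenshtein code of Subsection~\ref{section:Pbound}.

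Next I would apply that $P$-bounded decoder row-by-row. For each $i\in[2,\lceil\log q\rceil]$, the row $A(\bu)_i$ lies in $\cC_L(c_i,d_i,\lceil\log n\rceil+5,n)$, and we have determined an interval of length $P$ containing the first deleted position (the same interval for all rows, coming from row $1$). The number of deleted columns, $1$ or $2$, is known from the length of $\bu'$. Hence Theorem on the $P$-bounded code applies and recovers $A(\bu)_i$ from $A(\bu)_i'$ for each such $i$. Having recovered every row of $A(\bu)$, we recover $A(\bu)$ itself and therefore $\bu$, since the column-wise binary-expansion map is a bijection. This shows $D_{\le 2}(\bu)$ determines $\bu$, i.e., for distinct codewords $\bu,\bv$ we have $D_{\le 2}(\bu)\cap D_{\le 2}(\bv)=\emptyset$, which is the definition of a $2$-burst-error-correcting code.

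For the redundancy, I would use the standard averaging/pigeonhole argument over the free parameters. The constraints partition $\Sigma_q^n$ (or rather the relevant subset) according to the tuple $(a,(c_i)_{i=2}^{\lceil\log q\rceil},(d_i)_{i=2}^{\lceil\log q\rceil})$. The first coordinate ranges over a set whose size, by the redundancy-$3$ bound on $\plld\cC_L$ coming from Lemma~\ref{lem:L2} and the averaging in the $\plld\cC_L$ lemma, costs at most $\log n+3$ bits; each of the $\lceil\log q\rceil-1$ remaining rows contributes at most $\log P+\log 6=\log(\lceil\log n\rceil+5)+\log 6$ bits by the $P$-bounded code's redundancy bound. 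Summing, there is a choice of the parameter tuple for which
\[
\log\frac{q^n}{|\cC_{2B}(a,c_i,d_i,P,n)|}\le \log n+3+(\lceil\log q\rceil-1)\bigl(\log(\lceil\log n\rceil+5)+\log 6\bigr),
\]
which is at most $\log n+\log q\,(\log(\lceil\log n\rceil+5)+\log 6)+3$, as claimed. One technical point to be careful about here is that the per-row bounds are stated for a single binary code of length $n$ while rows share the same length and the constraints on different rows are independent, so the counts genuinely multiply; I would make this explicit by noting that fixing the first row's constraint value and then, independently, each subsequent row's pair $(c_i,d_i)$ refines a product structure.

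The main obstacle I anticipate is not in any one step but in making the interface between the first row and the remaining rows airtight: one must check that the ``interval of length $\lceil\log n\rceil+5$'' that the pattern-length-limited Levenshtein decoder produces really is the interval (of first-deleted-position indices) that the $P$-bounded theorem expects as input — including the edge behaviour when the burst is at the very beginning of $\bu$ (so $p$ may be forced to $1$, i.e.\ the interval is $[\max(m-1,1),m+P-1]$) and when it is length $1$ versus length $2$. These are exactly the cases the $P$-bounded theorem was proved to handle, so the argument reduces to citing it with the correct value $P=\lceil\log n\rceil+5$ and the correct $m$; once that bookkeeping is done the result follows.
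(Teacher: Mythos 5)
Your proposal is correct and follows essentially the same route as the paper's proof: map the burst to column deletions in $A(\bu)$, use the PLL Levenshtein code on the first row to both correct it and localize the burst to an interval of length $\lceil\log n\rceil+5$, then invoke the $P$-bounded Levenshtein code on the remaining rows, with the redundancy obtained by pigeonhole over the parameter tuple. Your write-up is in fact more explicit than the paper's (which compresses the decoding and omits the averaging computation), and your attention to the interface between the PLL decoder's output interval and the $P$-bounded theorem's input is a sound extra check rather than a deviation.
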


\iftrue
\begin{proof}
If a burst of at most 2 deletions occurs in $\bu\in\cC_{2B}(a,c_i,d_i,P,n)$, at most 2 adjacent columns are deleted in the binary matrix $A(\bu)$. 
%The first row $A(\bu)_1$ of $A(\bu)$ belongs to a $\pll\cB_a(n,\lceil \log n\rceil+5)$ code. 
Since the LSB of the binary representation of $u\in\Sigma_q$ can take any value regardless of the other bits, the PLL constraint can be applied to $A(\bu)_1$. The decoder for $\plld\cC_L(a,\lceil \log n\rceil+5,n)$ can insert the deleted bits into $A(\bu)_1$. Since any substring with period 2 in $A(\bu)_1$ has length at most $\lceil \log n\rceil +5$, we can find an interval of length at most $\lceil \log n\rceil +5$ in which the deletion has occurred. Then, $\cC_L(c_i,d_i,\lceil \log n\rceil+5,n)$ can correct the deletions in the remaining $\lceil\log q\rceil-1$ rows of $A(\bu)$. %The redundancy follows from the number of choices for $a,c_i,d_i$.
\end{proof}
\fi
The code has redundancy $\log n+O(\log q\log\log n)$ for even $q$. We note that a lower bound on the redundancy of the $q$-ary code correcting a burst of 2 deletions, which is also a lower bound for the redundancy of correcting a burst of \textit{at most} 2 deletions, is $\log n+O(\log q)$ in Theorem~\ref{thm:nonasymptotic}.

\iffalse
\begin{corollary}
For integers $n$, $0\le a<2n$, $0\le c<2(\lceil \log n\rceil+6)$ and $0\le d<3$, there exists a code $\cC_{a,c,d}(n)$ with redundancy at most $\log n + (\lceil \log q\rceil-1)(\log(\lceil\log n\rceil+6)+\log 6)+3$.
\end{corollary}
\fi

{\textit{Remark:}} To get the systematic construction of the non-binary code for correcting a burst of at most 2 deletions, the encoding and decoding process can be outlined as the following:
\begin{itemize}
    \item {\textbf{Encoding:}} Suppose the syndrome of Levenshtein code $C_L(a,n)$ is $H_{cor,2}(\bx)$. The $i$th row of the binary representation matrix $A(\bu)$ can be expressed as $\left(A(\bu)_i,0,0,1,H_{cor,2}(A(\bu)_i)\right)$ of length $N$.
    \item {\textbf{Decoding:}} Denote the length of received sequence $\bu'$ as $N'$ and let $m=n+3-(N-N')$. First, we need to determine the value of $A(\bu')_{[1,m]}$, where $A(\bu')_{[1,m]}$ denote the element in the 1st row and $m$th column of the matrix $A(\bu')$:
    \begin{enumerate}
    \item If $A(\bu')_{[1,m]}=1$, this means the deletion occurs in information bits because the 1 in the protecting bits has shifted $N-N'$ positions to the left. Then, $A(\bu)_i$ can be recovered by their corresponding syndromes. 
    \item If $A(\bu')_{[1,m]}=0$, this means the deletion occurs in the syndrome part because the 1 in the protecting bits does not shift. Then, we can directly get $\bu$.
\end{enumerate}
\end{itemize}
The total redundancy of this systematic construction is $\log q\log n+O(\log q)$, which is much larger than our construction proposed in this section. As will be discussed in Section~\ref{sec:conclusion}, the design of a systematic encoding scheme for a low-redundancy non-binary $2$-burst-error-correcting code remains a direction for future work.

\section{Non-binary $t$-burst-error-correcting codes}\label{sec:noncort}

In this section, we construct non-binary $t$-burst-error-correcting codes. We will employ the same mapping from non-binary to binary symbols that was used in the previous section whereby we will interpret our length $n$ codewords as $\lceil\log q\rceil \times n$ binary matrices. In this setting, rather than setting the first row of this matrix to be a binary code capable of correcting a burst of $2$ deletions, we will instead set the first row of our codeword matrix to be a code capable of correcting a burst of $t$ deletions. To this end, we will leverage the construction from \cite{lenz2020optimal}, as described in detail in Section~\ref{sec:noncort_loc}. 

Suppose that a burst of length at most $t$ occurs to one of our codewords under the assumption that the first row of each of our codewords belongs to a code capable of correcting a burst of at most $t$ deletions. If we also require that the first row of our codeword matrix is a so-called $(\bw,\delta)$-dense string, then it can be shown that we can approximately determine the location of the burst of deletions (to within roughly $O(\log n)$ positions). It then remains to correct the deletions in each of the remaining rows using this knowledge. A straightforward approach is to introduce a set of roughly $\log n$ parity constraints, which implies the redundancy of this naive approach is much larger than $\log n$. A better approach, which requires only $\log n + O(\log q \log \log n)$ bits of redundancy, is to first partition each row of our codeword matrix into blocks of size $O(\log n)$ and then for each row to enforce two parity constraints on the syndrome of each of these blocks. This approach requires only at most $O(\log \log n)$ additional bits of redundancy for each row of the codeword matrix, which is significantly less than the naive approach. This portion of the construction is explained in more detail in Section~\ref{sec:noncort_p}.

\subsection{Locating the burst of deletions}\label{sec:noncort_loc}

%\rcomment{We should explicitly mention hear (again) that we use use the same construction as [7] and (more importantly) we introduce an explicit encoding method which is not given in [7].}

In this subsection, we utilize the code in Construction 1 from \cite{lenz2020optimal} to approximately determine the location of a burst of at most $t$ deletions (to within roughly $O(\log n)$ positions). Further, we provide an explicit encoding method which is not given in \cite{lenz2020optimal}, for the key step of this construction: encoding the binary sequence into a so-called $(\bw,\delta)$-dense string. 

Denote the indicator vector of the pattern $\bw$ as $\mathbbm{1}_{\bw}$:
\begin{equation}
    \mathbbm{1}_{\bw}(\bx)_i=\begin{cases}
    1, &{\text{if}}\; \bx_{[i,i+|{\bw}|-1]}=\bw \\
    0, &{\text{otherwise}}
    \end{cases}
\end{equation}
Let $n_{\bw}$ be the number of ones in $\mathbbm{1}_{\bw}$ and define $\balpha_{\bw}(\bx)$ to be a vector of length $n_{\bw}+1$ whose $i$-th entry is the distance between positions of the $i$-th and $(i+1)$-th 1 in the string $(1,\mathbbm{1}_{\bw},1)$.

Fix the pattern $\bw=\mathbf{0}^t\mathbf{1}^t$ and $\delta=t2^{2t+1}\lceil\log n\rceil$. We now introduce the set of $(\bw, \delta)$-dense strings:
\begin{align*}
    \cD_{\bw, \delta}(n) = \left\{ \bx \in \Sigma_2^n : \balpha_{\bw}(\bx)_i \leq \delta, \forall i \in [\left| \balpha_{\bw}(\bx) \right|]   \right \}.
\end{align*}
Since the length of $\bw$ is $2t$ and its occurrences are non-overlapping,  every component in $\balpha_{\bw}(\bx)$ has value at least $2t$. Furthermore, the number $n_\bw$ of patterns in $\bx$ is at most $n_{\bw}\leq\frac{|\bx|}{2t}$.

\begin{lemma} (\cite[Lemma 1]{lenz2020optimal})
For any $n\ge 5$, the number of $(\bw,\delta)$-dense strings of length $n$ is at least
\begin{equation*}
    |\cD_{\bw, \delta}(n)|\ge 2^n(1-n^{1-\log e})\ge 2^{n-1}
\end{equation*}
\end{lemma}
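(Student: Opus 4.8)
The plan is to bound the number of strings of length $n$ that are \emph{not} $(\bw,\delta)$-dense and show this number is small relative to $2^n$. A string $\bx$ fails to be in $\cD_{\bw,\delta}(n)$ precisely when some component of $\balpha_{\bw}(\bx)$ exceeds $\delta$, i.e.\ there is a window of $\delta$ consecutive positions containing no occurrence of the pattern $\bw=\mathbf{0}^t\mathbf{1}^t$ (with the convention given by the padding $(1,\mathbbm{1}_{\bw},1)$, so a long stretch at either end also counts). First I would fix a starting position $j$ and estimate the probability, under the uniform measure on $\Sigma_2^n$, that the substring $\bx_{[j,j+\delta-1]}$ avoids $\bw$ everywhere.

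The key estimate is a standard independence-blocking argument: partition the window of length $\delta$ into $\lfloor \delta/(2t)\rfloor$ disjoint blocks of length $2t$. The pattern $\bw$ occurs in $\bx_{[j,j+\delta-1]}$ if it occurs aligned to any one of these disjoint blocks, and because the blocks are disjoint the corresponding $2t$-bit chunks are mutually independent and each equals $\mathbf{0}^t\mathbf{1}^t$ with probability $2^{-2t}$. Hence the probability that \emph{no} block equals $\bw$ is at most $(1-2^{-2t})^{\lfloor \delta/(2t)\rfloor}$. Plugging in $\delta = t2^{2t+1}\lceil\log n\rceil$, we get $\lfloor \delta/(2t)\rfloor \geq 2^{2t}\lceil\log n\rceil - 1 \geq 2^{2t}\log n - 1$ roughly, so this probability is at most $(1-2^{-2t})^{2^{2t}\log n - 1} \leq e^{-2^{-2t}(2^{2t}\log n - 1)} \le e^{1}\cdot e^{-\log n} = e\cdot n^{-\log e}\cdot\ldots$; after tracking constants carefully one obtains a bound of order $n^{-\log e}$ times a small constant, and then a union bound over the at most $n$ choices of window starting position $j$ gives that the fraction of non-dense strings is at most $n\cdot n^{-\log e}\cdot(\text{const}) = n^{1-\log e}\cdot(\text{const})$. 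Choosing the block count and constants precisely (the factor $2$ in $\delta = t2^{2t+1}\lceil\log n\rceil$ is exactly what absorbs the $e$ and the union-bound factor of $n$) yields $|\cD_{\bw,\delta}(n)| \geq 2^n(1-n^{1-\log e})$.

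For the second inequality $2^n(1-n^{1-\log e}) \geq 2^{n-1}$, it suffices to check $n^{1-\log e} \leq 1/2$, equivalently $n^{\log e - 1} \geq 2$, i.e.\ $(\log e - 1)\log n \geq 1$. Since $\log e - 1 \approx 0.4427$, this holds once $\log n \geq 1/(\log e - 1) \approx 2.26$, i.e.\ $n \geq 5$ (as $\log 5 \approx 2.32$), which is exactly the hypothesis $n\ge 5$.

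The main obstacle I expect is pinning down the constants in the union-bound/blocking step so that the clean bound $2^n(1-n^{1-\log e})$ comes out exactly rather than with an extra multiplicative constant; this requires being careful about (i) exactly how many disjoint length-$2t$ blocks fit in a window of length $\delta$ (floor issues), (ii) how the padding ones at the two ends of $(1,\mathbbm{1}_{\bw},1)$ interact with windows near the boundary of $\bx$, and (iii) using $(1-2^{-2t})^{2^{2t}} \le e^{-1}$ together with the slack provided by the extra factor of $2$ in the definition of $\delta$. Since this lemma is quoted verbatim from \cite[Lemma~1]{lenz2020optimal}, in the paper itself I would simply cite it; the sketch above is how one would reconstruct the proof if needed.
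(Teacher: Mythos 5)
Your argument is correct and is essentially the same as the paper's (the lemma itself is quoted from the cited reference, but the paper reproves the identical statement for balanced dense strings with exactly this window-plus-union-bound argument). The constant-tracking you worry about is a non-issue: $\delta/(2t)=2^{2t}\lceil\log n\rceil$ is an exact integer, so partitioning each window into disjoint length-$2t$ blocks gives $(1-2^{-2t})^{2^{2t}\lceil\log n\rceil}\le e^{-\lceil\log n\rceil}\le n^{-\log e}$ with no floor corrections or stray factors of $e$, and the union bound over the at most $n-\delta+1\le n$ window positions contributes precisely the factor $n$ appearing in $n^{1-\log e}$.
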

%\rcomment{I think this is the same as the result in [7]. In that case, using c.f. is not appropriate since it means "compare" rather than providing a reference for the lemma.}

\begin{lemma}\label{lem:locdels_qary}(\cite[Construction 1]{lenz2020optimal})\label{lem:loc_del}
For any integers $c_0\in[[4]]$ and $c_1<[[2n]]$, let
\begin{equation*}
    \cC_{loc}(c_0,c_1,n) = \{ \bx\in \cD_{\bw, \delta}(n): \quad
    n_{\bw}(\bx) \equiv c_0 \pmod{4}, \quad
    \VT(\balpha_{\bw}(\bx))  \equiv c_1 \pmod {2n}\}.
\end{equation*}
The code $\cC_{loc}(c_0,c_1,n)$ with redundancy $\log n+4$ is capable of locating the burst of deletions to an interval of length at most $\delta=t2^{2t+1}\lceil\log n\rceil$.
\end{lemma}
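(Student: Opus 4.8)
The plan is to establish two things: that $\cC_{loc}(c_0,c_1,n)$ indeed locates the burst within an interval of length $\delta$, and that for a suitable choice of $(c_0,c_1)$ its redundancy is at most $\log n+4$. The redundancy bound is the easy part and I would dispatch it first by averaging: there are $4\cdot 2n=8n$ choices of the pair $(c_0,c_1)$, the sets $\cC_{loc}(c_0,c_1,n)$ partition $\cD_{\bw,\delta}(n)$, and the preceding lemma gives $|\cD_{\bw,\delta}(n)|\ge 2^{n-1}$; hence some pair yields $|\cC_{loc}(c_0,c_1,n)|\ge 2^{n-1}/(8n)$, i.e.\ redundancy at most $\log(16n)=\log n+4$.

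For the locating claim, fix a codeword $\bx$ and a received word $\bx'\in D_{\le t}(\bx)$ produced by a burst of $t'\le t$ deletions; the decoder learns $t'$ from $|\bx'|=n-t'$ and computes $\mathbbm{1}_{\bw}(\bx')$, $n_{\bw}(\bx')$ and $\balpha_{\bw}(\bx')$. The structural observation driving everything is that, because $|\bw|=2t$ and $t'\le t$, a burst of deletions perturbs $\mathbbm{1}_{\bw}$ only inside an $O(t)$-window around the deleted positions: an occurrence of $\bw$ that neither overlaps the deleted region nor straddles the seam left by the burst keeps its content and, if it lies to the left of the burst, its position, so $\balpha_{\bw}(\bx)$ and $\balpha_{\bw}(\bx')$ agree on a prefix of coordinates and agree on a suffix up to a global index shift equal to the net change $\ell=n_{\bw}(\bx)-n_{\bw}(\bx')$ in the number of markers. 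Since the markers occupy pairwise disjoint length-$2t$ blocks, a burst of length $\le t$ can destroy at most two of them and can create at most one new marker across the seam, so $\ell\in\{-1,0,1,2\}$; therefore $\ell$ is recovered \emph{exactly} from $c_0\equiv n_{\bw}(\bx)\pmod 4$ together with the known value $n_{\bw}(\bx')$.

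It then remains to identify the index $k$ of the first perturbed coordinate of $\balpha_{\bw}$, because once $k$ is known the decoder can point to the $k$-th marker of $\bx'$ and, by $(\bw,\delta)$-density (consecutive markers, and the two ends, are within distance $\delta$), conclude that the deleted positions lie in an interval of length at most $\delta$. Here I would use the Varshamov--Tenengol'ts constraint on $\balpha_{\bw}$: the decoder knows $\balpha_{\bw}(\bx')$, knows $\ell$, knows that the two $\balpha$-vectors differ only on a window of bounded width $O(t)$, and knows $\sum_i\balpha_{\bw}(\bx)_i=n-2t+2$; expanding $c_1-\VT(\balpha_{\bw}(\bx'))\bmod 2n$ and cancelling the common prefix and shifted suffix leaves an equation in which the only quantity varying with $k$ is a bounded window contribution plus $\ell$ times the suffix-sum $\sum_{i>k}\balpha_{\bw}(\bx')_i$; this suffix-sum is monotone in $k$, and since every entry of $\balpha_{\bw}$ is at most $\delta$ no $2n$-wraparound ambiguity arises, so the equation pins $k$ down. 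This is precisely the argument behind Construction~1 of \cite{lenz2020optimal}, which we may simply invoke; the genuinely new ingredient needed in our setting is an explicit encoder into $\cD_{\bw,\delta}(n)$, which is supplied separately.

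The main obstacle, and the part that deserves the most care, is the combinatorial bookkeeping in the last two paragraphs: pinning down the exact width of the perturbed $\balpha$-window and the exact range of $\ell$, verifying that the modulo-$2n$ comparison of VT syndromes is unambiguous, and handling cleanly the bursts that occur at the very beginning or end of $\bx$, where the padding $1$'s in $(1,\mathbbm{1}_{\bw}(\bx),1)$ enter the accounting. None of this is conceptually deep, but it is where a careless write-up can fail.
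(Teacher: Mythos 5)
Your proposal is correct. The paper itself offers no proof of this lemma---it simply cites Construction~1 of Lenz and Polyanskii---and your sketch is a faithful reconstruction of exactly that cited argument (the averaging bound over the $8n$ choices of $(c_0,c_1)$ for the redundancy, the observation that a burst of length at most $t$ changes the marker count by an element of $\{-1,0,1,2\}$ so that the modulo-$4$ constraint recovers the change exactly, and the VT syndrome on the gap vector $\balpha_{\bw}$ combined with $(\bw,\delta)$-density to localize the burst to an interval of length $\delta$), so it is consistent with the paper's treatment.
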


Next, we will provide an explicit construction for encoding binary sequences into $(\bw,\delta)$-dense strings. The key idea of the encoding is to identify a substring of length $\delta$ in the input string $\bx$ that does not include a pattern $\bw$, and each such substring will be removed from $\bx$. Then, a compressed version of the removed substring is appended to the end of the input string along with a pattern. We first demonstrate that all substrings of length $\delta$ without a pattern can be compressed.

\begin{proposition}
Let $\cS$ be the set of strings of length $\delta$ that do not contain a pattern $\bw=\mathbf{0}^t\mathbf{1}^t$. Then, every string $\bs\in\cS$ can be compressed into a string $g(\bs)\in \Sigma_2^{\delta-\lceil\log n\rceil-4t-2}$, where function $g$ is an invertible map such that $g$ and $g^{-1}$ can be computed in $O(\delta)$ time. 
\end{proposition}

The compression works as follows. Split the string $\bs$ into $2^{2t}\lceil\log n\rceil$ substrings with each length of $2t$. Each substring can be represented by a symbol from the alphabet of size $2^{2t}-1$ since no substring can be equal to $\mathbf{0}^t\mathbf{1}^t$. In other words, the string $\bs$ can be represented by a string $\bv$ consisting $2^{2t}\lceil\log n\rceil$ symbols from the the alphabet of size $2^{2t}-1$. The number of bits $n_{\bv}$ required to represent $\bv$ is
\begin{align*}
    n_{\bv} &=\lceil\log \left(2^{2t}-1\right)^{2^{2t}\lceil\log n\rceil}\rceil \\
    &=\lceil\log \left(1-2^{-2t}\right)^{2^{2t}\lceil\log n\rceil}\rceil+\left((2t) 2^{2t}\lceil\log n\rceil\right) \\
    & \stackrel{(a)}{\leq} \lceil\log n\rceil \log \left(\frac{1}{e}\right)+1+\delta \\
    & \leq \delta-1.4\lceil\log n\rceil+1 \\
    & \stackrel{(b)}{\leq} \delta-\lceil\log n\rceil-4t-2
\end{align*}
where (a) follows from the fact that for the function $(1-1/x)^x$ is increasing in $x$ for $x>1$ and $\lim_{x\rightarrow\infty}(1-1/x)^x=1/e$, and (b) follows from $0.4\lceil\log n\rceil\ge 4t+3$ for large value $n$. This compression 
contains a conversion from binary to decimal, so the total complexity of function $g$ is $O(\delta)$.

\begin{algorithm}[t]

\label{alg:pattern_dense}
\SetAlgoLined
\KwInput{Sequence $\bx\in \Sigma_2^n$}
\KwOutput{Encoded sequence $E_{\bw, \delta}(\bx)$}

\textbf{Initialization:} Let $E_{\bw, \delta}(\bx)=\bx$ and $n'=n$. Append $\left(\mathbf{0}^t\mathbf{1}^t\mathbf{0}^t\mathbf{1}^t\right)$ at the end of the sequence $E_{\bw, \delta}(\bx)$.

\textbf{Step 1:} If there exists an integer $i\in[1,n']$ such that for every $j\in[i,i+\delta-2t]$ it holds that $\left(E_{\bw, \delta}(\bx)_{j},E_{\bw, \delta}(\bx)_{j+1},\dotsc,E_{\bw, \delta}(\bx)_{j+2t-1}\right)\neq \mathbf{0}^t\mathbf{1}^t$, go to Step 2 or 3. Else go to Step 4.
    \begin{enumerate}
        \item \textbf{Step 2:} If $i\leq n'-\delta+1$, then we delete $\left(E_{\bw, \delta}(\bx)_{i},E_{\bw, \delta}(\bx)_{i+1},\dotsc,E_{\bw, \delta}(\bx)_{i+\delta-1}\right)$ from $E_{\bw, \delta}(\bx)$ and append\\ $\left(i,g\left(E_{\bw, \delta}(\bx)_{i},E_{\bw, \delta}(\bx)_{i+1},\dotsc,E_{\bw, \delta}(\bx)_{i+\delta-1}\right),1,\mathbf{0}^t\mathbf{1}^t\mathbf{0}^t\mathbf{1}^t,0\right)$, where $i$ is the binary representation of index with length $\lceil\log n\rceil$. Let $n'=n'-\delta$ and $i=1$.
        \item \textbf{Step 3:} If $i> n'-\delta+1$, then we delete $\left(E_{\bw, \delta}(\bx)_{i},E_{\bw, \delta}(\bx)_{i+1},\dotsc,E_{\bw, \delta}(\bx)_{n'}\right)$ from $E_{\bw, \delta}(\bx)$ and append\\ $\left(i,g\left(E_{\bw, \delta}(\bx)_{i},E_{\bw, \delta}(\bx)_{i+1},\dotsc,E_{\bw, \delta}(\bx)_{n'},\mathbf{0}^{i+\delta-n'-1}\right),1,\mathbf{0}^t\mathbf{1}^t\mathbf{0}^{t-(i+\delta-n'-1)/2}\mathbf{1}^{t-(i+\delta-n'-1)/2},0\right)$, where $i$ is the binary representation of index with length $\lceil\log n\rceil$. Let $n'=n'-i$.
    \end{enumerate}
    
    \textbf{Step 4:} Output $E_{\bw, \delta}(\bx)$.
 \caption{Pattern-dense Strings Encoding}

\end{algorithm}

\begin{lemma}\label{lem:encodingdense}
Given a sequence $\bx\in\Sigma_2^n$, Algorithm~\ref{alg:pattern_dense} outputs a unique pattern-dense string $E_{\bw, \delta}(\bx)\in \cD_{\bw,\delta}(n+4t)$.
\end{lemma}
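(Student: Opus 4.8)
## Proof Proposal

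The plan is to verify two things: that Algorithm~\ref{alg:pattern_dense} terminates and produces a well-defined output of the claimed length $n+4t$, and that the output lies in $\cD_{\bw,\delta}(n+4t)$, i.e., every gap between consecutive occurrences of $\bw=\mathbf{0}^t\mathbf{1}^t$ (padded at both ends as in the definition of $\balpha_{\bw}$) has length at most $\delta$.

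First I would track the length bookkeeping, exactly as in the proof of Lemma~\ref{lem:L2}. Initially $E_{\bw,\delta}(\bx)$ has length $n+4t$ (the input of length $n$ plus the appended block $\mathbf{0}^t\mathbf{1}^t\mathbf{0}^t\mathbf{1}^t$). Each invocation of Step~2 deletes a substring of length $\delta$ and appends $\big(i, g(\cdot), 1, \mathbf{0}^t\mathbf{1}^t\mathbf{0}^t\mathbf{1}^t, 0\big)$; by the Proposition this appended block has length $\lceil\log n\rceil + (\delta - \lceil\log n\rceil - 4t - 2) + 1 + 4t + 1 = \delta$, so the total length is invariant. Step~3 is the terminal clean-up: it deletes the tail of length $n'-i+1$ (padding to length $\delta$ before compression so that $g$ applies), appends a block whose pattern suffix is shortened by exactly the number of padding symbols, and by the same count the length again stays $n+4t$. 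So $|E_{\bw,\delta}(\bx)| = n+4t$ throughout, which gives the codomain $\cD_{\bw,\delta}(n+4t)$. For termination, I would argue that $n'$ strictly decreases (by $\delta$ in Step~2, by a positive amount in Step~3) whenever the algorithm does not go straight to Step~4, so after finitely many iterations Step~1's search fails and Step~4 outputs; uniqueness is immediate since the algorithm is deterministic.

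Next I would prove the density property: no substring of length $\delta$ in the output is pattern-free. Suppose for contradiction that $E_{\bw,\delta}(\bx)$ contains a window of $\delta$ consecutive symbols with no occurrence of $\mathbf{0}^t\mathbf{1}^t$. I would split into cases according to where this window sits relative to the boundary between the ``remaining prefix'' portion (length $n'$ at termination) and the appended blocks. If the window lies entirely within the first $n'$ symbols, then Step~1 would have found such an $i$ and not terminated — contradiction. If the window overlaps the appended region, then since every appended block ends with a full copy of $\mathbf{0}^t\mathbf{1}^t$ (or, in the Step~3 block, with $\mathbf{0}^{t-k}\mathbf{1}^{t-k}$ preceded by $\mathbf{0}^t\mathbf{1}^t$ — and the initialization appended $\mathbf{0}^t\mathbf{1}^t\mathbf{0}^t\mathbf{1}^t$), and since $\delta = t2^{2t+1}\lceil\log n\rceil$ is large enough that any $\delta$-window overlapping the appended region must contain one of these full $\mathbf{0}^t\mathbf{1}^t$ blocks, we again reach a contradiction. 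The remaining case is a window straddling the prefix/appended boundary; here I would use that the first appended block begins (after the index and compressed data) with a $1$ and a full $\mathbf{0}^t\mathbf{1}^t$, and that the length budget forces the pattern to appear. Once no $\delta$-window is pattern-free, every entry of $\balpha_{\bw}$ is at most $\delta$ (a gap of length $>\delta$ between consecutive 1's in $(1,\mathbbm{1}_{\bw},1)$ would give a pattern-free window), so $E_{\bw,\delta}(\bx)\in\cD_{\bw,\delta}(n+4t)$.

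The main obstacle I anticipate is the boundary case analysis in Step~3 and at the prefix/appended-region interface: one must check that the shortened pattern suffix $\mathbf{0}^{t-(i+\delta-n'-1)/2}\mathbf{1}^{t-(i+\delta-n'-1)/2}$ together with the preceding $\mathbf{0}^t\mathbf{1}^t$ still guarantees a genuine occurrence of $\mathbf{0}^t\mathbf{1}^t$ in every $\delta$-window, and that the parity $(i+\delta-n'-1)$ being even is consistent with $|\bw|=2t$ and the way Step~3 is reached (it is, because the scan in Step~1 advances in a way compatible with $\delta - n' + i$ being even, or one pads by one extra symbol). I would also double-check the constant in inequality (b) of the Proposition, $0.4\lceil\log n\rceil \ge 4t+3$, is what makes the appended block fit in length exactly $\delta$; this is the one place the argument needs $n$ to be sufficiently large relative to the constant $t$.
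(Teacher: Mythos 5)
Your length-invariance bookkeeping and your density argument follow essentially the same route as the paper: the paper likewise notes that each deletion of $\delta$ (or $n'-i+1$) symbols is compensated by an appended block of identical length, and argues density by splitting windows into those lying in the surviving prefix (which would have triggered Step~2 or~3) and those meeting the appended region (each appended block contains the pattern). Your version is in fact more careful than the paper's at the block boundaries, and your observations about the parity of $(i+\delta-n'-1)$ and the constant in inequality (b) are legitimate details that the paper glosses over.

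The genuine gap is your treatment of \emph{uniqueness}. You dismiss it with ``the algorithm is deterministic,'' but determinism only gives well-definedness of the output; what the lemma needs (and what the analogous Lemma~\ref{lem:L2} explicitly proves with the sentence ``It follows that the encoding is injective'') is that the map $\bx\mapsto E_{\bw,\delta}(\bx)$ is \emph{injective}, i.e., the encoding is lossless. This is not decorative: injectivity is exactly what makes the redundancy accounting in Lemma~\ref{lem:locdels_qary} valid, since $|\cD_{\bw,\delta}(n+4t)|$ must be bounded below by $2^n$ via the image of the encoder. The paper discharges this by exhibiting the explicit decoding procedure (Algorithm~\ref{alg:pat_dense_decoding}): the trailing bit of the current string distinguishes whether the last operation was a compression step (flag $0$) or the initialization marker (flag $1$); in the former case the suffix block reveals the stored index $i$ and the compressed content $g(\cdot)$, so the deletion can be undone, and one inverts the steps from last to first. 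Your proposal needs an argument of this shape --- in particular a check that the suffix block of the final string can always be unambiguously parsed (e.g., that the second $\mathbf{0}^t$-run from the right correctly delimits the Step~3 variant) --- before the lemma as stated is fully proved.
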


Notice that the length of $E_{\bw, \delta}(\bx)$ keeps stable during the encoding process. Each time we delete $\delta$ or $n'-i+1$ bits, we will append a new string with identical deleted length at the end of $E_{\bw, \delta}(\bx)$. Next, we want to show that the encoding sequence $E_{\bw, \delta}(\bx)$ satisfies the condition that each interval of length $\delta$ in $E_{\bw, \delta}(\bx)$ contains at least one pattern $\bw=\mathbf{0}^t\mathbf{1}^t$. It can be seen that for any $i\in \left[1,n'\right]$, if there exists $j\in[i,i+\delta-2t]$ such that $\left(E_{\bw, \delta}(\bx)_{j},E_{\bw, \delta}(\bx)_{j+1},\dotsc,E_{\bw, \delta}(\bx)_{j+2t-1}\right)\neq \mathbf{0}^t\mathbf{1}^t$, the corresponding substring beginning with $E_{\bw, \delta}(\bx)_i$ will be deleted in Step 2 or 3. And all of the new appended string contains at least a pattern $\bw$. The decoding process of recovering $\bx$ from $E_{\bw, \delta}(\bx)$ is given in Algorithm~\ref{alg:pat_dense_decoding}.

\begin{algorithm}[t]
\label{alg:pat_dense_decoding}
\SetAlgoLined
\KwInput{Sequence $E_{\bw, \delta}(\bx)\in \cD_{\bw,\delta}(n+4t)$}
\KwOutput{Decoded sequence $\bx\in\Sigma_2^n$}

\textbf{Initialization:} Let $\bx=E_{\bw, \delta}(\bx)$.

\If{ $x_{n+4t}=0$} 
{Find the beginning index $k$ of the second $\mathbf{0}$ run with length $t$ from the right to the left and let $l=n+4t-k$. Then, let $i$ be the decimal representation of $\left(x_{n+8t-l-\delta+1},x_{n+8t-l-\delta+2},\dotsc,x_{n+8t-l-\delta+\lceil\log n\rceil}\right)$. Let $\by$ be the sequence obtained by $g^{-1}\left(x_{n+8t-l-\delta+\lceil\log n\rceil+1}, x_{n+8t-l-\delta+\lceil\log n\rceil+2},\dotsc,x_{n+4t-l-2}\right)$. Next, delete $\left(x_{n+8t-l-\delta+1},x_{n+8t-l-\delta+2},\dotsc,x_{n+4t} \right)$ from $\bx$ and insert $\left(y_1,y_2,\dotsc,y_{\delta-4t+l}\right)$ at the location $i$ of $\bx$. Repeat.}

\If{$x_{n+4t}=1$} 
{Delete $\left(x_{n+1},x_{n+2},\dotsc,x_{n+4t}\right)$ and output $\bx$.}

\caption{Pattern-dense Strings Decoding}
\end{algorithm}

\iffalse
\textbf{Decoding:} The decoding process of recovering $\bx$ from $E_{\bw, \delta}(\bx)$ is given as the following:
\begin{enumerate}
    \item \textbf{Initialization:} Let $\bx=E_{\bp, \delta}(\bx)$.
    \item If $x_{n+4t}=0$, find the beginning index $k$ of the second $\mathbf{0}$ run with length $t$ from the right to the left and let $l=n+4t-k$. Then, let $i$ be the decimal representation of $\left(x_{n+8t-l-\delta+1},x_{n+8t-l-\delta+2},\dotsc,x_{n+8t-l-\delta+\lceil\log n\rceil}\right)$. Let $\by$ be the sequence obtained by $g^{-1}\left(x_{n+8t-l-\delta+\lceil\log n\rceil+1}, x_{n+8t-l-\delta+\lceil\log n\rceil+2},\dotsc,x_{n+4t-l-2}\right)$. Next, delete $\left(x_{n+8t-l-\delta+1},x_{n+8t-l-\delta+2},\dotsc,x_{n+4t} \right)$ from $\bx$ and insert $\left(y_1,y_2,\dotsc,y_{\delta-4t+l}\right)$ at the location $i$ of $\bx$. Repeat.
    \item If $x_{n+4t}=1$, delete $\left(x_{n+1},x_{n+2},\dotsc,x_{n+4t}\right)$ and output $\bx$.
\end{enumerate}
\fi

\subsection{P-bounded code for correcting a burst of at most $t$ deletions}\label{sec:noncort_p}

%\textcolor{red}{Remind the reader how is it that we use Lemma~\ref{lem:locdels_qary}. If we enforce that our codeword sequences merely belong to the code $\cC_{loc}(n,c_0,c_1)$, then this doesn't make any sense. }
%$From Lemma~\ref{lem:locdels_qary}, we can locate the deletion into a $P=O(\log n)$ interval, 

Next, we discuss how to correct a burst of at most $t$ deletions provided we know approximately where the deletions occur. We will make use of a code that was designed in prior work.
%\textcolor{red}{Remind the reader what the notation below means. For example, $f_t(\bx)\bmod a$ is a number $\{0,1,\ldots, a-1\}$, but you write that it's binary. It's ok to use an abuse of notation but you should notify the reader of it ahead of time. }

\begin{lemma}\label{lem:t_burst_base}
(c.f. \cite{sima2020syndrome}) For any $\bx\in\Sigma_2^k$, there exists an integer $a\leq 2^{2\log k+o(\log k)}$ and the labeling function $f_t: \Sigma_2^{k}\rightarrow\Sigma_{2^{\cR(t,k)}}$ where $\cR(t,k)\leq O(t^2\log(k+t))$ the systematic code %$\cC^{b_c}(n,t)$ with $n=k+4\log k+o(\log k)$,
\begin{equation}
    \cC_{{SB}}(n,t)=\left\{\by=\left(\bx,1,\mathbf{0}^t,\mathbf{1}^t,0,a,f_t(\bx)\bmod a\right): \bx\in\Sigma_2^k\right\}.
\end{equation}
is capable of correcting a burst of at most $t$ consecutive deletions, where $a$ and $\left(f_t(\bx)\bmod a\right)\in[[a]]$ are represented as binary vectors.
\end{lemma}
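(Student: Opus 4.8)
The statement to prove is Lemma~\ref{lem:t_burst_base}, which asserts the existence of a systematic code $\cC_{SB}(n,t)$ correcting a burst of at most $t$ consecutive deletions, with a labeling function $f_t$ whose output is bounded so that the redundancy is $O(t^2\log(k+t))$ bits. This is attributed to prior work (\cite{sima2020syndrome}), so the plan is not to reinvent the construction from scratch but to assemble the argument from the syndrome-compression framework and explain why the claimed parameters hold.

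The plan is as follows. First I would recall the \emph{syndrome compression} technique: for any code $\cC$ over length-$k$ binary strings defined by a (possibly large) integer-valued labeling $F_t(\bx)$ such that the map $\bx\mapsto F_t(\bx)$ separates every pair of strings that share a common descendant under a burst of at most $t$ deletions, one can replace $F_t(\bx)$ by $F_t(\bx)\bmod a$ for a suitably chosen integer $a$ without losing the separation property on the relevant (confusable) pairs. The key quantitative input is that the number of strings confusable with a fixed $\bx$ under a burst of at most $t$ deletions is polynomial in $k$ — roughly $O(k\cdot 2^t)$ — since a burst of at most $t$ deletions is determined by its starting position (at most $k$ choices) and its length (at most $t$ choices), and conversely recovering $\bx$ from a length-$(k-t')$ string amounts to choosing where and what to reinsert. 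Hence the "bad" differences $F_t(\bx)-F_t(\bx')$ over confusable pairs number at most $\mathrm{poly}(k)\cdot 2^{O(t)}$, each of magnitude at most $2^{O(t^2\log(k+t))}$ (the size of the raw syndrome), and a standard counting/pigeonhole argument produces a prime or integer $a$ of size $2^{2\log k + o(\log k)}$ that avoids dividing any of them; here one uses that the product of all bad differences has at most $2^{O(t^2 \log(k+t))}$ bits and the number of primes up to $x$ grows like $x/\ln x$. I would state the base labeling $F_t$ explicitly as the concatenation of syndromes used in \cite{sima2020syndrome} (the "$k$-mer"/pattern-based hash together with a VT-type syndrome on the pattern-occurrence intervals), noting only that it has the separation property and the stated bit-length $\cR(t,k) \le O(t^2\log(k+t))$, and cite that this is exactly the content of the referenced construction.

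The second part of the plan is to verify the \emph{systematic} form. The codeword is $\by = (\bx,1,\mathbf{0}^t,\mathbf{1}^t,0,a,f_t(\bx)\bmod a)$, and the point of the marker block $1\,\mathbf{0}^t\mathbf{1}^t\,0$ is that after a burst of at most $t$ deletions the decoder can still locate the boundary between the information part $\bx$ and the redundancy part: a burst of length at most $t$ cannot destroy both the leading $1$ and the trailing $0$ of the marker while also eliminating the run structure $\mathbf{0}^t\mathbf{1}^t$, so the decoder can determine whether the deletions fell in $\bx$, in the marker, or in the suffix $(a, f_t(\bx)\bmod a)$. If the deletions hit the marker or suffix, $\bx$ is received intact and nothing needs to be done; if they hit $\bx$, then $a$ and $f_t(\bx)\bmod a$ are received intact, the decoder knows the true length $k$ from $|\by|$ and $t'$, and it recovers $\bx$ as the unique length-$k$ string consistent with the received corrupted $\bx'$ and with the residue $f_t(\bx)\bmod a$ — uniqueness being exactly the separation property guaranteed by the choice of $a$. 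I would spell out that the appended lengths are fixed ($t$, $t$, $1$, $1$ for the marker; $\lceil\log a\rceil = 2\log k + o(\log k)$ each for $a$ and the residue), so the total redundancy is $2t+2 + 4\log k + o(\log k) + \cR(t,k)$-free of $k$'s dominant term, consistent with the stated bound.

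The main obstacle — and the step deserving the most care — is the choice of the modulus $a$ and the accompanying counting argument: one must argue that there is a \emph{single} $a$, of size only $2^{2\log k + o(\log k)}$ (i.e. essentially $k^2$), that simultaneously avoids dividing $F_t(\bx)-F_t(\bx')$ for \emph{all} confusable pairs $(\bx,\bx')$ across all $\bx\in\Sigma_2^k$, not merely for a fixed $\bx$. The clean way to handle this is the observation from \cite{sima2020syndrome} that confusability is "local": whether $\bx$ and $\bx'$ are confusable under a burst of at most $t$ deletions depends only on a window of length $O(t)$ where they differ, so the set of relevant nonzero difference values $\{F_t(\bx)-F_t(\bx') : (\bx,\bx')\text{ confusable}\}$, while indexed by many pairs, takes at most $k\cdot 2^{O(t)}$ \emph{distinct} values (or can be covered by that many congruence conditions), each of bit-length at most $\cR(t,k)$; the product of these has bit-length $k\cdot 2^{O(t)}\cdot \cR(t,k) = 2^{o(\log^2 k)}$-ish, which by the prime number theorem is dwarfed by the number of primes below $k^2$ for $k$ large, yielding the existence of a suitable prime $a$. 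Getting the bookkeeping of this "uniform $a$" argument exactly right — in particular ensuring the locality reduction so that the count is $\mathrm{poly}(k)2^{O(t)}$ rather than $2^{\Theta(k)}$ — is the crux; everything else (the marker analysis, the systematic layout, summing the redundancy) is routine. Since the lemma is quoted "c.f." from \cite{sima2020syndrome}, I would present this plan at the level of a proof sketch and defer the detailed syndrome-compression bounds to that reference.
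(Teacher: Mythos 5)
The paper does not prove this lemma at all: it is imported verbatim from \cite{sima2020syndrome} (hence the ``c.f.''), so there is no in-paper argument to compare against. Your reconstruction via syndrome compression plus a marker block is the right framework, and your treatment of the systematic layout and the marker $1\,\mathbf{0}^t\mathbf{1}^t\,0$ is at the appropriate level of detail.

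There is, however, one genuine misstep: the step you single out as ``the crux'' --- finding a \emph{single} modulus $a$ that works simultaneously for all $\bx\in\Sigma_2^k$ --- is not required, and your proposed resolution of it would not go through. The lemma is quantified as ``for any $\bx$ there exists an integer $a$,'' and the codeword transmits $a$ itself alongside $f_t(\bx)\bmod a$; this is precisely because $a=a(\bx)$ is chosen per codeword by the standard per-$\bx$ pigeonhole argument (the decoder reads $a$ off the intact redundancy part before using it). Your ``locality'' argument for a uniform $a$ --- that the differences $f_t(\bx)-f_t(\bx')$ over all confusable pairs take only $k\cdot 2^{O(t)}$ distinct values --- is not sound as stated: confusability is indeed determined by an $O(t)$-length window, but the VT-type syndromes inside $f_t$ depend on absolute positions and on the surrounding string, so the difference values are not functions of the window alone and can number exponentially many. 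Dropping that paragraph and running the compression argument per $\bx$ repairs the proof. A second, smaller point: the confusable ball under a burst of at most $t$ deletions has size $O(t\,k^2\,2^t)$ (choose the deletion start, the reinsertion position, and the reinserted content), not $O(k\,2^t)$; the extra factor of $k$ is exactly what makes the modulus $a\le 2^{2\log k+o(\log k)}$ rather than $2^{\log k+o(\log k)}$, so your intermediate count is inconsistent with the (correct) final bound you state.
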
 
Let $E_{SB} : \Sigma_2^{k} \to \Sigma_2^{4\log k + o(\log k)}$ denote the systematic encoder for $\cC_{SB}$ that for an input $\bx \in \Sigma_2^k$ outputs the information $(a, f_t(\bx)\bmod a)$ in binary.

Our next step is to introduce an additional constraint, which leverages the encoder from $C_{SB}(n,t)$, that allows us to recover the burst of at most $t$ deletions provided we know approximately where the burst of deletions occurs. The idea behind the approach is to first compute the non-systematic portion of the code $C_{SB}(n,t)$ for each block (defined below) of our codewords, and then to protect this information by enforcing two parity constraints (one on the even blocks, the other on the odd blocks). More precisely, we split the sequence $\bx$ into two sets $\bx_{e}=\left\{ \bx_{e,1},\bx_{e,2},\dotsc,\bx_{e,s}\right\}$ and $\bx_o=\left\{ \bx_{o,1},\bx_{o,2},\dotsc,\bx_{o,s+1}\right\}$, where $s=n/2P$ and $P=t2^{2t+1}\lceil \log n \rceil$:
\begin{itemize}
    \item {\bf{Even Blocks}}: $\bx_{e,i}=\bx_{[(2i-2)P+1,2iP]}, i=1,\dotsm,s$
    \item {\bf{Odd Blocks}}:
    $\bx_{o,i}=\begin{cases} \bx_{[1,P]}, &\quad i=1;\\
    \bx_{[(2i-3)P+1,(2i-1)P]}, &\quad i=2,\dotsm,s;\\
    \bx_{[n-P+1,n]}, &\quad i=s+1.
    \end{cases}$
\end{itemize}

For $i \in [s]$, let $a_{e,i}=E_{tB}(\bx_{e,i})$ and similarly let $a_{o,i}=E_{tB}(\bx_{o,i})$ for $i\in [s+1]$. Note that $\bx_e$ and $\bx_o$ each cover the sequence $\bx$ and that any interval of length $P$ is fully contained in at least one block in $\bx_e$ or in $\bx_o$. We can use the $E_{SB}$ to protect each block of length $2P$, as in the following lemma.

\begin{lemma}\label{lem:p_bounded_tburst}
There exists an integer $a$ where $a=2^{2\log P+o(\log P)}$ such that for $d_1,e_1\in[[a]]$ and $d_2,e_2\in[[a]]$, the code
\begin{equation*}
    \begin{aligned}
    \cC_{PB}(n,t,P) = \Large\{\bx\in\Sigma_2^n: &\sum_{i=1}^{s} a_{e,i}=d_1 \bmod a,
     \sum_{i=1}^{s} \left(f_t(\bx_{e,i}) \bmod a_{e,i})\right)=e_1 \bmod a,\\
    &\sum_{i=1}^{s+1} a_{o,i}=d_2 \bmod a,
    \sum_{i=1}^{s+1} \left(f_t(\bx_{o,i}) \bmod a_{o,i})\right)=e_2 \bmod a\Large\}.
    \end{aligned}
\end{equation*}
can correct a burst of at most $t$ deletions with the knowledge of the location of a substring of length $P$ from which the symbols are deleted. Furthermore, there exist choices for $d_1,d_2,e_1$ and $e_2$ such that the redundancy of the code is at most $8\log P+o(\log P)$.
\end{lemma}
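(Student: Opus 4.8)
The plan is to reduce the problem block-by-block to the single-block burst code $\cC_{SB}$ of Lemma~\ref{lem:t_burst_base}, using the two pairs of parity checks to reconstruct the $\cC_{SB}$-redundancy of whichever block is hit by the burst. Fix a codeword $\bx$, and suppose a burst of $t'\le t$ deletions occurs with all deleted symbols lying in a known interval $[m,m+P-1]$; let $\bx'$ be the received word, so the decoder knows $m$ and $t'=n-|\bx'|$ (for simplicity one may assume $2P\mid n$, padding otherwise). By the covering property noted just before the lemma, $[m,m+P-1]$ is contained in some block of $\bx_e$ or of $\bx_o$; the decoder selects such a block by a fixed rule (say, prefer an even block when both work), and without loss of generality assume it is the even block $\bx_{e,i_0}$. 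Since each block has length $2P\ge P$, the burst is confined to $\bx_{e,i_0}$, so every other even block $\bx_{e,i}$ ($i\ne i_0$) appears in $\bx'$ undeleted, merely shifted left by $0$ (for $i<i_0$) or by $t'$ (for $i>i_0$). Knowing $m$, $i_0$ and $t'$, the decoder locates these shifted copies, thereby recovering $\bx_{e,i}$ exactly for all $i\ne i_0$, as well as the corrupted block $\bx'_{e,i_0}$, which is $\bx_{e,i_0}$ after a burst of at most $t$ deletions.

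Next I would recover the $\cC_{SB}$-redundancy of the hit block. Choose the outer modulus $a=2^{2\log P+o(\log P)}$ large enough to exceed every $a_{e,i}$ and $a_{o,i}$ arising; this is possible because every block has length at most $2P$, so each such modulus is at most $2^{2\log(2P)+o(\log P)}=2^{2\log P+o(\log P)}$, and $f_t(\cdot)\bmod a_{e,i}<a_{e,i}\le a$. From $\sum_{i=1}^{s}a_{e,i}\equiv d_1\pmod a$ together with the values $a_{e,i}$ for $i\ne i_0$ (computable from the recovered blocks), the decoder obtains $a_{e,i_0}\bmod a$, which equals $a_{e,i_0}$ since $a_{e,i_0}<a$; similarly, from $\sum_{i=1}^{s}\bigl(f_t(\bx_{e,i})\bmod a_{e,i}\bigr)\equiv e_1\pmod a$ it recovers $f_t(\bx_{e,i_0})\bmod a_{e,i_0}$ exactly. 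These two quantities are precisely the systematic information that $\cC_{SB}$ appends to $\bx_{e,i_0}$.

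Then I would form the virtual received codeword $\bigl(\bx'_{e,i_0},\,1,\,\mathbf{0}^t,\,\mathbf{1}^t,\,0,\,a_{e,i_0},\,f_t(\bx_{e,i_0})\bmod a_{e,i_0}\bigr)$, whose suffix starting at the $1$ is now known exactly and whose prefix $\bx'_{e,i_0}$ is $\bx_{e,i_0}$ after a burst of at most $t$ deletions. Applying the decoder of $\cC_{SB}$ from Lemma~\ref{lem:t_burst_base} recovers $\bx_{e,i_0}$, hence all of $\bx$. The case in which the burst falls in an odd block is identical, using the constraints with $d_2,e_2$ in place of $d_1,e_1$ (here the half-length boundary blocks $\bx_{o,1}$ and $\bx_{o,s+1}$ are covered by the same argument). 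For the redundancy bound, there are $a^4=2^{8\log P+o(\log P)}$ choices for $(d_1,e_1,d_2,e_2)\in[[a]]^4$, so by averaging some choice retains at least $2^n/a^4$ codewords, giving redundancy at most $\log(a^4)=8\log P+o(\log P)$.

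I expect the main obstacle to be the bookkeeping in the first step: carefully verifying that the length-$P$ location information really does pin the burst into a single block of $\bx_e$ or $\bx_o$, and that from $m$ and $t'$ alone the decoder can re-index the shifted undeleted blocks of $\bx'$ without ambiguity. Once that is settled, the ``sum-of-syndromes'' trick is routine, but one must still be careful about the two-level use of moduli --- the per-block $a_{e,i}$ versus the global $a$ --- to ensure a single $a$ can dominate all per-block quantities while remaining $2^{2\log P+o(\log P)}$.
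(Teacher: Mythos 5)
Your proposal is correct and follows essentially the same route as the paper's proof: locate the single even or odd block containing the burst, recover all other blocks directly, solve the two modular sum constraints for the missing block's pair $\bigl(a_{e,i_0},\,f_t(\bx_{e,i_0})\bmod a_{e,i_0}\bigr)$, and invoke Lemma~\ref{lem:t_burst_base} to decode that block. The paper's own argument is a terser version of exactly this; your added care about choosing the global modulus $a$ to dominate every per-block modulus and about re-indexing the shifted unaffected blocks fills in details the paper leaves implicit.
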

\begin{proof}
The interval of length $P$ in which the edit has occurred is fully contained in a block of $\bx_e$ or in a block of $\bx_o$. Without loss of generality, let us assume the former and also assume that the index of this block is $l$. We can recover all blocks of $\bx_e$ except $\bx_{e,l}$. The value of $a_{e,l}$ and $f_{t}(\bx_{e,l})\bmod a_{e,l}$ can be determined by solving the equation $\sum_{i=1}^{s} a_{e,i}\equiv d_1 \bmod a$ and $\sum_{i=1}^{s} (f_{t}(\bx_{e,i})\bmod a_{e,i})\equiv e_1 \bmod a$, respectively. Then, by Lemma~\ref{lem:t_burst_base}, the block $\bx_{e,l}$ can be recovered.      
\end{proof}

\subsection{Overall construction of non-binary code for correcting at most $t$ deletions}\label{sec:noncort_all}

%\textcolor{red}{Not trying to create more work, but does it maybe make sense to introduce this code construction first and then prove the component codes have the desired properties. This way you can refer back to the overall code construction when you prove that the individual components $\cC_{loc}(n,c_0,c_1)$ and $cC^{b_c}(n,t,O(\log n))$ have the desired properties. We can discuss this tomorrow. }

In this subsection, we will provide the overall construction of the non-binary code for correcting at most $t$ deletions. 
\begin{theorem}
Let $q$ be an even integer. Then, there exists an integer $a=2^{2\log \log n+o(\log \log n)}$ such that for all $c_1\in[[2n]]$, $c_0\in[[4]]$, $ d_{1,i},e_{1,i}\in[[a]]$ and $d_{2,i},e_{2,i}\in[[a]]$, where $i\in[\lceil\log q\rceil]$. The code $\cC_{tB}(n)$ 
\begin{equation*}
    \cC_{tB}(n)=\{\bu\in \Sigma_q^n: A(\bu)_1\in \cC_{loc}(n,c_0,c_1),\quad
    A(\bu)_i\in \cC_{PB}(n,t,O(\log n)), \forall i\in[\lceil\log q\rceil]\}
\end{equation*}
can correct a burst of at most $t$ deletions in $q$-ary sequences. Furthermore, there exists choices for $c_0,c_1,d_{1,i},d_{2,i},e_{1,i},e_{2,i}$ such that the redundancy is at most $\log n +O(\log q\log \log n)$.
\end{theorem}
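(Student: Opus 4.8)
The plan is to follow the template of the non-binary $2$-deletion construction in Section~\ref{sec:noncor2}: map $\bu$ to the binary matrix $A(\bu)$, use the first row to \emph{locate} the burst, and use the remaining rows to \emph{correct} it once its location is approximately known. The only substitutions are that the role of the PLL Levenshtein code is played by the locating code $\cC_{loc}$ of Lemma~\ref{lem:loc_del}, and the role of the $P$-bounded $2$-deletion code is played by the $P$-bounded $t$-deletion code $\cC_{PB}$ of Lemma~\ref{lem:p_bounded_tburst}, with the shared parameter $P=\delta=t2^{2t+1}\lceil\log n\rceil=O(\log n)$.

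First I would isolate the two structural facts that make the scheme work. (i) A burst of at most $t$ deletions in $\bu$ deletes at most $t$ consecutive columns of $A(\bu)$, and the set of deleted columns is \emph{the same in every row}; in particular the starting column index $p$ of the burst is common to all $\lceil\log q\rceil$ rows. (ii) Since $q$ is even, for any fixed values of the higher-order bits of $u_j$ the least significant bit is still free to be $0$ or $1$; hence every binary string occurs as $A(\bu)_1$ for exactly $(q/2)^n$ sequences $\bu\in\Sigma_q^n$, so requiring $A(\bu)_1\in\cC_{loc}(n,c_0,c_1)$ and $A(\bu)_i\in\cC_{PB}(n,t,P)$ imposes only the stated syndrome/density constraints and nothing more.

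The decoder then runs in two stages. Stage one: feed the received first row to the decoder guaranteed by Lemma~\ref{lem:loc_del}; it recovers $A(\bu)_1$ exactly \emph{and} returns an interval $I$ of length at most $\delta=P$ containing $p$ (the number of deleted columns $t'\le t$ is read off from the length difference). Stage two: by fact (i), $I$ also localizes the burst in rows $2,\dots,\lceil\log q\rceil$ to within $P$ positions, so for each such row the decoder of $\cC_{PB}(n,t,P)$ from Lemma~\ref{lem:p_bounded_tburst} --- which in turn calls the systematic $t$-burst decoder of Lemma~\ref{lem:t_burst_base} on the one affected block --- recovers $A(\bu)_i$ exactly. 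Reassembling the columns of the recovered matrix returns $\bu$, so $\cC_{tB}(n)$ is a $t$-burst-error-correcting code. For the redundancy, Lemma~\ref{lem:loc_del} gives that the row-$1$ constraint costs at most $\log n+4$ bits and Lemma~\ref{lem:p_bounded_tburst} gives that each row's block-syndrome constraint costs at most $8\log P+o(\log P)=O(\log\log n)$ bits; averaging over the $16n\cdot a^{4\lceil\log q\rceil}$ choices of $(c_0,c_1,\{d_{1,i},d_{2,i},e_{1,i},e_{2,i}\}_{i\in[\lceil\log q\rceil]})$, and using fact (ii) together with $|\cD_{\bw,\delta}(n)|\ge 2^{n-1}$, yields a choice with $|\cC_{tB}(n)|\ge q^n/(16n\cdot a^{4\lceil\log q\rceil})$, i.e.\ redundancy $\log n+O(\log q\log\log n)$ since $\log a=2\log\log n+o(\log\log n)$.

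I expect the main obstacle to be not the logical structure --- which is a direct lift from Section~\ref{sec:noncor2} --- but the parameter bookkeeping: one must check that the interval length delivered by $\cC_{loc}$ is \emph{exactly} the window size $P$ for which $\cC_{PB}$ was designed, that the $O(t)$-bit length inflation caused by the dense-string encoder (Algorithm~\ref{alg:pattern_dense}) and by the systematic encoders inside $\cC_{PB}$ is absorbed into the $o(\cdot)$ terms, and that the $t$-dependent constants (such as $2^{2t+1}$) never enter the leading $\log n$ term. These steps are routine but must be tracked carefully across Lemmas~\ref{lem:loc_del}, \ref{lem:t_burst_base} and~\ref{lem:p_bounded_tburst}.
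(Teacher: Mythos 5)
Your proposal is correct and follows essentially the same two-stage argument as the paper: the LSB-freedom observation lets the first row carry the $(\bw,\delta)$-dense locating constraint of Lemma~\ref{lem:loc_del}, after which the $P$-bounded code of Lemma~\ref{lem:p_bounded_tburst} with $P=\delta=O(\log n)$ corrects each remaining row, and the redundancy count matches the paper's. The extra parameter bookkeeping you flag is handled implicitly in the paper and does not change the argument.
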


\begin{proof}
If a burst of at most $t$ deletions occurs in $\bu\in\cC_{tB}(n)$, at most $t$ adjacent columns are deleted in the binary matrix $A(\bu)$. 
Since the LSB of the binary representation of $u\in\Sigma_q$ can take any value regardless of the other bits, the pattern dense string can be applied to $A(\bu)_1$. The decoder for $\cC_{loc}(n,c_0,c_1)$ can insert the deleted bits into $A(\bu)_1$ and we can find an interval of length at most $O(\log n)$ in which the deletion has occurred. Then, $\cC_{PB}(n,t,O(\log n))$ can correct the deletions in all $\lceil\log q\rceil$ rows of $A(\bu)$ with the positional knowledge.
\end{proof}

\textit{Remark:} Although the redundancy of both non-binary codes for correcting a burst of at most 2 and $t$ deletions can be shown as $\log n+O(\log q\log\log n)$, the redundancy $8\log P+o(\log P)$ of $P$-bounded code for correcting at most $t$ deletions is higher than that of code correcting at most 2 deletions with $\log P+O(1)$. Therefore, when $t=2$, our construction in Section~\ref{sec:bicor2} is better than the construction for arbitrary $t$.

\section{Correcting a burst of at most $t$ deletions for the permutation code}\label{sec:per}

In this section, we construct a family of permutation codes that are capable of correcting a burst of at most $t$ deletions with redundancy $\log n+O(\log\log n)$. Our approach is similar in spirit to 
%Chee et al. in \cite{chee2019burst} 
previous settings considered in this paper
whereby we first attempt to approximately locate the burst of deletions, and then we seek to correct the burst. In Section~\ref{sec:per_loc}, we give a construction for the first code, which requires roughly $ \log n$ bits of redundancy, that is based upon using a simple mapping between non-binary and binary symbols. The more difficult task is to design the second code, which corrects the burst of deletions provided we roughly know its location, given that we want the second code to have redundancy of less than $\log n$ bits.

As an illustration of this difficulty, suppose we are provided with the sequence $\bpi'$, which is the result of $t$ symbols being deleted from the permutation $\bpi=(\pi_1, \ldots, \pi_n) \in \cS_n$, and suppose that we know \textit{{roughly}} where the burst of deletions has occurred, namely in an interval of length $O(\log n)$. Recall that the constraint in ~\cite{chee2019burst} only works for \emph{exactly} $t$ deletions and a set of $t$ related constraints are needed to handle \emph{at most} $t$ deletions. One naive approach is to directly split the permutation $\bpi$ into several blocks each with length $O(\log n)$ and introduce constraints into each block. However, the alphabet size for each block is still $n$, which would result in the construction of the second code that cannot have less than $\log n$ bits of redundancy.

%One naive approach is simply to apply $t$ parity constraints on the permutation $\bpi$ whereby the first constraint would require $\sum_{j=1}^{n/4} \pi_{1 + t(j-1)} \bmod n = a_1$, the second constraint would be $\sum_{j=1}^{n/4} \pi_{2 + t(j-1)} \bmod n = a_2$, and so on for some set of constants $a_1, a_2, \ldots, a_t$. The trouble with this approach is that each parity constraint involves $\log n$ bits of redundancy, which would result in a construction that has $t \log n$ bits of redundancy.

In order to avoid this issue, we introduce what is referred to as an ``overlapping ranking sequence'' for the permutation $\bpi$ in Section~\ref{sec:per_overlap}, which allows us to avoid using codes defined over large alphabets. To make use of the connection between the overlapping ranking sequence and its associated permutation, we design codes that are capable of correcting substring edits of length at most $2t$ %using the syndrome compression technique
in Section~\ref{sec:per_coroverlap}. The resulting code, which requires $O(\log \log n)$ bits of redundancy, is then shown to be capable of correcting a burst of deletions provided that we know its approximate location. The overall construction for correcting a burst of at most $t$ deletions for the permutation code and its total redundancy are presented in Section~\ref{subsec:per_all}.

%For the remainder of the section, we adopt the following notation. Let $n$ be a positive integer and $\cS_n$ be the set of all permutations on the set $[n]$. Denote $\bpi=(\pi_1,\pi_2,\dotsc,\pi_{n})\in \cS_n$ as a permutation code with length $n$. A burst of at most $t$ deletions deletes at most $t$ consecutive symbols from the permutation code $\bpi$, leading to $\bpi'=(\pi_1,\pi_2,\dotsc,\pi_i,\pi_{i+t'+1},\dotsc,\pi_n)$, where $t'\leq t$.

\subsection{Locating the deletion}\label{sec:per_loc}

In this subsection, we want to identify the location of the burst of deletions to be within an interval of size at most $O(\log n)$. Our approach will be to first convert each of the permutations in our code to be binary sequences of length $n$ by way of a simple mapping as described in Section~\ref{subsec:permu}. Afterwards, we will introduce some additional constraints (similar to the ones from Section~\ref{sec:noncort_loc}) on the resulting binary sequences that will allow us to obtain the desired localizing code.\\

\subsubsection{Mapping from permutations to binary sequences}\label{subsec:permu}
~\\ 

Define $\bb_P$: $\cS_n \rightarrow \Sigma_2^n$ as:
\begin{equation}
    b_P(\bpi)_i =\left\{\begin{array}{ll}
    1, & \text { if } \pi_i>n/2 \\
    0, & \text { if } \pi_i \leq n/2
\end{array}\right.
\end{equation}
\begin{example}
Suppose $\bpi=(5,3,4,1,2,6)$. Then, the corresponding binary sequence is $\bb_P(\bpi)=(1,0,1,0,0,1)$.
\end{example}

%\textcolor{red}{Note that in the text below, I changed the notation (similar to the previous section) to include the parameter $n$. Please make sure this is consistent with the rest of the text.}
The binary sequence $\bb_P(\bpi)$ after mapping will have an equal number of 0s and 1s when $n$ is even, and the number of 1s is one more than 0s when $n$ is odd. For even $n$, let $\cD_e(n)$ be the set of binary sequences of length $n$ with an equal number of 0s and 1s, and when $n$ is odd, let $\cD_e(n)$ be the set of binary sequences that have one more 1s than 0s. We call sequences in $\cD_e(n)$ \textit{balanced sequences}. The size of the set $|\cD_e(n)|$ is 
\begin{equation}
    |\cD_e(n)|=\left\{\begin{array}{ll}
    {n \choose n/2}, & \text{when}\; n\; \text{is even,} \\
    {n \choose (n+1)/2}, & \text{when}\; n\; \text{is odd}.
    \end{array}\right.
\end{equation}
For simplicity, we will focus on the case where $n$ is even, but similar results also hold for the case where $n$ is odd.\\

\iffalse
\begin{lemma}\label{lem:2nchoosen}
\begin{equation}
\frac{4^{n}}{\sqrt{\pi\left(n+\frac{1}{3}\right)}} \leq {2n \choose n}\leq \frac{4^{n}}{\sqrt{\pi\left(n+\frac{1}{4}\right)}}
\end{equation}
\end{lemma}
\begin{proof}
See Appendix A for details.
\end{proof}

From Lemma \ref{lem:2nchoosen}, we can get the bound of the $|\cD_e(n)|$ as
\begin{equation*}
    \frac{2^{n} \sqrt{6}}{\sqrt{\pi(3 n+2)}} \leq |\cD_e(n)| \leq \frac{2^{n+1}}{\sqrt{\pi(2 n+1)}}
\end{equation*}
\fi 

\subsubsection{Densifying binary sequences by a fixing pattern $\bw$}
~\\

Next, we make use of $(\bw, \delta)$-dense strings from the set $\cD_{\bw, \delta}(n)$ \cite{lenz2020optimal}, which was introduced in Subsection~\ref{sec:noncort_loc}. The only difference in this section is that we increase the value of $\delta$ to $\delta=t2^{2t+2}\lceil\log n\rceil$. %\textcolor{red}{I would say here that with a slight abuse of notation we assume $\delta=t2^{2t+2}\lceil\log n\rceil$ in this section. This will avoid having to sort out the $\delta'$ and $\delta$s floating around.}

\begin{lemma}\label{cor:stirling_appro}
From Stirling approximation, for all even $n\ge 2$, we have
\begin{equation*}
    \frac{2^{n} \sqrt{6}}{\sqrt{\pi(3 n+2)}} \leq\left|\mathcal{D}_{e}(n)\right|={n \choose n/2} \leq \frac{2^{n+1}}{\sqrt{\pi(2 n+1)}}.
\end{equation*}
\end{lemma}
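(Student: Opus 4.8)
The plan is to prove $\binom{n}{n/2}$ satisfies the stated two-sided bound using the standard Stirling bounds with explicit error terms. Recall the Robbins (refined Stirling) inequality: for every positive integer $m$,
\[
\sqrt{2\pi m}\,\Bigl(\frac{m}{e}\Bigr)^m e^{\frac{1}{12m+1}} \le m! \le \sqrt{2\pi m}\,\Bigl(\frac{m}{e}\Bigr)^m e^{\frac{1}{12m}}.
\]
Writing $n=2k$ with $k=n/2$, we have $\binom{n}{n/2}=\binom{2k}{k}=\frac{(2k)!}{(k!)^2}$. Substituting the upper bound for $(2k)!$ in the numerator and the lower bound for $k!$ (squared) in the denominator gives
\[
\binom{2k}{k}\le \frac{\sqrt{4\pi k}\,(2k/e)^{2k}e^{1/(24k)}}{2\pi k\,(k/e)^{2k}e^{2/(12k+1)}}
= \frac{2^{2k}}{\sqrt{\pi k}}\,\exp\!\Bigl(\tfrac{1}{24k}-\tfrac{2}{12k+1}\Bigr),
\]
and the exponent is negative for all $k\ge 1$, so $\binom{2k}{k}\le \frac{2^{2k}}{\sqrt{\pi k}}=\frac{2^{n}}{\sqrt{\pi n/2}}=\frac{2^{n+1}}{\sqrt{2\pi n}}$. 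To conclude the claimed upper bound $\frac{2^{n+1}}{\sqrt{\pi(2n+1)}}$, note $\sqrt{2\pi n}\ge\sqrt{\pi(2n+1)}$ fails — so instead I would keep a bit of the exponential slack: since $\exp(\tfrac{1}{24k}-\tfrac{2}{12k+1})\le 1$ is wasteful, I retain the factor and verify directly that $\frac{2^{2k}}{\sqrt{\pi k}}\exp(\cdots)\le \frac{2^{2k+1}}{\sqrt{\pi(4k+1)}}$, i.e. $\sqrt{\frac{4k+1}{4k}}\,\exp(\tfrac{1}{24k}-\tfrac{2}{12k+1})\le 2$, which holds comfortably for all $k\ge1$ (the left side is below $\sqrt{2}\cdot 1$).

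For the lower bound, I substitute the lower Stirling bound for $(2k)!$ and the upper bound for $(k!)^2$:
\[
\binom{2k}{k}\ge \frac{\sqrt{4\pi k}\,(2k/e)^{2k}e^{1/(24k+1)}}{2\pi k\,(k/e)^{2k}e^{1/(6k)}}
= \frac{2^{2k}}{\sqrt{\pi k}}\,\exp\!\Bigl(\tfrac{1}{24k+1}-\tfrac{1}{6k}\Bigr).
\]
It then remains to check that $\frac{2^{2k}}{\sqrt{\pi k}}\exp(\tfrac{1}{24k+1}-\tfrac{1}{6k})\ge \frac{2^{2k}\sqrt{6}}{\sqrt{\pi(6k+2)}}$, equivalently $\sqrt{\frac{6k+2}{6k}}\,\exp(\tfrac{1}{24k+1}-\tfrac{1}{6k})\ge \sqrt{6}/\sqrt{6}=1$ — wait, more precisely the inequality reduces to $\exp(\tfrac{1}{24k+1}-\tfrac{1}{6k})\ge \sqrt{\frac{6k}{6k+2}}\cdot\frac{\sqrt 6}{\sqrt 6}$; after translating $3n+2=6k+2$, one checks $\sqrt{\tfrac{6k}{6k+2}}\le \exp(\tfrac{1}{6k}-\tfrac{1}{24k+1})^{-1}$, which holds for every $k\ge1$ by a short monotonicity/numerical check since the exponential factor is at least $e^{-1/7}\approx 0.867$ at $k=1$ and tends to $1$, while $\sqrt{6k/(6k+2)}$ ranges in $[\sqrt{3/4},1)$.

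The only genuine obstacle is arranging the constants so the elementary exponential error terms actually fit inside the slightly loose rational bounds $\frac{2^{n+1}}{\sqrt{\pi(2n+1)}}$ and $\frac{2^n\sqrt6}{\sqrt{\pi(3n+2)}}$; this is purely a matter of verifying two one-variable inequalities at $k=1$ and checking monotonicity for $k\ge1$, which I would do by noting that both auxiliary functions $\sqrt{(4k+1)/4k}\,e^{1/(24k)-2/(12k+1)}$ and $\sqrt{(6k+2)/6k}\,e^{1/(24k+1)-1/(6k)}$ are monotone in $k$ with limits $1$, so it suffices to evaluate them at $k=1$. I would present the Robbins bounds as a cited standard fact and keep the case $k=1$ verification explicit, relegating the monotonicity remark to a sentence. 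Everything else is routine algebraic substitution.
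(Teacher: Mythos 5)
Your route (Robbins' explicit Stirling bounds applied to $\binom{2k}{k}=(2k)!/(k!)^2$, then reducing to one-variable inequalities) is genuinely different from the paper's argument, which establishes the equivalent bounds $\frac{4^m}{\sqrt{\pi(m+1/3)}}\le\binom{2m}{m}\le\frac{4^m}{\sqrt{\pi(m+1/4)}}$ by showing via the ratio $\binom{2m+2}{m+1}/\binom{2m}{m}=4\frac{m+1/2}{m+1}$ that $\binom{2m}{m}\sqrt{m+1/3}\,4^{-m}$ is decreasing and $\binom{2m}{m}\sqrt{m+1/4}\,4^{-m}$ is increasing, and then invoking the Wallis limit $\lim_m \sqrt{\pi m}\binom{2m}{m}4^{-m}=1$; substituting $m=n/2$ gives the stated form. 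The paper's ratio-monotonicity argument is well suited here because the constants $1/4$ and $1/3$ are essentially sharp, so there is almost no slack to absorb error terms.

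That tightness is exactly where your write-up has a concrete error. For the upper bound, the reduction of $\frac{4^k}{\sqrt{\pi k}}e^{1/(24k)-2/(12k+1)}\le\frac{2^{2k+1}}{\sqrt{\pi(4k+1)}}$ is to
\begin{equation*}
\sqrt{\tfrac{4k+1}{4k}}\;e^{\frac{1}{24k}-\frac{2}{12k+1}}\le 1,
\end{equation*}
not $\le 2$: dividing both sides by $4^k/\sqrt{\pi}$ gives $e^{c_k}\le 2\sqrt{k}/\sqrt{4k+1}=\sqrt{4k/(4k+1)}$. Your justification (``the left side is below $\sqrt{2}$'') only certifies the incorrect $\le 2$ version and says nothing about the true target, which is razor-thin: at $k=1$ the left side equals about $0.9994$ and it increases toward $1$. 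So the inequality does hold, but establishing it requires exactly the monotonicity-plus-limit argument you defer to a remark, and that is where all the work is -- it is not a ``comfortable'' numerical check at $k=1$. The same caveat applies to your lower bound: comparing the \emph{range} $[\sqrt{3/4},1)$ of $\sqrt{6k/(6k+2)}$ against a single value of the exponential factor is not a pointwise comparison, since both functions increase to $1$; you need the ratio of the two auxiliary functions to be monotone (or a pointwise algebraic verification for each $k$). With those two repairs the Robbins approach does go through, but as written the proof is not closed.
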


\begin{lemma}\label{lem:size_perdense}
For all even $n\ge 2$, the number of $(\bw, \delta)$-dense strings of length $n$ among balanced sequences is
\begin{equation*}
    |\cD_e(n) \cap \cD_{\bw, \delta}(n)|\ge 
    {n \choose n/2}-\frac{2^n}{n^{2\log e-1}}\ge  \frac{2^{n} \sqrt{6}}{\sqrt{\pi(3 n+2)}}-\frac{2^n}{n^{2\log e-1}}.
\end{equation*}
\end{lemma}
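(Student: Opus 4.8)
The plan is to argue by complementary counting. A balanced string $\bx\in\cD_e(n)$ fails to be $(\bw,\delta)$-dense only if one of its length-$\delta$ windows avoids the pattern $\bw=\mathbf{0}^t\mathbf{1}^t$, so
\[
|\cD_e(n)\cap\cD_{\bw,\delta}(n)|\;\ge\;|\cD_e(n)|-\bigl|\Sigma_2^n\setminus\cD_{\bw,\delta}(n)\bigr|.
\]
The first term is exactly ${n\choose n/2}$, which Lemma~\ref{cor:stirling_appro} bounds from below by $\frac{2^n\sqrt 6}{\sqrt{\pi(3n+2)}}$, so the whole statement reduces to establishing $\bigl|\Sigma_2^n\setminus\cD_{\bw,\delta}(n)\bigr|\le 2^n/n^{2\log e-1}$.

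For that bound I would rerun the counting argument of \cite[Lemma 1]{lenz2020optimal}, but now with the enlarged parameter $\delta=t2^{2t+2}\lceil\log n\rceil$ used in this section. If $\bx\notin\cD_{\bw,\delta}(n)$ then some entry of $\balpha_{\bw}(\bx)$ exceeds $\delta$, which forces at least $\delta$ consecutive starting positions $k$ with $\bx_{[k,k+2t-1]}\neq\mathbf{0}^t\mathbf{1}^t$. Among these, the $\delta/(2t)=2^{2t+1}\lceil\log n\rceil$ non-overlapping $2t$-windows are pairwise disjoint, hence independent for a uniformly random $\bx$, and each equals $\mathbf{0}^t\mathbf{1}^t$ with probability $2^{-2t}$. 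Therefore a fixed bad starting position occurs with probability at most $(1-2^{-2t})^{2^{2t+1}\lceil\log n\rceil}=\bigl((1-2^{-2t})^{2^{2t}}\bigr)^{2\lceil\log n\rceil}\le e^{-2\lceil\log n\rceil}\le n^{-2\log e}$, and a union bound over the at most $n$ possible starting positions gives $\Pr[\bx\notin\cD_{\bw,\delta}(n)]\le n^{1-2\log e}$, i.e.\ $\bigl|\Sigma_2^n\setminus\cD_{\bw,\delta}(n)\bigr|\le 2^n n^{1-2\log e}=2^n/n^{2\log e-1}$.

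Combining the two estimates yields $|\cD_e(n)\cap\cD_{\bw,\delta}(n)|\ge{n\choose n/2}-2^n/n^{2\log e-1}$, and substituting the Stirling lower bound from Lemma~\ref{cor:stirling_appro} gives the second inequality in the statement. The only point requiring care is verifying that doubling $\delta$ relative to Section~\ref{sec:noncort_loc} sharpens the exponent from $1-\log e$ to $1-2\log e$; this is exactly what makes the deficit $2^n/n^{2\log e-1}$ negligible against $|\cD_e(n)|=\Theta(2^n/\sqrt{n})$ (since $2\log e-1>\tfrac12$), so that the intersection is nonempty and in fact nearly as large as $\cD_e(n)$. Everything else is the same routine block-partition estimate already used in the non-balanced case.
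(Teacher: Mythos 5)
Your proposal is correct and follows essentially the same route as the paper: both bound the number of non-$(\bw,\delta)$-dense strings by the probabilistic window argument of \cite[Lemma 1]{lenz2020optimal} with the enlarged $\delta=t2^{2t+2}\lceil\log n\rceil$ (giving the exponent $2\log e$ rather than $\log e$), apply a union bound over starting positions, and subtract the resulting count $2^n/n^{2\log e-1}$ from $|\cD_e(n)|={n\choose n/2}$, finishing with the Stirling lower bound of Lemma~\ref{cor:stirling_appro}. Your added remark that $2\log e-1>\tfrac12$ is exactly why the deficit is negligible against $|\cD_e(n)|=\Theta(2^n/\sqrt{n})$, matching the paper's comment following the lemma.
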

\begin{proof}
Similar to the Proof of Lemma 1 in \cite{lenz2020optimal}, let $\bz \in \Sigma_2^n$ and $E_i$ be the event that $\bz_{[i+1,i+\delta]}$ does not contain the pattern $\bw$. The probability of $E_i$ is
\begin{equation}
    \pr(E_i)\leq (1-\frac{1}{2^{2t}})^{\frac{\delta}{2t}}\stackrel{(a)}\leq\frac{1}{n^{2\log e}}
\end{equation}
where (a) follows from the fact that the function $(1-1/x)^x$ is increasing in $x$ for $x>1$ and $\lim_{x\rightarrow\infty}(1-1/x)^x=1/e$. To bound the probability of the event that $\bz\in \Sigma_2^n$ is not in $\cD_{\bw,\delta}$, the union bound yields
\begin{equation}
\pr(\bz\not\in \cD_{\bw,\delta})\leq(n-\delta+1)\pr(E_i)\leq\frac{1}{n^{2\log e-1}}.
\end{equation}

Thus, %the lower bound of the number of $(\bp,\delta)$-dense strings of length $n$ in the binary sequence with equal number of 0 and 1 can be expressed as
\begin{equation}
    |\cD_e(n) \cap \cD_{\bw, \delta}(n)|\ge |\cD_e(n)|-2^n\cdot\pr(\bz\not\in \cD_{\bw,\delta})=\frac{2^{n} \sqrt{6}}{\sqrt{\pi(3 n+2)}}-\frac{2^n}{n^{2\log e-1}}.
\end{equation}
\end{proof}
Since $\frac{1}{n^{2\log e-1}}=o(\frac{\sqrt{6}}{\sqrt{\pi(3 n+2)}})$, the value of $|\cD_e(n) \cap \cD_{\bw, \delta}(n)|$ is dominated by the first term.\\

\subsubsection{Approximately locating the deletions}
%\begin{construction}
%For any integers $c_0\in[[4]]$ and %$c_1\in[[2n]]$, let
%\begin{equation*}
%    \cC_{loc}^P(n,c_0,c_1) \buildrel \Delta \over = \{ \bb_P(\bpi)\in \Sigma_2^n: \bb_P(\bpi) \in\{\cD_e(n) \cap \cD_{\bw, \delta}(n) \}, \;
    %n_{\bw}(\bb_P(\bpi)) \equiv c_0 \bmod{4}, \;
%    \VT(\balpha_{\bw}(\bb_P(\bpi)))  \equiv c_1 \bmod {2n}\}
%\end{equation*}
%\end{construction}
\begin{lemma}\label{lem:loc_permutation} For integers $c_0\in[[4]]$ and $c_1\in[[2n]]$, the code 
\begin{equation*}
    \cC_{loc}^P(n,c_0,c_1) = \{ \bx\in \Sigma_2^n: \bx \in\{\cD_e(n) \cap \cD_{\bw, \delta}(n) \}, \;
    n_{\bw}(\bx) \equiv c_0 \bmod{4}, \;
    \VT(\balpha_{\bw}(\bx)  \equiv c_1 \bmod {2n}\}
\end{equation*}
is capable of locating the burst of deletions to an interval of length at most $\delta=t2^{2t+2}\lceil\log n\rceil$.
\end{lemma}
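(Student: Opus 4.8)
The plan is to mirror the argument behind Lemma~\ref{lem:loc_del} (Construction~1 of~\cite{lenz2020optimal}), since $\cC_{loc}^P(n,c_0,c_1)$ is exactly that construction restricted to balanced strings. The only thing to check is that the two syndrome constraints, $n_{\bw}(\bx)\bmod 4$ and $\VT(\balpha_{\bw}(\bx))\bmod 2n$, still suffice to pin down the burst location once we additionally know that $\bx$ lies in $\cD_e(n)\cap\cD_{\bw,\delta}(n)$; the balancedness is used only to guarantee that the code is nonempty (via Lemma~\ref{lem:size_perdense}), not in the localization argument itself. So the localization proof goes through verbatim from~\cite{lenz2020optimal}, and I would present it as follows.

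First I would fix a codeword $\bx\in\cC_{loc}^P(n,c_0,c_1)$ and let $\bx'$ be the result of a burst of at most $t$ deletions, occurring in some (unknown) window. Because $\bx\in\cD_{\bw,\delta}(n)$, consecutive occurrences of the pattern $\bw=\mathbf{0}^t\mathbf{1}^t$ in $\bx$ are at most $\delta$ apart (including the sentinel $1$'s prepended and appended). A burst of at most $t$ deletions can destroy at most one occurrence of $\bw$ (it cannot create a new full occurrence by the standard argument: deleting fewer than $|\bw|=2t$ symbols that are not wholly inside one pattern cannot splice two half-patterns into a new one, given the $\mathbf{0}^t\mathbf{1}^t$ shape), hence $n_{\bw}(\bx)-n_{\bw}(\bx')\in\{0,1\}$. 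Combined with the known residue $n_{\bw}(\bx)\equiv c_0\pmod 4$ and the observable $n_{\bw}(\bx')$, and using $4>1$, the decoder recovers $n_{\bw}(\bx)-n_{\bw}(\bx')$ exactly, i.e.\ it learns whether a pattern occurrence was lost.

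Next I would split into the two cases. If no occurrence of $\bw$ was lost, then the burst lies strictly between two consecutive occurrences of $\bw$, and since those are within distance $\delta$ of each other in $\bx$, the burst is confined to an interval of length at most $\delta$; the decoder can read off which inter-pattern gap from $\bx'$ directly (the surrounding patterns are intact and in order), so localization is immediate. If an occurrence of $\bw$ was lost, say the $j$-th one, then $\balpha_{\bw}(\bx')$ is obtained from $\balpha_{\bw}(\bx)$ by merging two consecutive entries into their sum and possibly decreasing that merged entry by the number of deleted symbols, while all other entries are unchanged; this is exactly the sticky-insertion/deletion situation the VT syndrome on $\balpha_{\bw}$ is built to handle. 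Reducing $\VT(\balpha_{\bw}(\bx))\equiv c_1\pmod{2n}$ against $\VT(\balpha_{\bw}(\bx'))$, and using that each $\balpha$-entry is between $2t$ and $\delta$ so the affected positions are of bounded size, the VT-type decoding argument of~\cite{lenz2020optimal} isolates which index $j$ was merged; this identifies a window of length at most $\delta$ in $\bx$ containing the burst. In both cases the burst is located to an interval of length at most $\delta=t2^{2t+2}\lceil\log n\rceil$.

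The main obstacle — really the only nontrivial point — is arguing cleanly that the pair $(n_{\bw}\bmod 4,\ \VT(\balpha_{\bw})\bmod 2n)$ resolves the ``which gap / which merge'' ambiguity, i.e.\ reproducing the VT-style uniqueness estimate from~\cite{lenz2020optimal} with the current $\delta$. I would handle this by citing Construction~1 and Lemma~1 of~\cite{lenz2020optimal} and checking only that the single place the constant $2^{2t+1}$ versus $2^{2t+2}$ matters is the counting bound in Lemma~\ref{lem:size_perdense} (to keep the code nonempty among balanced strings), not the localization range, so that enlarging $\delta$ by a factor of $2$ leaves the argument intact. Everything else — the effect of a burst on $\mathbbm{1}_{\bw}$ and on $\balpha_{\bw}$, and the modular arithmetic — is routine and identical to the binary case already established in Section~\ref{sec:noncort_loc}.
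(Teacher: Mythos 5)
Your proposal matches the paper's proof, which likewise consists of observing that $\cC_{loc}^P(n,c_0,c_1)$ is exactly the code of Lemma~\ref{lem:loc_del} restricted to $\cD_e(n)\cap\cD_{\bw,\delta}(n)$, the balancedness playing no role in the localization argument and entering only through the size bound of Lemma~\ref{lem:size_perdense}. One small caution: your parenthetical claims that a burst destroys at most one occurrence of $\bw$ and cannot create a new one are both false in general (e.g.\ deleting a single symbol sitting between a $\mathbf{0}^t$ and a $\mathbf{1}^t$ creates an occurrence), which is precisely why the count is taken modulo $4$ rather than $2$; since you defer the actual uniqueness argument to \cite{lenz2020optimal}, this does not affect the correctness of your proof.
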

\begin{proof}
It can be derived from Lemma~\ref{lem:loc_del}. The only difference is that the binary sequence $\bx$ should be in $\{\cD_e(n) \cap \cD_{\bw, \delta}(n)\}$.
\end{proof}

Note that if a burst of at most $t$ deletions occurs in the permutation $\bpi$, then the corresponding binary sequence $\bb_{P}(\bpi)$ suffers a burst of at most $t$ deletions at the same location. Thus, the localizing code $\cC_{loc}^P(n,c_0,c_1)$ can be used to help us determine the location of the deletions in $\bpi$.

\begin{lemma}\label{lem:size_perloc}
There exist integers $c_0$ and $c_1$ such that the size of the permutation code whose codewords $\bpi$ satisfy $\bb_P(\bpi) \in \cC_{loc}(n,c_0,c_1)$ is at least $n!/(16n)$.
\end{lemma}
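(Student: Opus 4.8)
The plan is a straightforward averaging argument that exploits the fact that the map $\bb_P$ spreads permutations uniformly over its image. First I would observe that $\bb_P(\bpi)$ always lies in $\cD_e(n)$, and conversely that for every balanced sequence $\bx\in\cD_e(n)$ the fiber $\bb_P^{-1}(\bx)$ has exactly $((n/2)!)^2$ elements: a permutation $\bpi$ satisfies $\bb_P(\bpi)=\bx$ precisely when it assigns the $n/2$ values of $\{1,\dots,n/2\}$ in some order to the coordinates where $\bx$ is $0$ and the $n/2$ values of $\{n/2+1,\dots,n\}$ in some order to the coordinates where $\bx$ is $1$. Consequently, for any subset $S\subseteq\cD_e(n)$, the number of permutations $\bpi$ with $\bb_P(\bpi)\in S$ equals $((n/2)!)^2\,|S|$.

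Next I would note that as $(c_0,c_1)$ ranges over $[[4]]\times[[2n]]$, the sets $\cC_{loc}^P(n,c_0,c_1)$ partition $\cD_e(n)\cap\cD_{\bw,\delta}(n)$ into $8n$ disjoint classes, since both $n_{\bw}(\bx)\bmod 4$ and $\VT(\balpha_{\bw}(\bx))\bmod 2n$ are determined by $\bx$. By Lemma~\ref{lem:size_perdense} together with Lemma~\ref{cor:stirling_appro}, we have $|\cD_e(n)\cap\cD_{\bw,\delta}(n)|\ge {n\choose n/2}-2^n/n^{2\log e-1}$; since $2\log e-1>1$ while ${n\choose n/2}=\Theta(2^n/\sqrt n)$, the subtracted term is negligible, and in particular $|\cD_e(n)\cap\cD_{\bw,\delta}(n)|\ge\frac{1}{2}{n\choose n/2}$ for all sufficiently large $n$.

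Putting these together, the pigeonhole principle yields integers $c_0,c_1$ with
$$|\cC_{loc}^P(n,c_0,c_1)|\ge\frac{|\cD_e(n)\cap\cD_{\bw,\delta}(n)|}{8n}\ge\frac{1}{16n}{n\choose n/2}.$$
Multiplying by the common fiber size $((n/2)!)^2$ and using $n!={n\choose n/2}((n/2)!)^2$, the permutation code $\{\bpi\in\cS_n:\bb_P(\bpi)\in\cC_{loc}^P(n,c_0,c_1)\}$ has size at least $n!/(16n)$. There is no genuine obstacle in this argument; the only points needing care are the exact count of the fibers of $\bb_P$ and the elementary estimate that the non-$(\bw,\delta)$-dense balanced sequences form a vanishing fraction of ${n\choose n/2}$.
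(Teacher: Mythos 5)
Your proposal is correct and follows essentially the same route as the paper's proof: counting the fibers of $\bb_P$ as $((n/2)!)^2$, applying the pigeonhole principle over the $4\cdot 2n$ choices of $(c_0,c_1)$, and invoking Lemma~\ref{lem:size_perdense} together with the Stirling bound of Lemma~\ref{cor:stirling_appro} to show the non-dense balanced strings are a negligible fraction. The only cosmetic difference is that the paper tracks the explicit constant $\bigl(1-\tfrac{\sqrt{6\pi}}{3}n^{1.5-2\log e}\bigr)\ge\tfrac12$ for $n\ge 3$ where you appeal to ``sufficiently large $n$.''
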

\begin{proof}
By Lemma~\ref{cor:stirling_appro}, Lemma~\ref{lem:size_perdense} and the pigeonhole principle, there exist values of $c_0$ and $c_1$ such that the size of the resulting permutation code with its corresponding binary mapping sequence $\bb_P(\bpi) \in \cC_{loc}^P(n,c_0,c_1)$ is at least:
\begin{equation*}
    \frac{\left| \cD_e(n) \cap \cD_{\bw, \delta}(n) \right | \cdot \left( (n/2)! \right)^2}{4\cdot 2n} \ge\frac{\left({n\choose n/2}-\frac{2^n}{n^{2\log e-1}}\right)\cdot\frac{n!}{{n\choose n/2}}}{8n}\ge\frac{n!(1-\frac{\sqrt{6\pi}}{3}\cdot n^{1.5-2\log e})}{8n} \ge \frac{n!}{16n},
\end{equation*}
where the first inequality follows from Lemma~\ref{lem:size_perdense} and the second inequality follows from the lower bound in Lemma~\ref{cor:stirling_appro} for $n\ge 2$. The final inequality can also be shown to hold for $n\ge 3$.
\end{proof}

\iffalse
\textcolor{red}{Please put a proof of the redundancy. I agree with the derivation of the bound that holds for binary codes but the step in applying this result to the case where our codewords are permutations should also be included. In particular, notice that the size of the resulting permutation code should be the following: 
\begin{align*}
\left| D_e(n) \right | \cdot \left( (n/2)! \right)^2,    
\end{align*}
and then to compute the redundancy we need take the log of that and subtract it from $\log n!$. The result I think is actually better than advertised, since we shouldn't ``pay'' anything in terms of redundancy for having the binary image  of a permutation be balanced (it's always balanced for the way we've defined it).
}
\fi

\subsection{Mapping permutation code to the overlapping ranking sequence}\label{sec:per_overlap}

%\textcolor{red}{There's an issue here. The input to the ranking function is NOT a permutation of length $n$ when we use it to define $p_i=r(\pi_{i},\pi_{i+1},\dotsc,\pi_{i+t})$. What you basically want to do here is to project the elements $(\pi_{i},\pi_{i+1},\dotsc,\pi_{i+t})$ onto a permutation of length $t$ based upon their relative values. They should have encountered a similar problem in the Chee paper so I'd borrow their explanation/expository from there. }

In the following, we define a mapping that bears a resemblance to one originally introduced in \cite{chee2019burst} for the purpose of correcting a burst of deletions when the length of the burst was known. The key difference between their mapping and the one introduced here is that the ranking sequence in \cite{chee2019burst} was constructed using disjoint sets of symbols from the underlying permutation whereas our ranking sequence will be generated using every consecutive set of symbols from the underlying permutation. We now describe these ideas in more detail.  

For a sequence of positive integers $\bu$, the permutation projection of $\bu$, denoted by
$\mathbf{Prj}(\bu)$, is a permutation in $\cS_n$ where:
\begin{equation*}
    \prj(\bu)_{i}=\left|\left\{j: u_{j}<u_{i}, 1 \leq j \leq n\right\}\right|
    +\left|\left\{j: u_{j}=u_{i}, 1 \leq j \leq i\right\}\right|.
\end{equation*}
\begin{example} Let $\bu=\left(3,8,4,1,5\right)$. Then $\mathbf{Prj}(\bu)=\left(2,5,3,1,4\right)$.\end{example}

Let the function $\mu: \cS_{n}\rightarrow [1,n!]$ be a bijection such that $\mu(\bpi)$ is the lexicographic rank of $\bpi$ in $\cS_{n}$. For a sequence $\bu$ of length $n$, we define the permutation rank of $\bu$ as $r(\bu)=\mu(\mathbf{Prj}(\bu))\in\left[1,n!\right]$. Also, we define the corresponding overlapping ranking sequence for a permutation $\bpi$ of length $n$ as $\bp_{t+1}(\bpi)=\left(p_1,p_2,\dotsc, p_{i},\dotsc,p_{n-t}\right)\in \Sigma_{(t+1)!}^{n-t}$, where $p_i=r(\pi_{i},\pi_{i+1},\dotsc,\pi_{i+t})$. %\textcolor{red}{What you defined here with the notation `$\{ \}$' is a set and not a sequence.}
\begin{example}
For permutations of length $3$, we have $\mu(\bpi_1=(1,2,3))=1$, $\mu(\bpi_2=(1,3,2))=2$, $\mu(\bpi_3=(2,1,3))=3$, $\mu(\bpi_4=(2,3,1))=4$, $\mu(\bpi_5=(3,1,2))=5$ and $\mu(\bpi_6=(3,2,1))=6$.
\end{example}
\begin{example}
Suppose $\bpi=(6,4,2,1,5,3)$ and $t=2$, then the corresponding overlapping ranking sequence $\bp_{3}(\bpi)=(6,6,3,2)$.
\end{example}

Note that if $\bpi'$ is obtained from $\bpi$ through deletions, the identities of the deleted symbols can be determined by noting which symbols are missing. %From the previous subsection, we can successfully get the recovered $\bp_{t+1}$ at the receiver \textcolor{red}{I don't understand. The only thing the previous subsection should allow us to do is to recover the binary sequence and thus the approximate location of where the deletion occurs. I think the fact that we can get the value of the overlapping rankings sequence actually follows from the next subsection}. The only remaining problem is that whether we can recover a unique solution of $\bpi$ from $\bpi'$ and $\bp_{t+1}(\bpi)$.

%For the permutation code, since we have known the value of deleted \textcolor{red}{symbols} at the receiver, this problem can be converted to the following equivalent lemma.
\begin{lemma}\label{lem:overlappingunique}
Let $\bpi'=\left(\pi'_1,\pi'_2,\dotsc,\pi'_{n-t'}\right)$ be obtained from $\bpi \in \cS_{n}$ by deleting $t'\leq t$ consecutive symbols. Further, let $\bpi'' \in \cS_n$ be the result of inserting the deleted symbols consecutively into $\bpi'$. For any $\bpi''\neq \bpi$, the overlapping ranking sequence $\bp_{t+1}(\bpi'')$ and $\bp_{t+1}(\bpi)$ are not identical.
\end{lemma}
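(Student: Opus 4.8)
The plan is to reduce everything to the way the deleted block of $t'$ symbols can be re-inserted, and then to show that two inequivalent re-insertions always disagree on some length-$(t+1)$ window. First I would record the structure. Since $\bpi'$ and $\bpi$ together reveal which symbols were deleted (they are exactly the symbols of $[n]$ missing from $\bpi'$), the permutation $\bpi''$ in the statement is obtained from $\bpi'$ by inserting this same $t'$-element set $S$ as a consecutive block. Thus I may write $\bpi'=(\sigma_1,\dots,\sigma_{n-t'})$ and $\bpi=(\sigma_1,\dots,\sigma_i,B,\sigma_{i+1},\dots,\sigma_{n-t'})$, $\bpi''=(\sigma_1,\dots,\sigma_j,B'',\sigma_{j+1},\dots,\sigma_{n-t'})$, where $B,B''$ are orderings of $S$ and, by the symmetry of the claim in $\bpi$ and $\bpi''$, I may assume $i\le j$. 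Outside the positions $[i+1,j+t']$ the two permutations carry identical symbols; inside this region $\bpi$ spells $\bu=(B,\sigma_{i+1},\dots,\sigma_j)$ and $\bpi''$ spells $\bu''=(\sigma_{i+1},\dots,\sigma_j,B'')$, and because $S$ is disjoint from $\{\sigma_{i+1},\dots,\sigma_j\}$ one checks that $\bu=\bu''$ iff $i=j$ and $B=B''$, i.e.\ iff $\bpi=\bpi''$. So I assume $\bpi\ne\bpi''$, hence $\bu\ne\bu''$, and suppose toward a contradiction that $\bp_{t+1}(\bpi)=\bp_{t+1}(\bpi'')$. Two elementary facts I will use repeatedly: (i) $\bp_{t+1}(\bpi)=\bp_{t+1}(\bpi'')$ is equivalent to $\pi_p<\pi_r\Leftrightarrow\pi''_p<\pi''_r$ for all positions with $|p-r|\le t$; and (ii) two distinct orderings of one fixed set of distinct values have distinct patterns (the pattern is a complete invariant of the ordering), so in particular a non-trivial cyclic rotation of a length-$(t+1)$ window of distinct values changes its pattern.

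Second, I would dispose of the ``short'' case $L:=j+t'-i\le t+1$ (which includes $i=j$, where $L=t'\le t$). Here I can pick a length-$(t+1)$ window $W$ containing the whole region $[i+1,j+t']$; this is possible for $n\ge t+1$. On the part of $W$ lying outside the region, $\bpi$ and $\bpi''$ agree, while the set of symbols they place on $[i+1,j+t']$ is the same for both (it is $[n]$ minus the symbols placed outside that region, which are identical in the two permutations). Hence $\bpi|_W$ and $\bpi''|_W$ are two orderings of a single value set, and they are distinct since $\bu\ne\bu''$; by fact (ii) their patterns differ, which contradicts $\bp_{t+1}(\bpi)=\bp_{t+1}(\bpi'')$.

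Third, and this is where the real work lies, I would handle the ``long'' case $L\ge t+2$, which forces $i<j$. Now no single window sees the whole difference. Let $\bw=(\sigma_{i+1},\dots,\sigma_j)$ be the surviving block between the two insertion points, of length $N=L-t'$; in $\bpi$ it sits at positions $[i+t'+1,j+t']$ and in $\bpi''$ at positions $[i+1,j]$, a leftward shift by $t'$, while the inserted set $S$ moves from $[i+1,i+t']$ to $[j+1,j+t']$. I would extract constraints by equating, for each $k$, the length-$(t+1)$ window of $\bpi$ at position $i+k$ with that of $\bpi''$ at the same position: those constraints whose $\bpi''$-window lies inside the survivor block assert that the length-$(t+1)$ window patterns of $\bw$ are invariant under a shift by $t'$, and the windows straddling the left and the right ends of $[i+1,j+t']$ relate the symbols of $S$ to the first and last few entries of $\bw$. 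The aim is to combine these into a directed cycle of strict inequalities among the symbols involved --- forcing some $s\in S$ to lie both below and above the same entry of $\bw$ --- which is impossible. When $t'=1$ this is transparent: the shift-invariance forces $\bw$ to be monotone, and the two end windows then force the single moved symbol to lie below $\min\bw$ and above $\max\bw$. For $t'\ge 2$ the shift-invariance is only ``periodic with period $t'$,'' so one must look at several consecutive end windows (and, when $N$ is small compared with $t$, at windows that straddle the inserted block on both sides); the bookkeeping of which window sees which symbols is the main obstacle. Finally, the cases where the region $[i+1,j+t']$ abuts position $1$ or $n$ are handled by replacing the windows above with the available truncated windows at that end; since $L\ge t+2$ the region is long enough that at least one side always has the required room.
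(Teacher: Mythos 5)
Your reduction and your treatment of the short case are sound: the equivalence between equality of $\bp_{t+1}$ and agreement of all pairwise order comparisons at positional distance at most $t$ is correct, and when the entire differing region fits inside a single window of length $t+1$ the two restrictions are distinct orderings of the same value set, so some window pattern must differ. This part runs parallel to the paper's own handling of its case $i-j\le t$.

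The problem is the long case, which is the heart of the lemma, and there your proposal is a plan rather than a proof. You state the aim (``combine these into a directed cycle of strict inequalities'') but verify it only for $t'=1$, and you explicitly concede that for $t'\ge 2$ ``the bookkeeping of which window sees which symbols is the main obstacle.'' That obstacle is exactly what the lemma requires one to overcome: the shift-by-$t'$ invariance of the window patterns of the surviving block does not force monotonicity when $t'\ge 2$, and it is not evident that the end-window constraints close up into a cycle of strict inequalities in general. The paper resolves this with a different, global device. Let $a^{*}$ be the minimum of all symbols in the affected region. If $a^{*}$ lies in the surviving block, then the two segments of length $t'+1$ occupying the same positions in $\bpi$ and $\bpi''$ both contain $a^{*}$ but at different relative places (first in one, last in the other), so their window patterns already disagree. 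Hence $a^{*}$ must be among the inserted symbols; the relation ``$a_i$ is the minimum of $\{a_i,\dots,a_{i+t'}\}$'' then propagates in steps of $t'$ along the surviving block until the final window, which contains the deleted symbols $\{x_1,\dots,x_{t'}\}$ --- the same set as the inserted ones --- contradicting the minimality of $a^{*}$. You would need either to supply this minimum-propagation argument or to actually carry out your cycle-of-inequalities bookkeeping for general $t'$; as written, the main case is unproven.
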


\begin{proof}
The lemma is proved by showing that a contradiction arises if we assume that there exist $\bpi''$ and $\bpi$ such that their corresponding overlapping ranking sequence $\bp_{t+1}(\bpi'')$ and $\bp_{t+1}(\bpi)$ are the same.

Suppose the deleted symbols from $\bpi$ are $\bpi_{[i,i+t'-1]}$ and $\bpi''$ is the result of inserting these symbols (consecutively) beginning at position $j$. Without loss generality, we assume that $j<i$. 

Thus, $\bpi$ and $\bpi''$ can be shown as:
\begin{align*}
        \bpi &=\left(\cdots, \pi_{j}, \pi_{j+1}, \cdots, \pi_{i-1}, \pi_{i}, \cdots, \pi_{i+t'-1}, \cdots  \right)\\
        \bpi'' &=\left(\cdots, \pi''_{j}, \pi''_{j+1}, \cdots, \pi''_{i-1}, \pi''_{i}, \cdots, \pi''_{i+t'-1}, \cdots  \right)
    \end{align*}

From the definition of $\bpi''$, we can have $\pi''_{k}=\pi_{k-t'}$ when $k\ge j+t'$.
To illustrate the changed and unchanged part in $\bpi$ and $\bpi''$, we denote the unchanged part in both $\bpi$ and $\bpi''$ as $(\pi_j,\pi_{j+1},\dotsc,\pi_{i-1})=(\pi''_{j+t'},\pi''_{j+t'+1},\dotsc,\pi''_{i+t'-1})=(a_1,a_2,\dotsc,a_m)$, where $m=i-j$. Further, we use $(x_1,x_2,\dotsc,x_{t'})=(\pi_i,\pi_{i+1},\dotsc,\pi_{i+t'-1})$ to denote the deleted symbols from $\bpi$ and $(x''_1,x''_2,\dotsc,x''_{t'})$ to denote the inserted symbols in $\bpi''$, where $(x''_1,x''_2,\dotsc,x''_{t'})$ is a permutation of the deleted symbols $(x_1,x_2,\dotsc,x_{t'})$.
Then, $\bpi_{[j,i+t'-1]}$ and $\bpi''_{[j,i+t'-1]}$ can be rewritten as the following, taking $m>t'$ case as example:
\begin{align*}
    \bpi &=\left(a_1, a_2, \cdots, a_{t'},a_{t'+1},\cdots, a_m, x_{1}, \cdots, x_{t'}  \right)\\
    \bpi'' &=\left(x''_{1}, x''_{2}, \cdots, x''_{t'}, a_{1},\cdots,a_{m-t'},a_{m-t'+1},\cdots, a_m \right)
\end{align*}
    
For uniformity of notation, we sometimes denote $x_j$ by $a_{m+j}$ and $x''_j$ by $a_{-t'+j}$. Let $[y_a,y_b]=\{y_a,y_{a+1},\dotsc,y_{b}\}$. For a set $\{y_1,y_2,\dotsc,y_k\}$ with distinct elements, we say $a\prec \{y_1,y_2,\dotsc,y_k\}$ if $a\le y_i$ holds for all $i\in\{1,\dotsc,k\}$, with equality holding for at most one value of $i$.

\begin{itemize}

    \item Consider the case where $i-j>t$. To guarantee each element in $\bp_{t+1}(\bpi)$ and $\bp_{t+1}(\bpi'')$ are the same, we can have the following two relationships:
    
    If $a_i\prec [a_i,a_{i+t'}]$, for $1\le i\le m-t'$, then
    \begin{equation}\label{eq:chainrelation_1}
        a_{i+t'}\prec [a_{i+t'},a_{i+2t'}]
    \end{equation}
    and if $a_i\prec [a_i,a_{m}]$, for $m-t'<i<m$, then
    \begin{equation}\label{eq:chainrelation_2}
        a_{i+t'}\prec[a_{i+t'},a_{m+t'}]
    \end{equation}
    
    Let $a^{*}=\min\{a_1,\dotsc,a_m,x_1,\dotsc,x_{t'}\}=\min\{x''_1,\dotsc,x''_{t'},a_1,\dotsc,a_m\}$. Recall that all elements are distinct.
    
    Suppose there exists some $1\le i\le m$ such that $a^{*}=a_i$. Note that elements $[a_i,a_{i+t'}]$ in $\bpi$ and $[a_{i-t'},a_i]$ in $\bpi''$ have the same value in the corresponding overlapping ranking sequence as the number of elements in each of these segments is $t'+1\le t+1$. On the other hand, if the minimum $a_i$ appears in two different places in a segment with the same starting and end location in $\bpi$ and $\bpi''$, then the overlapping ranking sequence cannot be the same. Thus, it implies the contradiction arises when considering the minimum element $a^{*}$ in $\{a_1,\dotsc,a_m\}$.
    
    Hence, there must be some $1\le i\le t'$ such that $a^{*}=x''_{i}$. Noting that $r(a_i,\dotsc,a_{i+t'})=r(x''_i,\dotsc,x''_{t'},a_1,\dotsc,a_i)$ and we have $a_{i}\prec [a_{i},a_{i+t'}]$. We now show that
    \begin{equation}
    \begin{aligned}
        a_{i}&\prec [a_{i},a_{i+t'}]\\
        a_{i+t'}&\prec [a_{i+t'},a_{i+2t'}]\\
        a_{i+2t'}&\prec [a_{i+2t'},a_{i+3t'}]\\
        &\vdots\\
        a_{i+kt'}&\prec [a_{i+kt'},a_{i+kt'+t'}]\\
        a_{i+kt'+t'}&\prec [a_{i+kt'+t'},a_{m+t'}]\\
    \end{aligned}
    \end{equation}
    where $k$ is the largest integer such that $i+kt'\le m$. All relations except the last one follow from \eqref{eq:chainrelation_1} and the last one follows from \eqref{eq:chainrelation_2}. The last two relations imply that 
    \begin{equation*}
        a_{i+kt'}\prec [a_{i+kt'},a_{m+t'}]=(a_{i+kt'},\dotsc,a_m,x_1,\dotsc,x_{t'}),
    \end{equation*}
    which is a contradiction since the minimum among all elements is among the elements $\{x''_1,\dotsc,x''_{t'}\}$, where $\{x''_1,\dotsc,x''_{t'}\}$ and $\{x_1,\dotsc,x_{t'}\}$ contain the same elements.

    \item Consider the case where $i-j\leq t$. When $j\neq i$, note that the elements $[a_{m},a_{m+t'}]$ in $\bpi$ and $[a_{m-t'},a_{m}]$ in $\bpi''$ have the same value in the corresponding overlapping ranking sequence as the number of elements in each of these segments is $t'+1\le t+1$. Also, $[a_{m},a_{m+t'}]$ in $\bpi$ and $[a_{m-t'},a_{m}]$ in $\bpi''$ have common elements $a_m$ and $x_i, \exists i\in\{1,\dotsc,t'\}$. However, this is impossible that $a_m$ and $x_i$ are in reversed order in both.
    
    When $i=j$, the elements in $\bpi_{[i,i+t]}$ and $\bpi''_{[i,i+t]}$ cannot be in the same order due to $\bpi''\neq\bpi$. 
    Thus, the overlapping ranking sequence $\bp_{t+1}$ of $\bpi$ and $\bpi''$ are not the same in this case.\qedhere
\end{itemize}
\end{proof}

After mapping the permutation code to the overlapping ranking sequence, the alphabet size of the sequence can be reduced from $n$ to $(t+1)!$. As a result, we want to correct bursts of deletions in $\bpi$ by first recovering the corresponding overlapping ranking sequence $\bp_{t+1}(\bpi)$, and then we will use this information to uniquely determine $\bpi$ according to Lemma~\ref{lem:overlappingunique}. Recall that for a string $(v_1, v_2, \ldots, v_{n})$, we say that $(v_{i}, v_{i+1}, \ldots, v_{i+\ell-1})$ is a substring of length $\ell$ that appears in $(v_1, v_2, \ldots, v_{n})$ at position $i$.

%\textcolor{red}{Shuche, let's go with this characterization of the errors in the overlapping rankings sequence. I think it's much cleaner. Please clean up the next subsection using this characterization. Should be easier to characterize the error ball as well. I'd stick with the notation $\cB_{2t}$ as well. You can refer to this as a substring error-correcting code.  }

\begin{claim}\label{cl:rankconf}
After deleting a burst of at most $t$ symbols in a permutation $\bpi$ resulting in $\bpi'$, the corresponding overlapping ranking sequence $\bp_{t+1}(\bpi')$ can be obtained from $\bp_{t+1}(\bpi)$ by at most $t$ consecutive substitutions followed by a burst of at most $t$ deletions.
\end{claim}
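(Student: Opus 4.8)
The plan is to parametrize the burst and then compare the two overlapping ranking sequences window by window. Suppose the burst removes the $t'\le t$ symbols $\pi_{j+1},\dotsc,\pi_{j+t'}$, so that $\bpi'=(\pi_1,\dotsc,\pi_j,\pi_{j+t'+1},\dotsc,\pi_n)$; its $k$-th entry $\pi'_k$ equals $\pi_k$ for $k\le j$ and $\pi_{k+t'}$ for $k>j$. (We may assume $n\ge t+t'+1$, so $\bp_{t+1}(\bpi')$ is nonempty; otherwise there is nothing to prove.) The only property of the map $r$ I will use is that $p_i=r(\pi_i,\dotsc,\pi_{i+t})$ depends only on the relative order of the $t+1$ symbols in that window.

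The key step is to sort the length-$(t+1)$ windows $(\pi'_i,\dotsc,\pi'_{i+t})$ of $\bpi'$ into three types according to their position relative to the deletion point. A window with $i+t\le j$ lies entirely in the surviving prefix, so it is literally equal to the window $(\pi_i,\dotsc,\pi_{i+t})$ of $\bpi$, whence $\bp_{t+1}(\bpi')_i=p_i$. A window with $i\ge j+1$ lies entirely in the shifted surviving suffix, so it equals $(\pi_{i+t'},\dotsc,\pi_{i+t'+t})$, whence $\bp_{t+1}(\bpi')_i=p_{i+t'}$. The remaining windows straddle the deletion point; the straddling condition $i\le j$ and $i+t\ge j+1$ forces $i\in\{j-t+1,\dotsc,j\}$ intersected with the valid range $\{1,\dotsc,n-t'-t\}$, so there are at most $t$ of them and they occupy a block of consecutive positions $a,\dotsc,b$ with $a=\max(1,j-t+1)$ and $b=\min(j,n-t'-t)$.

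Assembling these observations and reading $\bp_{t+1}(\bpi')$ from left to right, one gets the copied block $p_1,\dotsc,p_{a-1}$, then the $b-a+1\le t$ straddling values, then the shifted-copied block $p_{b+1+t'},\dotsc,p_{n-t}$. Comparing with
\[
\bp_{t+1}(\bpi)=(p_1,\dotsc,p_{a-1},\,p_a,\dotsc,p_b,\,p_{b+1},\dotsc,p_{b+t'},\,p_{b+1+t'},\dotsc,p_{n-t}),
\]
we see that $\bp_{t+1}(\bpi')$ is obtained by substituting the at most $t$ consecutive entries $p_a,\dotsc,p_b$ and then deleting the $t'$ consecutive entries $p_{b+1},\dotsc,p_{b+t'}$ immediately following them, which is exactly at most $t$ consecutive substitutions followed by a burst of at most $t$ deletions.

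The step that needs genuine care is the index bookkeeping in the boundary regimes: when the burst is near the start of $\bpi$ (so $j<t$ and there are fewer than $t$, possibly zero, straddling windows) or near the end (so $b=n-t'-t<j$ and the suffix block is empty). In each such case one must verify that the formulas $a=\max(1,j-t+1)$, $b=\min(j,n-t'-t)$ still split $\bp_{t+1}(\bpi)$ into four consecutive blocks of the stated lengths, in particular that the deleted block $p_{b+1},\dotsc,p_{b+t'}$ always has valid indices (which follows from $b\le n-t'-t$, giving $b+t'\le n-t$) and always has exactly $t'$ entries. This is a short but slightly fiddly case check; the rest of the argument is a direct comparison.
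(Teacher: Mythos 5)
Your proposal is correct and follows essentially the same route as the paper's proof: both identify the at most $t$ windows that straddle the deletion point as the (possibly) substituted entries of the ranking sequence, and the $t'$ windows starting inside the deleted block as the deleted entries. Your version is somewhat more careful with the index bookkeeping (the paper's displayed deletion range even contains a typo), but the underlying argument is identical.
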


\begin{proof}
We can write $\bpi$ and $\bp_{t+1}(\bpi)$ as:
\begin{align}\label{eq:overlap_ori}
        \bpi &=\left(\pi_{1}, \pi_{2}, \cdots, \pi_{i-t}, \cdots ,\pi_{i-1},\pi_{i}, \pi_{i+1}, \cdots, \pi_{i+t'-1}, \pi_{i+t'}, \cdots  \right)\notag\\
        \bp_{t+1}(\bpi) &=\left(p_{1}, p_{2}, \cdots, p_{i-t}, \cdots ,p_{i-1},p_{i}, p_{i+1}, \cdots, p_{i+t'-1}, p_{i+t'}, \cdots   \right)
    \end{align}
Let $\bpi'=\left(\pi'_1,\pi'_2,\dotsc,\pi'_{n-t'}\right)$ be the result of deleting $t'\leq t$ consecutive symbols from the permutation $\bpi$ and suppose the deleted symbols from $\bpi$ are $\bpi_{[i,i+t'-1]}$. 

Thus, $\bpi'$ and $\bp_{t+1}(\bpi')$ can be written as:
\begin{align}\label{eq:overlap_del}
        \bpi' &=\left(\pi_{1}, \pi_{2}, \cdots, \pi_{i-t}, \cdots, \pi_{i-1}, \pi_{i+t'}, \cdots  \right)\notag\\
        \bp_{t+1}(\bpi') &=\left(p_{1}, p_{2}, \cdots, p'_{i-t}, \cdots, p'_{i-1}, p_{i+t'}, \cdots   \right)
    \end{align}
where we use $p_i$ to denote an unchanged value and $p'_j$ to denote a possibly changed value in $\bp_{t+1}(\bpi')$ compared with $\bp_{t+1}(\bpi)$.

By comparing \eqref{eq:overlap_del} with \eqref{eq:overlap_ori}, we see that there are at most $t$ consecutive substitutions (substituting $(p_{i-t},\dotsc,p_{i-1})$ by $(p'_{i-t},\dotsc,p'_{i-1})$) followed by at most $t$ consecutive deletions (deleting $(p_{i},\dotsc,p_{\textcolor{red}{i-t'-1}})$).
\end{proof}

Based on this observation, we characterize this error pattern as substring edits that replace a substring of length at most $2t$ with another substring of length at most $2t$, which is a more general type of error. Thus, in the next subsection, we will discuss how to construct codes capable of correcting substring edits of length at most $2t$ for recovering the overlapping ranking sequence $\bp_{t+1}(\bpi)$.

%Then, we generalize this claim as after deleting a burst of at most $t$ symbols in a permutation $\bpi$, there will be at most $2t$ consecutive edits in $\bp_{t+1}(\bpi)$. Motivated by this observation,  we will discuss how to correct at most $2t$ consecutive edits in $\bp_{t+1}(\bpi)$ in the next subsection.

\subsection{Correcting substring edits of length at most $2t$ in the overlapping ranking sequence}\label{sec:per_coroverlap}

%\textcolor{red}{We need to link/relate this approach to the one used in the previous section, since it's more or less the same. I don't know that we need to re-introduce everything in this way again. For lemma 13 we should say why we're introducing it and why we can' use Lemma 6 from before. At the very least I don't think that we need to have proofs. I would just say we're going to repeat the same idea as before (if it's different in ANY way say how and remove proof of Lemma 15.}

In this section, 
%we denote the error pattern of replacing a substring of length at most $2t$ with another substring of length at most $2t$ as substring edits of length at most $2t$. Then, 
our goal is to construct a code for correcting substring edits of length at most $2t$ in the overlapping ranking sequence $\bp_{t+1}(\bpi)$ based on the systematic binary code capable of correcting up to $t$ edits \cite{sima2020optimalbinary}, where each edit is a deletion, insertion or substitution error.

For $q<\log n$, the basic idea is to consider $q$-ary sequences as a set of $\lceil \log q\rceil$ binary sequences. Unlike the setup in Section~\ref{sec:noncort} where we only had to correct deletions, for our current setup we want to correct deletions and substitutions. It is straightforward to see that the number of edits for substring edits of length at most $2t$ is also at most $2t$. Thus, we should set the number of edits to $2t$ in our problem.

\begin{lemma}\label{lem:binary_2t}
(c.f., \cite{sima2020optimalbinary}) Let $t$ be a constant with respect to $k$. There exist an integer $a\leq 2^{4t\log k+o(\log k)}$ and a labeling function $f_{2t}:\Sigma_2^k\rightarrow\Sigma_{2^{\cR_{2t}(k)}}$, where $\cR_{2t}(k)=O(t^4\log k)$ such that $\{(\bx,a,f_{2t}(\bx)\bmod a):\bx\in\Sigma_2^k\}$ can correct substring edits of length at most $2t$. 
\end{lemma}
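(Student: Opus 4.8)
The plan is to deduce this lemma by reducing the ``substring edit of length at most $2t$'' channel to the ordinary $2t$-edit channel, and then quoting the optimal binary edit-correcting construction of \cite{sima2020optimalbinary} with edit parameter $2t$. The key elementary observation is that a single substring edit of length at most $2t$ --- replacing a substring of length $\ell_1\le 2t$ by a substring of length $\ell_2\le 2t$ --- can be realized by at most $2t$ ordinary edit operations (deletions, insertions, substitutions): substitute the first $\min(\ell_1,\ell_2)$ affected positions and then perform $|\ell_1-\ell_2|$ deletions or insertions, for a total of $\max(\ell_1,\ell_2)\le 2t$ edits. Hence, writing $B_{2t}(\bx)$ for the set of all binary strings obtainable from $\bx$ by at most $2t$ edits, every string reachable from $\bx$ by one substring edit of length at most $2t$ lies in $B_{2t}(\bx)$.

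Next I would invoke the standard fact that edit operations are mutually reversible --- an insertion undoes a deletion and vice versa, and a substitution undoes a substitution --- so $\by\in B_{2t}(\bx)$ if and only if $\bx\in B_{2t}(\by)$. Consequently, if a code $\cC$ corrects $2t$ edits, i.e. $B_{2t}(\bx)\cap B_{2t}(\bz)=\emptyset$ for all distinct $\bx,\bz\in\cC$, then it also corrects a single substring edit of length at most $2t$: if $\bx\in\cC$ is transmitted and $\bx'$ is received after such an edit, then $\bx'\in B_{2t}(\bx)$ and no other codeword $\bz$ satisfies $\bx'\in B_{2t}(\bz)$, so $\bx$ is recovered uniquely (equivalently, the set of codewords confusable with $\bx$ under one substring edit of length at most $2t$ is empty).

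It then remains to instantiate a $2t$-edit-correcting code of the required form and with the stated parameters. I would apply the systematic construction of \cite{sima2020optimalbinary} with edit parameter $2t$ (constant in $k$), which yields an integer $a\le 2^{4t\log k+o(\log k)}$ and a labeling map $f_{2t}:\Sigma_2^k\to\Sigma_{2^{\cR_{2t}(k)}}$ with $\cR_{2t}(k)=O((2t)^4\log k)=O(t^4\log k)$ such that $\{(\bx,a,f_{2t}(\bx)\bmod a):\bx\in\Sigma_2^k\}$ corrects $2t$ edits; by the reduction above this same code corrects substring edits of length at most $2t$, which is exactly the claimed statement. Since $t$ is constant with respect to $k$, all asymptotics absorb the numerical factors and hold as written.

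The only point requiring care --- and hence the main obstacle, such as it is --- is the reduction step: one must check that a length-changing substring edit is genuinely captured by at most $2t$ ordinary edits, and that the customary ``distinct codewords have disjoint $2t$-edit balls'' formulation of a $2t$-edit-correcting code really does translate into correctability over the substring-edit channel. Reversibility of edits is what makes this clean --- it is the exact analogue of the familiar deletion-ball argument --- so once that is spelled out, the rest is a direct appeal to the bounds in \cite{sima2020optimalbinary}.
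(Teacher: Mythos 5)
Your proposal is correct and follows essentially the same route as the paper: the paper states this lemma as a citation to \cite{sima2020optimalbinary} instantiated with edit parameter $2t$, justified by the one-line observation that a substring edit of length at most $2t$ amounts to at most $2t$ ordinary edits, which is exactly the reduction you spell out (together with the standard reversibility/disjoint-ball argument). Your write-up simply makes explicit what the paper leaves as ``straightforward to see.''
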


To extend this base code to nonbinary, we can apply the code in Lemma~\ref{lem:binary_2t} to each row of $A(\bu)$ for $\bu\in\Sigma_q^k$. Therefore, we can get the following lemma for $q$-ary sequences.
\begin{lemma}\label{lem:qaryhash}
Let $t$ be a constant with respect to $k$. There exist an integer $a_q\leq 2^{\lceil\log q \rceil(4t\log k+o(\log k))}$ and a labeling function $f_{2t}^q:\Sigma_q^k\rightarrow\Sigma_{2^{\lceil\log q\rceil\cR_{2t}(k)}}$, where $f_{2t}^q(\bu)=\sum_{i=1}^{\lceil\log q\rceil} 2^{\cR_{2t}(k)(i-1)}f_{2t}(A(\bu)_i)$ such that $\{(\bu,a_q,f_{2t}^q(\bu)\bmod a_q):\bu\in\Sigma_q^k\}$ can correct substring edits of length at most $2t$ in $q$-ary sequences.
\end{lemma}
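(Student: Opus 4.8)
The plan is to lift the binary construction of Lemma~\ref{lem:binary_2t} to the $q$-ary alphabet row by row, using the binary representation matrix $A(\cdot)$ of Section~\ref{section:mapping}; the shape of the statement already points to this, since $a_q$ has size roughly $a^{\lceil\log q\rceil}$ and $f_{2t}^q$ has bit-length $\lceil\log q\rceil\cR_{2t}(k)$, obtained by stacking the $\lceil\log q\rceil$ row hashes $f_{2t}(A(\bu)_i)$ into disjoint blocks of $\cR_{2t}(k)$ bits. First I would record the compatibility fact: a substring edit of length at most $2t$ applied to $\bu\in\Sigma_q^k$ replaces at most $2t$ consecutive columns of $A(\bu)$ by at most $2t$ consecutive columns, so on each row $A(\bu)_i$ it acts as a substring edit of length at most $2t$, occurring at the same positions for every $i$; consequently, if $\bu'$ denotes the edited word, then $A(\bu')_i$ is precisely the image of $A(\bu)_i$ under the induced row edit.

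Second, I would show that $f_{2t}^q$ separates confusable pairs: if $\bu\neq\bv$ and some substring edit of length at most $2t$ on $\bu$ and some such edit on $\bv$ yield a common word, then $f_{2t}^q(\bu)\neq f_{2t}^q(\bv)$. Since $\bu\neq\bv$, the matrices $A(\bu)$ and $A(\bv)$ differ in some row $j$, and by the previous step $A(\bu)_j\neq A(\bv)_j$ are confusable under a substring edit of length at most $2t$. The labeling function of \cite{sima2020optimalbinary} underlying Lemma~\ref{lem:binary_2t} separates any two distinct binary strings that are confusable under such an edit (a property the construction relies on, since its modulus can be chosen only because the hash differences are nonzero), hence $f_{2t}(A(\bu)_j)\neq f_{2t}(A(\bv)_j)$; because $f_{2t}$ takes values in $[[2^{\cR_{2t}(k)}]]$, the terms $2^{\cR_{2t}(k)(i-1)}f_{2t}(A(\bu)_i)$ occupy pairwise disjoint $\cR_{2t}(k)$-bit blocks of $f_{2t}^q(\bu)$, so a difference in block $j$ forces $f_{2t}^q(\bu)\neq f_{2t}^q(\bv)$.

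Third, I would pick the modulus $a_q$ by the prime-selection argument used in \cite{sima2020syndrome}. The number $N$ of words confusable with a fixed $\bu$ is polynomial in $k$ (at most $k^2(2t)^4 q^{4t}$, accounting for the position, the removed length, and the inserted string of each of the two substring edits of length at most $2t$ relating two confusable words), and every nonzero difference $f_{2t}^q(\bu)-f_{2t}^q(\bv)$ has absolute value below $2^{\lceil\log q\rceil\cR_{2t}(k)}$; hence the product of all these differences is a nonzero integer with fewer than $N\lceil\log q\rceil\cR_{2t}(k)$ prime factors, so a sufficiently small prime $a_q$ divides none of them. By the prime number theorem $a_q=O\!\big(N\lceil\log q\rceil\cR_{2t}(k)\log(N\lceil\log q\rceil\cR_{2t}(k))\big)$, which, using $\cR_{2t}(k)=O(t^4\log k)$, that $N$ is polynomial in $k$, and that $t$ is constant, is at most $2^{\lceil\log q\rceil(4t\log k+o(\log k))}$. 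For such an $a_q$, $f_{2t}^q(\bu)\bmod a_q$ still distinguishes $\bu$ from every confusable $\bv$, so the decoder, given $\bu'$ together with $(a_q,\,f_{2t}^q(\bu)\bmod a_q)$, enumerates the polynomially many words that could have produced $\bu'$ and outputs the unique one whose hash reduces to the stored value, namely $\bu$.

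The only step I expect to require real care is the separation property invoked in the second paragraph: it is not literally in the statement of Lemma~\ref{lem:binary_2t} and must be read off from the construction in \cite{sima2020optimalbinary}. If one wishes to sidestep it, Lemma~\ref{lem:binary_2t} can be applied verbatim to each row, storing a pair $(a_i,\,f_{2t}(A(\bu)_i)\bmod a_i)$ per row and decoding row by row; this gives a code of the same redundancy but presents that redundancy as a list of $\lceil\log q\rceil$ pairs rather than the single pair $(a_q,\,f_{2t}^q(\bu)\bmod a_q)$ asserted here, so obtaining the packed form still needs either a common modulus for all rows or the enumeration argument above. The remaining ingredients — the induced-edit observation, the protection of the appended redundancy against edits falling inside it (handled exactly as in the binary systematic code), and the magnitude and size estimates for $a_q$ — are routine.
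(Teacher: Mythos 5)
Your proposal is correct and follows essentially the same route as the paper, which justifies this lemma only with the one-line remark that Lemma~\ref{lem:binary_2t} is applied to each row of $A(\bu)$; your row-by-row lifting, the observation that a substring edit of at most $2t$ columns induces synchronized substring edits of at most $2t$ on every row, and the packing of the row hashes into disjoint $\cR_{2t}(k)$-bit blocks are exactly the intended argument. Your caution about the separation property of $f_{2t}$ (that it distinguishes confusable binary strings before any modular reduction) is well placed, as that is indeed the property of the construction in \cite{sima2020optimalbinary} that the paper implicitly relies on but never states.
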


From Lemma~\ref{lem:loc_permutation}, we can narrow the deletion to an $O(\log n)$ interval in the permutation $\bpi$. Then, we will make use of Lemma~\ref{lem:qaryhash} to construct a code for correcting substring edits of length at most $2t$ in the corresponding overlapping ranking sequence $\bp_{t+1}(\bpi)$ with this positional knowledge (We omit the argument $t+1$ and $\bpi$ from $\bp_{t+1}(\bpi)$ and simply write $\bp$ in the rest of this subsection). 

We split the sequence $\bp$ into two sets $\bp_{e}=\left\{ \bp_{e,1},\bp_{e,2},\dotsc,\bp_{e,s}\right\}$ and $\bp_o=\left\{ \bp_{o,1},\bp_{o,2},\dotsc,\bp_{o,s+1}\right\}$, where $s=n/2P$ and $P=t2^{2t+2}\lceil\log n\rceil$ for even and odd blocks, respectively, which is the same as the manner in Section~\ref{sec:noncort_p}. Similarly, we can use the syndrome $(a_q,f_{2t}^q(\bu)\bmod a_q)$ to protect each block of length $2P$, as in the following lemma. All of notations are analogous to those from Lemma~\ref{lem:p_bounded_tburst} except that $\bx$ is replaced with $\bp$ and $a_{e,i}/a_{o,i}$ are replaced with $a^q_{e,i}/a^q_{o,i}$, where $a^q_{e,i}=E_{tB}(\bp_{e,i})$ for $i\in [s]$ and similarly $a^q_{o,i}=E_{tB}(\bp_{o,i})$ for $i \in [s+1]$.

\iffalse
First, we spilt the sequence $\bp$ into two subsequences, $\bp_{e}=\left\{ \bp_{e,1},\bp_{e,2},\dotsc,\bp_{e,s}\right\}$ and $\bp_o=\left\{ \bp_{o,1},\bp_{o,2},\dotsc,\bp_{o,s}\right\}$, where $s=n/2P$ and $P=t2^{2t+2}\lceil\log n\rceil$:
\begin{itemize}
    \item {\bf{Even Blocks}}: $\bp_{e,i}=\bp_{[(2i-2)P+1,2iP]},\   i=1,\dotsm,s$
    \item {\bf{Odd Blocks}}:\\
    $\bp_{o,i}=\begin{cases} \bp_{[1,P]}, &\ i=1;\\
    \bp_{[(2i-1)P+1,(2i+1)P]}, &\ i=2,\dotsm,s-1;\\
    \bp_{[n-P+1,n]}, &\ i=s.
    \end{cases}$
\end{itemize}
\fi

%Note that $\bp_e$ and $\bp_o$ each cover the sequence $\bp$ and that any interval of length $P$ is fully contained in at least one block in $\bp_e$ or in $\bp_o$. %After splitting the overlapping ranking sequence into even and odd blocks, 

\begin{lemma}\label{lem:correctoverlap}
There exists an integer $a= 2^{\lceil\log q\rceil(4t\log (2P)+o(\log P))}$ such that for any $d_1,e_1\in[[a]]$, $d_2,e_2\in[[a]]$, the code
\begin{equation*}
    \begin{aligned}
    \cC_{2t}(n,t,P) = \Large\{\bp\in\Sigma_{(t+1)!}^n: &\sum_{i=1}^{s} a^q_{e,i}=d_1 \bmod a,
     \sum_{i=1}^{s} \left(f_{2t}^q(\bp_{e,i}) \bmod a^q_{e,i})\right)=e_1 \bmod a,\\
    &\sum_{i=1}^{s+1} a^q_{o,i}=d_2 \bmod a,
    \sum_{i=1}^{s+1} \left(f_{2t}^q(\bp_{o,i}) \bmod a^q_{o,i})\right)=e_2 \bmod a\Large\}.
    \end{aligned}
\end{equation*}
can correct one substring edit of length at most $2t$ with the knowledge that the location of the edited symbols is within $P$ consecutive positions. Furthermore, there exist choices for $d_1,d_2$ and $e_1,e_2$ such that the redundancy of the code is at most $4\log a$.
\end{lemma}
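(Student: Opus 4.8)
The plan is to mirror the argument for Lemma~\ref{lem:p_bounded_tburst}, replacing the burst-deletion base code with the substring-edit base code from Lemma~\ref{lem:qaryhash}. First I would observe that, by Claim~\ref{cl:rankconf}, the error acting on $\bp=\bp_{t+1}(\bpi)$ is a single substring edit that replaces a substring of length at most $2t$ by a substring of length at most $2t$, and that by the localizing code $\cC^P_{loc}$ (Lemma~\ref{lem:loc_permutation}) we know an interval of length $P=t2^{2t+2}\lceil\log n\rceil$ that contains the edited positions. The key structural fact is the one already used in Lemma~\ref{lem:p_bounded_tburst}: the two block systems $\bp_e$ and $\bp_o$ each cover $\bp$, they use blocks of length $2P$, and \emph{any} window of $P$ consecutive positions is entirely contained in at least one block of $\bp_e$ or of $\bp_o$ (a window of length $P$ straddling an even-block boundary sits inside the odd block centered on that boundary, and vice versa).

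The decoding argument then runs as follows. Given the corrupted permutation $\bpi'$, recover the set of deleted symbols (they are exactly the symbols of $[n]$ missing from $\bpi'$), and form the corrupted overlapping ranking sequence. The localizing interval of length $P$ lies inside a unique block, say $\bp_{e,l}$, of one of the two systems; assume without loss of generality it is an even block. All other even blocks $\bp_{e,i}$, $i\neq l$, are untouched by the edit, so the decoder can compute $a^q_{e,i}=E_{tB}(\bp_{e,i})$ and $f^q_{2t}(\bp_{e,i})\bmod a^q_{e,i}$ for every $i\neq l$. Subtracting these (modulo $a$) from the stored parity values $d_1$ and $e_1$ yields $a^q_{e,l}$ and $f^q_{2t}(\bp_{e,l})\bmod a^q_{e,l}$. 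Since the edit of $\bpi'$ affects at most $2t$ consecutive positions of $\bpi$, it affects at most $2t$ consecutive positions of $\bp$ (by the structure of the overlapping ranking sequence, shifting the window by one changes only a bounded neighborhood), hence at most one substring edit of length at most $2t$ has occurred inside $\bp_{e,l}$; by Lemma~\ref{lem:qaryhash}, the triple $(\bp'_{e,l},a^q_{e,l},f^q_{2t}(\bp_{e,l})\bmod a^q_{e,l})$ determines $\bp_{e,l}$ uniquely. Reassembling the blocks recovers $\bp=\bp_{t+1}(\bpi)$, and Lemma~\ref{lem:overlappingunique} then recovers $\bpi$ uniquely from $\bp_{t+1}(\bpi)$ together with the knowledge of the deleted symbols. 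For the redundancy: there are four parity symbols $d_1,e_1,d_2,e_2$, each in $[[a]]$ with $a=2^{\lceil\log q\rceil(4t\log(2P)+o(\log P))}$, so by a pigeonhole/averaging argument over the $a^4$ joint values there is a choice whose corresponding code has size at least $\left|\Sigma_{(t+1)!}^{n}\right|/a^4$, giving redundancy at most $4\log a$.

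The step I expect to be the main obstacle is the claim that the burst of at most $t$ deletions in $\bpi$ — together with the substitution tail it induces per Claim~\ref{cl:rankconf} — stays confined to a single block when intersected with the localizing window, i.e. that it really is \emph{one} substring edit of length at most $2t$ rather than edits spread across a block boundary. This is exactly why the block length is chosen to be $2P$ while the localizing interval has length $P$: the affected positions of $\bp$ form a contiguous segment of length at most $2t\le P$ contained in the localizing interval, which in turn lies in one block of $\bp_e$ or of $\bp_o$; I would spell this inclusion out carefully, being mindful that "length at most $2t$" on the permutation side must be translated to "length at most $2t$" on the overlapping-ranking side, which holds because each entry $p_i=r(\pi_i,\dots,\pi_{i+t})$ depends only on a window of $t+1$ consecutive permutation symbols. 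The remaining details (the modular bookkeeping for the parity subtraction, and the averaging argument) are routine and parallel to Lemma~\ref{lem:p_bounded_tburst}.
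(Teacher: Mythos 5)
Your proposal is correct and follows essentially the same route as the paper, which proves this lemma by direct analogy with Lemma~\ref{lem:p_bounded_tburst}: the length-$P$ window lies entirely inside one block of $\bp_e$ or $\bp_o$, the syndromes of the unaffected blocks are subtracted from the stored parities to recover $a^q_{e,l}$ and $f^q_{2t}(\bp_{e,l})\bmod a^q_{e,l}$, the base code of Lemma~\ref{lem:qaryhash} recovers the block, and averaging over the $a^4$ parity choices gives the redundancy. Your additional care about the substitution tail from Claim~\ref{cl:rankconf} spilling slightly outside the localizing window is a worthwhile detail that the paper glosses over, but it does not change the argument.
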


\iffalse
\begin{proof}
The interval of length $P$ in which the edit has occurred is fully contained in a block of $\bp_e$ or in a block of $\bp_o$. Without loss of generality, let us assume the former and also assume that the index of this block is $l$. We can recover all blocks of $\bp_e$ except $\bp_{e,l}$. The value of $f_{2t}^q(\bp_{e,l})$ can be determined by solving the equation $\sum_{i=1}^{s} f_{2t}^q(\bp_{e,i})\equiv d_1 \bmod a$. Then, by Lemma~\ref{lem:qaryhash}, the block $\bp_{e,l}$ can be recovered.  
\end{proof}
\fi
%Since the deletion position has been located to be in a an interval of length $P=\cO(\log n)$, blocks $\bp_{e,j}$ and $\bp_{o,j}$ for all $j=1,\dotsm,s$ can be recovered except two possible blocks indexed $l$ in even or odd blocks, and then we can get the corresponding syndrome of $\bp_{e,l}$ or $\bp_{o,l}$. Therefore, suppose the deletion occurred in the block $\bp_{e,l}$, which can be recovered through the syndrome of $\left(f_{2t}^q(\bp_{e,l})\bmod a\right)$. 

\subsection{Overall construction}\label{subsec:per_all}

Building on the previous sections, we can present the overall construction of the permutation code for correcting a burst of at most $t$ deletions. First, we apply the code $\cC_{loc}^P(n,c_0,c_1)$ to narrow the deletion into an interval of length $t2^{2t+2}\lceil\log n\rceil$ with redundancy $\log n+O(1)$. Then, we recover the permutation via $\cC_{2t}(n,(t+1)!, t2^{2t+2}\lceil\log n\rceil)$ for correcting the corresponding overlapping ranking sequence with the positional knowledge of the deletion.

\iffalse
\begin{construction}
For arbitrary integers $n$, $a$, $c_0$, $c_1$, $d_1$ and $d_2$, we define a  as
\begin{equation*}
    \cP_{\leq t}(n) = \left \{ \bpi\in \cS_n: \bb_P(\bpi) \in \cC_{loc}^P(n,c_0,c_1),\quad
    \bp_{t+1}(\bpi)\in \cC_{2t}(n,(t+1)!,  t2^{2t+2}\lceil\log n\rceil) \right\}
\end{equation*}
\end{construction}
\fi

\begin{theorem}\label{eq:permumain}
There exists an integer $a=2^{\lceil\log (t+1)!\rceil(4t\log \log n+o(\log \log n))}$ such that for all $c_0\in[[4]]$, $c_1\in[[2n]]$, $ d_1,d_2\in[[a]]$ and $e_1,e_2\in[[a]]$. The permutation code $\cP_{\leq t}(n)$ over $\cS_n$
\begin{equation*}
    \cP_{\leq t}(n) = \left \{ \bpi\in \cS_n: \bb_P(\bpi) \in \cC_{loc}^P(n,c_0,c_1),\quad
    \bp_{t+1}(\bpi)\in \cC_{2t}(n,(t+1)!,  t2^{2t+2}\lceil\log n\rceil) \right\}
\end{equation*}
is capable of correcting a burst of at most $t$ deletions with the redundancy at most $\log n+O(\log\log n)$ bits.
\end{theorem}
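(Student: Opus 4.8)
The plan is to decode in two stages, exactly mirroring the architecture set up in Sections~\ref{sec:per_loc}--\ref{sec:per_coroverlap}: first use the binary image $\bb_P(\bpi)$ to pin the burst down to an interval of length $O(\log n)$, and then, armed with that positional knowledge, correct the error that the burst induces in the overlapping ranking sequence $\bp_{t+1}(\bpi)$; finally lift the recovered ranking sequence back to $\bpi$. The redundancy is then obtained by a pigeonhole argument over the two families of parity constraints.

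For correctness, let $\bpi\in\cP_{\le t}(n)$ and let $\bpi'$ be the result of deleting a burst of $t'\le t$ consecutive symbols. Since $\bpi$ is a permutation of $[n]$, comparing the symbol sets of $\bpi$ and $\bpi'$ immediately reveals $t'$ and the (multi)set of deleted symbols. The key steps, in order, are: (i) observe that $\bb_P(\bpi')$ is precisely $\bb_P(\bpi)$ with the same $t'$ consecutive coordinates removed, so $\bb_P(\bpi')\in D_{\le t}(\bb_P(\bpi))$; since $\bb_P(\bpi)\in\cC_{loc}^P(n,c_0,c_1)$, Lemma~\ref{lem:loc_permutation} yields an interval $I$, computable at the decoder, of length at most $\delta=t2^{2t+2}\lceil\log n\rceil$ that contains the burst. (ii) Compute $\bp'=\bp_{t+1}(\bpi')$ from $\bpi'$; by Claim~\ref{cl:rankconf}, $\bp'$ is obtained from $\bp=\bp_{t+1}(\bpi)$ by at most $t$ consecutive substitutions followed by a burst of at most $t$ deletions, i.e.\ by a single \emph{substring edit of length at most $2t$} (a substring of length $t+t'\le 2t$ replaced by one of length $t$), and this edit is confined to the window $[\min I-t,\ \max I+t]$, which has length at most $P=t2^{2t+2}\lceil\log n\rceil$ once the additive $O(1)$ slack is absorbed into the constant. (iii) Since $\bp\in\cC_{2t}(n,(t+1)!,P)$, Lemma~\ref{lem:correctoverlap} recovers $\bp$ exactly from $\bp'$ using this localization. (iv) Enumerate the permutations obtained by reinserting the known deleted symbols into $\bpi'$ as a consecutive block, in every order and at every position; by Lemma~\ref{lem:overlappingunique} exactly one such candidate has overlapping ranking sequence equal to $\bp$, and that candidate is $\bpi$. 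This proves that $\cP_{\le t}(n)$ corrects a burst of at most $t$ deletions.

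For the redundancy, combine two averaging steps. By Lemma~\ref{lem:size_perloc} there is a choice of $(c_0,c_1)$ for which at least $n!/(16n)$ permutations satisfy $\bb_P(\bpi)\in\cC_{loc}^P(n,c_0,c_1)$. Over that subset, the composition of $\bpi\mapsto\bp_{t+1}(\bpi)$ with the four syndrome functions defining $\cC_{2t}$ takes values in a set of size at most $a^4$, where $a=2^{\lceil\log(t+1)!\rceil(4t\log(2P)+o(\log P))}$; by the pigeonhole principle there is a choice of $(d_1,d_2,e_1,e_2)$ for which at least $\frac{1}{a^4}\cdot\frac{n!}{16n}$ of these permutations lie in $\cP_{\le t}(n)$. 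Hence $|\cP_{\le t}(n)|\ge n!/(16n\,a^4)$, and since $P=\Theta(\log n)$ and $t$ is a constant, $4\log a=O(\log\log n)$, so the redundancy is at most $\log n+4+4\log a=\log n+O(\log\log n)$.

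The main obstacle is not the final assembly --- which is essentially mechanical given the preceding results --- but the structural facts those results encode: Lemma~\ref{lem:overlappingunique}, that a consecutive reinsertion into $\bpi'$ is uniquely determined by the overlapping ranking sequence, and Claim~\ref{cl:rankconf}, that a burst of at most $t$ deletions in $\bpi$ corresponds to a bounded-length substring edit in $\bp_{t+1}(\bpi)$. The one remaining delicate point in the theorem's own proof is the constant bookkeeping in step (ii): verifying that the localization parameter $\delta$ for the binary image and the block parameter $P$ of $\cC_{2t}$ are chosen consistently enough that the edited window in $\bp$ genuinely lies inside a single even or odd block; this is routine but should be stated carefully.
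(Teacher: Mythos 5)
Your proposal is correct and follows essentially the same route as the paper: localize the burst via $\bb_P(\bpi)$ and Lemma~\ref{lem:loc_permutation}, correct the induced length-$\le 2t$ substring edit in $\bp_{t+1}(\bpi)$ via Claim~\ref{cl:rankconf} and Lemma~\ref{lem:correctoverlap}, recover $\bpi$ uniquely via Lemma~\ref{lem:overlappingunique}, and obtain the redundancy by the two-stage pigeonhole argument giving $|\cP_{\le t}(n)|\ge n!/(16n\,a^4)$. In fact your write-up is more explicit than the paper's (which defers correctness to the preceding subsections), and your flagged concern about the consistency of $\delta$ with the block length of $\cC_{2t}$ is a legitimate bookkeeping point the paper also leaves implicit.
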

\begin{proof}
The error-correcting capability of the code has already been discussed. From Lemma~\ref{lem:correctoverlap}, the redundancy of the second part in the permutation code $\cP_{\leq t}(n)$ for correcting the overlapping ranking sequence will be $4\log a$. Since $P=t2^{2t+2}\lceil\log n\rceil$ and $t$ is a constant, we have
\begin{equation*}
    4\log a=O(\log\log n).
\end{equation*}
Combining with Lemma~\ref{lem:size_perloc}, the code size $|\cP_{\leq t}(n)|$ is at least
\begin{equation*}
    |\cP_{\leq t}(n)|=\frac{n!}{16n\cdot a^4}\geq\frac{n!}{16n\cdot 2^{O(\log\log n)}}.
\end{equation*}
Therefore, the total redundancy of the permutation code $\cP_{\leq t}(n)$ is at most $\log n+O(\log\log n)$.
\end{proof}

\section{Conclusion}\label{sec:conclusion}
Motivated by applications to DNA storage, we have constructed non-binary codes capable of correcting bursts of deletions. By considering a variation of the well-known Levenshtein code, we presented a non-binary code capable of correcting bursts of length at most $2$. We then developed codes capable of correcting bursts of length at most $t$ before turning our attention to burst-error-correcting codes for permutations. Each of the proposed codes in this paper is nearly optimal in terms of the number of redundant bits.

Although in many cases our results improve upon the prior art, there are many avenues for future research:
\begin{itemize}
    \item \textit{\textbf{Systematic $t$-burst-error-correcting codes}}: Although the non-binary codes presented in this work were nearly optimal in terms of their redundancy, the proposed codes were non-systematic. As discussed in Section~\ref{sec:noncor2}, even for the case of $t=2$, the authors are unaware of a systematic encoding that approaches our results.
    \item \textit{\textbf{Codes correcting bursts of edits}}:
    Codes that correct bursts of insertions/deletions/substitutions have applications not only in DNA storage but also in other areas such as in the document exchange problem. Currently, no optimal constructions for this setup have been reported in the open literature.
    \item  \textit{\textbf{Codes correcting multiple bursts of deletions}}: Even for the case of $2$ bursts of deletions, there are many different problems of interest. One could consider the setup where the bursts are each of the same length or possibly of different lengths. Additionally, the problem of constructing codes correcting multiple bursts of insertions/deletions/substitutions is another area of future work.
\end{itemize}

\bibliographystyle{IEEEtran}
\bibliography{references}

\begin{appendices}

\iffalse
\section{Proof of Lemma \ref{lem:2nchoosen}}
\begin{proof}

(of Lemma \ref{lem:2nchoosen}) First, we have 
\begin{equation*}
\left(\frac{n+\frac{1}{2}}{n+1}\right)^{2} =\frac{n^{2}+n+\frac{1}{4}}{n^{2}+2 n+1} \leq \frac{n+\frac{1}{3}}{n+\frac{4}{3}}
\end{equation*}

Thus,
\begin{equation*}
\frac{{{2n+2}\choose{n+1}}}{{2n\choose n}} =4 \frac{n+\frac{1}{2}}{n+1} 
 \leq 4 \sqrt{\frac{n+\frac{1}{3}}{n+\frac{4}{3}}}
\end{equation*}
which implies ${2n\choose n}\frac{n+1/3}{4^n}$ is decreasing.

We also have
\begin{equation*}
   \left(\frac{n+\frac{1}{2}}{n+1}\right)^{2} =\frac{n^{2}+n+\frac{1}{4}}{n^{2}+2 n+1} \leq \frac{n+\frac{1}{4}}{n+\frac{5}{4}}
\end{equation*}
Then,
\begin{equation*}
\frac{{{2n+2}\choose{n+1}}}{{2n\choose n}} =4 \frac{n+\frac{1}{2}}{n+1} 
 \leq 4 \sqrt{\frac{n+\frac{1}{4}}{n+\frac{5}{4}}}
\end{equation*}
which implies ${2n\choose n}\frac{n+1/4}{4^n}$ is increasing.

From 
\begin{equation*}
    \lim _{n \rightarrow \infty} \frac{\sqrt{\pi n}}{4^{n}}{2n \choose n}=1
\end{equation*}

We can show that
\begin{equation*}
\frac{4^{n}}{\sqrt{\pi\left(n+\frac{1}{3}\right)}} \leq {2n \choose n}\leq \frac{4^{n}}{\sqrt{\pi\left(n+\frac{1}{4}\right)}}
\end{equation*}
\end{proof}
\fi

\section{Proof of Theorem \ref{thm:nonasymptotic}}
In order to make use of this technique, we need a few results related to the set $D_t(\bu)$, which appear as Claims~\ref{cl:dballdec} and \ref{cl:dballsize}. In the following, let $N(n,t,i) = \left| \left \{ \bu \in \Sigma_q^n : |D_t(\bu)| = i \right \} \right|$. 

\begin{claim}\label{cl:dballdec} (c.f.,\cite[Lemma 1]{schoeny2017codes}) Let $\bu \in \Sigma_q^n$ and suppose $\bu' \in D_t(\bu)$. Then $|D_t(\bu)| \geq |D_t(\bu')|$.
\end{claim}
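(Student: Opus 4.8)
The plan is to set up an explicit injection from $D_t(\bu')$ into $D_t(\bu)$. Write $\bu = u_1 \cdots u_n$ and suppose $\bu'$ is obtained from $\bu$ by deleting the block $u_{j+1}, \ldots, u_{j+t}$, so $\bu' = u_1 \cdots u_j u_{j+t+1} \cdots u_n \in \Sigma_q^{n-t}$. Any element $\bv \in D_t(\bu')$ is obtained from $\bu'$ by deleting a further window of $t$ consecutive positions of $\bu'$. I would fix $\bv \in D_t(\bu')$ and show it can also be realized as an element of $D_t(\bu)$, by exhibiting a single block of $t$ consecutive coordinates of $\bu$ whose deletion yields $\bv$, in a way that depends injectively on $\bv$ (or more precisely on the pair $(\bv, \text{its canonical deletion interval in } \bu')$, handled via a leftmost-deletion convention).

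The key case analysis is by the location of the second deletion window in $\bu'$ relative to position $j$. First, if the second window lies entirely within $u_1 \cdots u_j$ or entirely within $u_{j+t+1} \cdots u_n$, then the two deletions in $\bu$ are to disjoint blocks; but by the standard fact that deleting two separated length-$t$ bursts can be re-expressed, we observe $\bv$ is also reachable from $\bu$ by a single burst: indeed, deleting $u_{j+1}\cdots u_{j+t}$ and then a separated block is the same multiset of coordinates removed as deleting those two blocks in $\bu$, and since the resulting string $\bv$ has length $n - 2t$... wait — this is not length $n-t$. Let me restate: $\bv \in D_t(\bu')$ has length $n - 2t$, but $D_t(\bu) \subseteq \Sigma_q^{n-t}$. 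So the injection must be from $D_t(\bu')$ into $D_t(\bu)$ where both are viewed at the appropriate lengths — the correct statement is that $D_t(\bu')$ sits inside $D_{2t}(\bu)$, and what Claim 2 really asserts is the monotonicity $|D_t(\bu)| \ge |D_t(\bu')|$ as \emph{counts}, which is proved by a length-preserving injection $D_t(\bu') \hookrightarrow D_t(\bu)$ at length $n-2t$ versus $n-1-... $. I would instead follow the cleaner route used in \cite[Lemma 1]{schoeny2017codes}: show that for each $\bv \in D_t(\bu')$ there is an intermediate $\bw \in D_t(\bu)$ with $\bv \in D_t(\bw)$, and that $\bw$ together with the additional deletion data recovers $\bv$; then compose with a combinatorial counting argument. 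The cleanest formulation: define $\Phi: D_t(\bu') \to D_t(\bu)$ as follows — given $\bv$ obtained from $\bu'$ by deleting window $W'$, let $W$ be the window of $\bu$ of length $t$ chosen as the \emph{leftmost} length-$t$ window such that deleting $W$ from $\bu$ produces a string $\bw$ with $\bv \in D_t(\bw)$ via the leftmost realization; one checks such $W$ exists (take $W = u_{j+1}\cdots u_{j+t}$ composed appropriately with $W'$) and that $\bv$ determines $\Phi(\bv)$ and conversely $\Phi(\bv)$ together with $\bv$'s leftmost-deletion signature recovers $\bv$, giving injectivity.

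The main obstacle I anticipate is getting the bookkeeping of overlapping versus disjoint deletion windows exactly right so that the map is genuinely well-defined and injective — in particular handling the case where $W'$ overlaps the ``seam'' at position $j$ in $\bu'$, because then the two deletions in $\bu$ merge into a single burst of length between $t+1$ and $2t$, and one must argue that $\bv$ is still reachable from $\bu$ by \emph{exactly} $t$ deletions (possibly choosing a different seam position). I expect this reduces to the observation that for any two overlapping or adjacent bursts there is a ``canonical'' decomposition, and that the leftmost-deletion normal form for strings (the same device used in the classical proof that $|D_1(\bu)| = (\text{number of runs of } \bu)$) makes the inverse well-defined. Once that normal-form lemma is in place, the injection and hence the inequality follow, and the rest of the proof of Theorem~\ref{thm:nonasymptotic} proceeds by the linear-programming/counting argument of \cite{schoeny2017codes} using Claims~\ref{cl:dballdec} and \ref{cl:dballsize}.
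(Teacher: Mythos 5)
Your proposal has a genuine gap: the map $\Phi$ you end up defining is not injective, and your own description concedes this when you say that $\Phi(\bv)$ \emph{together with} $\bv$'s leftmost-deletion signature recovers $\bv$ --- injectivity requires that $\Phi(\bv)$ alone determine $\bv$. Concretely, take $t=1$, $\bu=0011$ and $\bu'=011\in D_1(\bu)$. Then $D_1(\bu')=\{11,01\}$, and for both $\bv=11$ and $\bv=01$ the leftmost position of $\bu$ whose deletion yields a string $\bw$ with $\bv\in D_1(\bw)$ is position $1$, giving $\bw=011$ in both cases; so $\Phi(11)=\Phi(01)=011$. The underlying issue is that elements of $D_t(\bu')$ have length $n-2t$ while elements of $D_t(\bu)$ have length $n-t$, so a single candidate $\bw\in D_t(\bu)$ typically covers many elements of $D_t(\bu')$, and no rule of the form ``pick the window of $\bu$ whose deletion realizes $\bv$'' can separate them. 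The difficulty you flag (overlapping versus disjoint windows, the seam at position $j$) is not where the proof breaks; the failure is already present for $t=1$ with disjoint windows.

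The intended argument (this is what \cite[Lemma 1]{schoeny2017codes} does; the paper itself only cites the result and supplies no proof) goes through Levenshtein's formula, which the paper restates in its proof of Claim~\ref{cl:dballsize}: writing $\bu$ as a $t\times\frac{n}{t}$ array $A_q(\bu)$ whose $j$th row is $\bu_j=(u_j,u_{j+t},u_{j+2t},\dotsc)$, one has $|D_t(\bu)|=\bigl(\sum_{j=1}^{t}N_r(\bu_j)\bigr)-t+1$, where $N_r(\bu_j)$ is the number of runs of row $j$. A burst of $t$ consecutive deletions hits each residue class modulo $t$ exactly once, i.e.\ removes exactly one symbol from each row, and deleting a single symbol from a sequence never increases its number of runs; hence $\sum_j N_r(\bu'_j)\le\sum_j N_r(\bu_j)$ and the inequality follows immediately. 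If you insist on an injection-style argument you would essentially have to encode this run structure into the map, at which point the formula-based proof is both shorter and already available in the same appendix.
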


\begin{claim}\label{cl:dballsize} For $i \in [n-t+1]$ and $t | n$, we have that
$$ N(n,t,i) = q^t (q-1)^{i-1} \binom{n-t}{i-1}.$$
\end{claim}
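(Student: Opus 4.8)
The claim asks us to count, for a fixed burst length $t$ dividing $n$, the number of sequences $\bu \in \Sigma_q^n$ whose burst-deletion ball $D_t(\bu) \subseteq \Sigma_q^{n-t}$ has exactly $i$ elements. The plan is to characterize $|D_t(\bu)|$ combinatorially. A burst of $t$ deletions starting at position $j+1$ removes $\bu_{[j+1,j+t]}$; two different starting positions $j$ and $j'$ yield the same result precisely when the windows can be ``slid'', which happens exactly when $u_{j+1} = u_{j'+1}$ in the appropriate pattern-matching sense. More precisely, if we think of the $t$-burst as removing one symbol from each of the $t$ interleaved subsequences $\bu^{(0)}, \bu^{(1)}, \dots, \bu^{(t-1)}$ (where $\bu^{(r)}$ consists of positions congruent to $r \bmod t$), then $|D_t(\bu)|$ is governed by the number of distinct single-deletion results across these interleaved strands, and the key quantity is the number of ``runs'' in $\bu$ when viewed with period $t$ — equivalently, $1$ plus the number of indices $j$ with $u_j \neq u_{j+t}$.

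First I would establish the formula $|D_t(\bu)| = 1 + |\{ j \in [n-t] : u_j \neq u_{j+t} \}|$. The direction that the number of distinct outputs is at most this follows because sliding a $t$-window from starting position $j$ to $j+1$ gives the same deletion result when $u_{j+1} = u_{j+1+t}$ (the symbol leaving one end equals the symbol entering), and one shows that all reachable outputs form a contiguous interval of starting positions partitioned according to the equalities $u_{j} = u_{j+t}$; the number of equivalence classes is then $1 + |\{j : u_j \neq u_{j+t}\}|$. The reverse inequality — that distinct classes give genuinely distinct sequences — requires a short argument that two burst deletions with $u_j \neq u_{j+t}$ at the boundary cannot coincide; this is analogous to \cite[Lemma 1]{schoeny2017codes} and can be cited or reproduced.

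Given this characterization, counting is direct: $|D_t(\bu)| = i$ iff exactly $i-1$ of the $n-t$ ``difference positions'' $j \in [n-t]$ satisfy $u_j \neq u_{j+t}$. Decompose $\bu$ into its $t$ strands $\bu^{(0)}, \dots, \bu^{(t-1)}$, each of length $n/t$ (here $t \mid n$ is used). Within each strand, a position contributes a ``difference'' exactly when consecutive symbols of that strand differ, i.e., at a run boundary. So I would count the number of length-$n$ sequences where the total number of run-boundaries across all $t$ strands is exactly $i-1$. To build such a sequence: choose which $i-1$ of the $n-t$ inter-strand-consecutive positions are the boundaries — but these must be distributed among strands — then fill in values. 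The cleaner bookkeeping: for each strand of length $n/t$, specifying the run structure and first symbol determines it. Summing over all ways to place $i-1$ boundaries among the $t$ strands and choosing symbols gives: $q^t$ ways to pick the first symbol of each strand, then for each of the remaining $n - t$ positions we either ``repeat'' (no choice) or ``change'' to any of the other $q-1$ symbols; choosing the $i-1$ change-positions among $n-t$ slots gives $\binom{n-t}{i-1}$ and the changes contribute $(q-1)^{i-1}$. This yields
\[
N(n,t,i) = q^t (q-1)^{i-1} \binom{n-t}{i-1},
\]
as claimed.

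The main obstacle is the first step: rigorously proving $|D_t(\bu)| = 1 + |\{j : u_j \neq u_{j+t}\}|$, and in particular that the ``contiguous interval of starting positions, merged by boundary equalities'' picture is exactly right — one must rule out that two non-adjacent starting positions with different local patterns accidentally produce the same string, and also confirm that the interleaved-strand viewpoint faithfully tracks which outputs coincide. Once that combinatorial identity is in hand, the counting is a routine stars-and-bars / run-decomposition argument and the hypothesis $t \mid n$ makes every strand have equal length $n/t$, keeping the formula clean. I would present the strand decomposition carefully since it is the conceptual core, and relegate the boundary-distinctness verification to a lemma citing \cite{schoeny2017codes}.
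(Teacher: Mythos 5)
Your proposal is correct, and the counting step takes a genuinely different (and arguably cleaner) route than the paper. Both arguments rest on the same structural fact: viewing $\bu$ as $t$ interleaved strands of length $n/t$, the ball size is $|D_t(\bu)| = \bigl(\sum_{j=1}^{t} N_r(\bu_j)\bigr) - t + 1$, which is identical to your $1 + |\{j : u_j \neq u_{j+t}\}|$ since each strand with $N_r$ runs contributes $N_r - 1$ difference positions. The paper simply cites Levenshtein (1966) for this formula rather than proving it, so your plan to cite or reproduce the boundary-distinctness lemma is consistent with the paper's own level of rigor; your sketch of the argument (that $\bv^{(j)} = \bv^{(j')}$ forces $u_k = u_{k+t}$ for every intermediate $k$, so equivalence classes of starting positions are contiguous intervals) is in fact already essentially complete. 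Where you diverge is the enumeration: the paper counts strand by strand using $M(n/t, r) = q(q-1)^{r-1}\binom{n/t - 1}{r-1}$, convolves over all compositions $r_1 + \cdots + r_t = i + t - 1$, and collapses the resulting sum with a generalized Vandermonde identity; you instead construct the sequence directly by choosing the $t$ initial strand symbols ($q^t$ ways), selecting which $i-1$ of the $n-t$ global difference positions are ``changes'' ($\binom{n-t}{i-1}$ ways), and assigning each change one of $q-1$ new values. Your bijective count reaches the closed form in one step with no binomial identity, at the cost of having to observe that the $i-1$ boundaries may be distributed among the strands arbitrarily (which you correctly note is automatic). The divisibility hypothesis $t \mid n$ plays the same role in both arguments, ensuring all strands have equal length.
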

\begin{proof}
We can arrange sequence $\bu=(u_1,u_2,\dotsc,u_n)\in \Sigma_q^n$ into a $t\times \frac{n}{t}$ array $A_q(\bu)$ as the following
    \begin{equation*}
        A_q(\bu)=\left[ \begin{matrix}
{u_{1}}&{u_{t+1}}& \cdots &{u_{n-t+1}}\\
{u_{2}}&{u_{t+2}}& \cdots &{u_{n-t+2}}\\
\vdots&\vdots&\vdots&\vdots\\
{u_{t}}&{u_{2t}}& \cdots &{u_{n}}
\end{matrix} \right]
\end{equation*}

Let $N_r(\bu_j)$ denote the number of runs in the $j$th row of $A_q(\bu)$. The size of $t$-burst-deletion ball $|D_t(\bu)|$ was shown in \cite{levenshtein1966binary}
\begin{equation}\label{eq:DtNr}
    |D_t(\bu)|=\left(\sum_{j=1}^{t}N_r(\bu_j)\right)-t+1.
\end{equation}

Then, the problem of counting the number of $q$-ary sequences with length $n$ whose $t$-burst-deletion ball is $i$ is equivalent to counting the number of $q$-ary sequences with length $n$ for which
\begin{equation}\label{eq:Nrit}
    \left(\sum_{j=1}^{t}N_r(\bu_j)\right)=i+t-1.
\end{equation}
The number of $q$-ary sequences of length $n$ with $r$ runs is
\begin{equation}\label{eq:Mnr}
    M(n,r) := q(q-1)^{r-1}{n-1 \choose r-1}. 
\end{equation}

Combining (\ref{eq:DtNr}), (\ref{eq:Nrit}) and (\ref{eq:Mnr}), we have 
\begin{align*}
    N(n,t,i) =& \sum_{r_1 + r_2 + \cdots + r_t=i+t-1} M\left(\frac{n}{t},r_1\right) M\left(\frac{n}{t},r_2\right) \cdots M\left(\frac{n}{t},r_t\right) \\
    =& q^t (q-1)^{i-1} \sum_{r_1 + \cdots + r_t=i+t-1} \binom{\frac{n}{t}-1}{r_1-1} \binom{\frac{n}{t}-1}{r_2-1} \cdots \binom{\frac{n}{t}-1}{r_t-1} \\
    =& q^t (q-1)^{i-1} \sum_{r_1 + \cdots + r_t=i+t-1} \binom{\frac{n}{t}-1}{\frac{n}{t}-r_1} \binom{\frac{n}{t}-1}{\frac{n}{t}-r_2} \cdots \binom{\frac{n}{t}-1}{\frac{n}{t}-r_t}.
    %=& q^t (q-1)^{i-1} \sum_{r_1 + \cdots + r_t=i+t-1} \binom{\frac{n}{t}-1}{\frac{n}{t}-r_1} \binom{\frac{n}{t}-1}{\frac{n}{t}-r_2} \cdots \binom{\frac{n}{t}-1}{\frac{n}{t}-r_t}.
\end{align*}
Since $\sum_{j=1}^t \frac{n}{t} - r_j = n - \left( i + t - 1 \right)$ and $\sum_{j=1}^t \frac{n}{t} - 1 = n -t$, using a generalized Vandermonde identity we have
\begin{align*}
    N(n,t,i) =& q^t (q-1)^{i-1} \binom{n-t}{n-i-t+1} \\
    =& q^t (q-1)^{i-1} \binom{n-t}{i-1}.\qedhere
\end{align*}
\end{proof}

We now proceed to the proof of Theorem~\ref{thm:nonasymptotic}.

\noindent \textit{Proof of Theorem~\ref{thm:nonasymptotic}:}
We proceed similarly to the method presented in \cite{schoeny2017codes}. Let $\cH_{q,t,n}$ be the following hypergraph:
\begin{equation*}
    \cH_{q,t,n}=\left( \Sigma_q^{n-t}, \left \{D_t\left(\bu \right):\bu\in\Sigma_q^n \right \} \right).
\end{equation*}

It is known \cite{kulkarni2013nonasymptotic} that under this setup $|\cM_t(n)| \leq \tau^{*}(\cH_{q,t,n})$ where $\tau^{*}(\cH_{q,t,n})$ is the solution to the following linear program:
\begin{align}
\tau^{*}\left(\mathcal{H}_{q, t, n}\right) =\min &\left\{\sum_{\bu' \in \Sigma_{q}^{n-t}} w(\bu')\right\} \label{eq:lpb1} \\
\text { s.t. } &\sum_{\bu' \in D_{t}(\bu)} w(\bu') \geq 1, \forall \bu \in \Sigma_{q}^{n} \label{eq:lpc1} \\
&w(\bu') \geq 0, \forall \bu' \in \Sigma_{q}^{n-t}. \label{eq:lpc2}
\end{align}

Let $w : \Sigma_q^{n-t} \to \mathbb{R}$ be defined such that $w(\bu')=\frac{1}{|D_t(\bu')|}, \forall \bu' \in \Sigma_{q}^{n-t}$. Clearly $w(\bu') \geq 0$ for any $\bu' \in \Sigma_q^{n-t}$. As a result of Claim~\ref{cl:dballdec}, we have
\begin{align*}
    \sum_{\bu' \in D_{t}(\bu)} w(\bu') = \sum_{\bu' \in D_{t}(\bu)} \frac{1}{|D_t(\bu')|} \geq \sum_{\bu' \in D_{t}(\bu)}\frac{1}{|D_t(\bu)|} = 1,
\end{align*}
so that the function $w$ satisfies both (\ref{eq:lpc1}) and (\ref{eq:lpc2}).

Then, according to (\ref{eq:lpb1})
\begin{equation*}
|\cM_t(n)| \leq \sum_{\bu' \in D_{t}(\bu)} \frac{1}{|D_t(\bu')|}.
\end{equation*}

For $1\le i\le n-t+1$, denote $N(n,t,i)$ as the size of the set $\{\bu\in \Sigma_q^n: |D_t(\bu)|=i\}$, where $N(n,t,i)=q^t(q-1)^{i-1}\binom{n-t}{i-1}$.

\begin{equation}
\begin{aligned}
\sum_{\bu' \in D_{t}(\bu)} w(\bu')&=\sum_{\bu' \in D_{t}(\bu)} \frac{1}{|D_t(\bu')|}\\
&=\sum_{i=1}^{n-2t+1}\frac{N(n-t,t,i)}{i}\\
&=q^t\sum_{i=1}^{n-2t+1} (q-1)^{i-1}\frac{{n-2t \choose i-1}}{i} \nonumber \\
&=q^t\sum_{i=1}^{n-2t+1} (q-1)^{i-1}\frac{(n-2t)!}{i!(n-2t-i+1)!} \nonumber \\
&=\frac{q^t}{(q-1)(n-2t+1)}\sum_{i=1}^{n-2t+1} (q-1)^{i}\frac{(n-2t+1)!}{i!(n-2t-i+1)!}\\
&=\frac{q^t}{(q-1)(n-2t+1)}\sum_{i=1}^{n-2t+1} (q-1)^{i}{n-2t+1 \choose i}\\
&\stackrel{(a)}=\frac{q^t}{(q-1)(n-2t+1)}(q^{n-2t+1}-1)\\
&=\frac{q^{n-t+1}-q^t}{(q-1)(n-2t+1)}.
\end{aligned}
\end{equation}
where (a) follows from the Binomial theorem. 
\qedsymbol{}

\end{appendices}

\end{document}